\newtheorem{thm}{Theorem}[section]
\newtheorem{lem}[thm]{Lemma}
\newtheorem{prop}[thm]{Proposition}
\theoremstyle{definition}
\newtheorem{defn}[thm]{Definition}
\newtheorem{ass}[thm]{Assumption}
\theoremstyle{remark}
\newtheorem{rem}[thm]{Remark}
\newtheorem{exa}[thm]{Example}
\numberwithin{equation}{section}
\newcommand{\abs}[1]{\left\vert#1\right\vert}
\newcommand{\set}[1]{\left\{#1\right\}}
\newcommand{\Real}{\mathbf R}
\newcommand{\Natural}{\mathbf N}
\newcommand{\such}{\, | \,}
\newcommand{\prob}{\mathbb{P}}
\newcommand{\qprob}{\mathbb{Q}}
\newcommand{\rprob}{\mathbb{R}}
\newcommand{\expec}{\mathbb{E}}
\newcommand{\expecp}{\expec_\prob}
\newcommand{\expecq}{\expec_\qprob}
\newcommand{\var}{\mathbb{V} \mathsf{ar}}
\newcommand{\cov}{\mathbb{C} \mathsf{ov}}
\newcommand{\ud}{\mathrm d}
\newcommand{\normtv}[1]{\left|#1\right|_{\mathsf{TV}}}
\newcommand{\expecqb}{\expec_{\qb}}
\newcommand{\expecqg}{\expec_{\qg}}
\newcommand{\expecqp}{\expec_{\qp}}
\newcommand{\cg}{C^{\diamond}}
\newcommand{\rg}{\rprob^{\diamond}}
\newcommand{\zg}{z^{\diamond}}
\newcommand{\ug}{u^{\diamond}}
\newcommand{\qg}{\qprob^{\diamond}}
\newcommand{\cb}{C^{\mathsf{r}}}
\newcommand{\db}{D^{\mathsf{r}}}
\newcommand{\rb}{\rprob^{\mathsf{r}}}
\newcommand{\zeb}{\zeta}
\newcommand{\qb}{\qprob_i^{\mathsf{r}}}
\newcommand{\cbm}{C_0^{m,\mathsf{r}}}
\newcommand{\dbm}{D_0^{m,\mathsf{r}}}
\newcommand{\zbm}{z_0^{m,\mathsf{r}}}
\newcommand{\qbm}{\qprob_0^{m,\mathsf{r}}}
\newcommand{\cbmk}{C_0^{m_k,\mathsf{r}}}
\newcommand{\dbmk}{D_0^{m_k,\mathsf{r}}}
\newcommand{\zbmk}{z_0^{m_k,\mathsf{r}}}
\newcommand{\qbmk}{\qprob^{m_k,\mathsf{r}}}
\newcommand{\cbom}{C_0^{m,\mathsf{r}}}
\newcommand{\cgm}{C^{m, \diamond}}
\newcommand{\dgm}{D^{m, \diamond}}
\newcommand{\rgm}{\rprob^{m, \diamond}}
\newcommand{\zgm}{z^{m, \diamond}}
\newcommand{\ugm}{u^{m, \diamond}}
\newcommand{\qgm}{\qprob^{m, \diamond}}
\newcommand{\cgmk}{C^{m_k, \diamond}}
\newcommand{\dgmk}{D^{m_k, \diamond}}
\newcommand{\zgmk}{z^{m_k, \diamond}}
\newcommand{\cgi}{C^{\infty, \diamond}}
\newcommand{\zgi}{z^{\infty, \diamond}}
\newcommand{\qgi}{\qprob^{\infty, \diamond}}
\newcommand{\cp}{C^{\ast}}
\newcommand{\qp}{\qprob^{\ast}}
\newcommand{\up}{u^{\ast}}
\newcommand{\PP}{\mathcal{P}}
\newcommand{\cpk}{C^{m, \ast}}
\newcommand{\cpm}{C^{m, \ast}}
\newcommand{\qpk}{\qprob^{m, \ast}}
\newcommand{\upk}{u^{m, \ast}}
\newcommand{\qpm}{\qprob^{m, \ast}}
\newcommand{\upm}{u^{m, \ast}}
\newcommand{\cpi}{C^{\infty, \ast}}
\newcommand{\qpi}{\qprob^{\infty, \ast}}
\newcommand{\qz}{\qprob(z)}
\newcommand{\expecqz}{\expec_{\qz}}
\newcommand{\DI}{\Delta^I}
\newcommand{\oprob}{\overline{\prob}}
\newcommand{\oexpec}{\overline{\expec}}
\newcommand{\tz}{\widetilde{z}}
\newcommand{\tze}{\widetilde{z}_0^{\infty}}
\newcommand{\hzi}{\widehat{z}_0^\infty}
\newcommand{\tD}{\widetilde{D}}
\newcommand{\hDi}{\widehat{D}^\infty_0}
\newcommand{\hCi}{\widehat{C}^\infty_0}
\newcommand{\tC}{\widetilde{C}}
\newcommand{\plim}{\Lb^0 \text{-} \lim}
\newcommand{\plimm}{\plim_{m \to \infty}}
\newcommand{\plimk}{\plim_{k \to \infty}}
\newcommand{\limn}{\lim_{n \to \infty}}
\newcommand{\limk}{\lim_{k \to \infty}}
\newcommand{\limm}{\lim_{m \to \infty}}
\newcommand{\pare}[1]{\left(#1\right)}
\newcommand{\bra}[1]{\left[#1\right]}
\newcommand{\dbra}[1]{[\kern-0.15em[ #1 ]\kern-0.15em]}
\newcommand{\dbraco}[1]{[\kern-0.15em[ #1 [\kern-0.15em[}
\newcommand{\dbraoc}[1]{]\kern-0.15em] #1 ]\kern-0.15em]}
\newcommand{\C}{\mathcal{C}}
\newcommand{\K}{\mathcal{K}}
\newcommand{\Lb}{\mathbb{L}}
\newcommand{\U}{\mathbb{U}}
\newcommand{\V}{\mathbb{V}}
\newcommand{\dfn}{\mathrel{\mathop:}=}
\newcommand{\lz}{\Lb^0}
\newcommand{\li}{\Lb^\infty}
\newcommand{\kin}{k \in \Natural}
\newcommand{\mina}{m \in \Natural}
\newcommand{\indic}{\mathbb{I}}
\newcommand{\iii}{i \in I}
\newcommand{\jii}{j \in I}
\newcommand{\sumi}{\sum_{\iii}}
\newcommand{\dis}{\ell}
\newcommand{\ent}{\mathcal{H}}
\newcommand{\lzi}{(\Lb^0)^I}
\begin{document}

\title[Equilibrium in risk-sharing games]{Equilibrium in risk-sharing games}%

\author{Michail Anthropelos}
\address{Michail Anthropelos, Department of Banking and Financial Management, University of Piraeus}
\email{anthropel@unipi.gr}

\author{Constantinos Kardaras}
\address{Constantinos Kardaras, Statistics Department, London School of Economics}
\email{k.kardaras@lse.ac.uk}

\thanks{M. Anthropelos acknowledges support from the Research Center of the University of Piraeus. C. Kardaras acknowledges support from the MC grant FP7-PEOPLE-2012-CIG, 334540.}%

\keywords{Nash equilibrium, risk sharing, heterogeneous beliefs, reporting of beliefs}
\date{\today}%
\begin{abstract}
The large majority of risk-sharing transactions involve few
agents, each of whom can heavily influence the structure and the
prices of securities. This paper proposes a game where agents'
strategic sets consist of all possible sharing securities and
pricing kernels that are consistent with Arrow-Debreu sharing
rules. First, it is shown that agents' best response problems have
unique solutions. The risk-sharing Nash equilibrium admits a
finite-dimensional characterisation and it is proved to exist for
arbitrary number of agents and be unique in the two-agent
game. In equilibrium, agents declare beliefs on future
random outcomes different than their actual probability
assessments, and the risk-sharing securities are endogenously
bounded, implying (among other things) loss of efficiency. In
addition, an analysis regarding extremely risk tolerant agents
indicates that they profit more from the Nash risk-sharing
equilibrium as compared to the Arrow-Debreu one.
\end{abstract}

\maketitle


\section*{Introduction}

The structure of securities that optimally allocate risky
positions under heterogeneous beliefs of agents has been a
subject of ongoing research. Starting from the seminal works of
\cite{Bor62}, \cite{Arr63}, \cite{BuhlJew79} and \cite{Buhl84}, the existence and characterisation of welfare risk sharing of random positions in a variety of models has been extensively studied---see, among others, \cite{BarElKar05}, \cite{JouiSchTou08}, \cite{Acc07}, \cite{FilSci08}. On the other hand,
discrepancies amongst agents regarding their assessments on the
probability of future random outcomes reinforce the existence of mutually
beneficial trading opportunities (see e.g.~\cite{Var85}, \cite{Var89},
\cite{BilChaGilTall00}). However, market imperfections---such as
asymmetric information, transaction costs and oligopolies---spur agents to act strategically and prevent markets from reaching maximum efficiency. In the financial risk-sharing
literature, the impact of asymmetric or private information has been addressed under both static and dynamic models (see, among others, \cite{NacNoe94}, \cite{MarRah00},
\cite{Par04}, \cite{Axe07}, \cite{Wil11}). The
importance of frictions like transaction costs has be highlighted
in \cite{AllGal91}; see also \cite{CarRosWer12}.

The present work aims to contribute to the risk-sharing literature by focusing on how over-the-counter (OTC) transactions with a small number of agents motivate strategic behaviour. The vast majority of real-world sharing instances involves only a few participants, each of whom may influence the way heterogeneous risks and beliefs are going to be allocated. (The seminal papers \cite{Kyl89} and \cite{Vay99} highlight such transactions.) As an example, two financial institutions with possibly different beliefs, and in possession of portfolios with random future payoffs, may negotiate and design innovative asset-backed securities that mutually share their defaultable assets. Broader discussion on risk-sharing innovative securities is given in the classic reference \cite{AllGal94} and in \cite{Tuf03}; a list of widely used such securities is provided in \cite{Finn92}.

As has been extensively pointed out in the literature (see, for example, \cite{Var89} and \cite{SchXio03}), it is reasonable, and perhaps even necessary, to assume that agents have heterogeneous beliefs, which we identify with subjective probability measures on the considered state space.  In fact, differences in subjective beliefs do not necessarily stem from asymmetric information; agents usually apply different tools or models for the analysis and interpretation of common sets of information.

Formally, a risk-sharing transaction consists of security payoffs and their prices, and since only few institutions (typically, two) are involved, it is natural to assume that no social planner for the transaction exists, and that the equilibrium valuation and payoffs will result as the outcome of a symmetric game played among the participating institutions. Since institutions' portfolios are (at least, approximately) known, the main ingredient of risk-sharing transactions leaving room for strategic behaviour is the beliefs that each institution reports for the sharing. We propose a novel way of modelling such strategic actions where the agents' strategic set consists of the beliefs that each one chooses to declare (as opposed to their actual one) aiming to maximise individual utility, and the induced game leads to an equilibrium sharing. Our main insights are summarised below.

\subsection*{Main contributions}
The payoff and valuation of the risk-sharing securities are
endogenously derived as an outcome of agents' strategic behaviour,
under constant absolute risk-aversion (CARA) preferences. To the
best of our knowledge, this work is the first instance that models the way
agents choose the beliefs on future uncertain events that they are
going to declare to their counterparties, and studies whether such
strategic behaviour results in equilibrium. Our results
demonstrate how the game leads to risk-sharing
inefficiency and security mispricing, both of which are
quantitatively characterised in analytic forms. More
importantly, it is shown that equilibrium securities have
\textit{endogenous limited liability}, a feature that, while
usually suboptimal, is in fact observed in practice.

Although the
agents' set of strategic choices is infinite-dimensional, one of
our main contributions is to show that Nash equilibrium admits a
finite-dimensional characterisation, with the dimensionality being one less than the number of participating agents. Not only does our characterisation provide a concrete algorithm for calculating the equilibrium transaction, it also allows to
prove existence of Nash equilibrium for arbitrary number
of players. In the important case of two participating agents, we
even show that Nash equilibrium is unique. It has to
be pointed out that the aforementioned results are obtained under
complete generality on the probability space and the involved
random payoffs---no extra assumption except from CARA preferences is imposed. While certain qualitative analysis could be potentially carried out without the latter assumption on the entropic form of agent utilities, the advantage of CARA preferences utilised in the present paper is that they also allow for substantial quantitative analysis, as workable expressions are obtained for Nash equilibrium.

Our notion of Nash risk-sharing equilibrium highlights the
importance of agents' risk tolerance level. More precisely, one of
the main findings of this work is that agents with
sufficiently low risk aversion will prefer the
risk-sharing game rather than the outcome of an Arrow-Debreu equilibrium that would have resulted from absence of strategic behaviour. Interestingly, the result is valid
irrespective of their actual risky position or their subjective
beliefs. It follows that even risk-averse agents, as long as their
risk-aversion is sufficiently low, will prefer risk-sharing
markets that are thin (i.e., where participating agents are few and have the power to influence the transaction), resulting in aggregate loss of risk-sharing welfare.

\subsection*{Discussion}

Our model is introduced in Section \ref{sec: A-D equilibrium}, and
consists of a two-period financial economy with uncertainty,
containing possibly infinite states of the world. Such infinite-dimensionality is essential in our framework, since in general the risks that agents encounter do not have a-priori bounds, and we do not wish to enforce any restrictive assumption on the shape of the probability distribution or the support of agents' positions. Let us also note that, even if the analysis was carried out in a simpler set-up of a finite state space, there would not be any significant simplification in the mathematical treatment.

In the economy we consider a finite number of agents, each
of whom has subjective beliefs (probability measure) about the events at the time of uncertainty resolution. We also allow agents to be endowed with a (cumulative, up to the point of uncertainty resolution) random endowment.

Agents seek to increase their expected utilities through trading securities that allocate the discrepancies of their beliefs and risky exposures in an optimal way. The possible disagreement on agents' beliefs is assumed on the whole probability space, and not only on the laws of the shared-to-be risky positions. Such potential  disagreement is important: it alone can give rise to mutually beneficial trading opportunities, even if agents have no risky endowments to share, by actually designing securities with payoffs written on the events where probability assessments are different. 

Each sharing rule consists of the security payoff that each agent is going to obtain and a valuation measure under which all imaginable securities are priced. The sharing rules that efficiently allocate any submitted discrepancy of beliefs and risky exposures are the ones stemming from Arrow-Debreu equilibrium. (Under CARA preferences, the optimal sharing rules have been extensively studied---see, for instance, \cite{Bor62}, \cite{BuhlJew79} and \cite{BarElKar05}.) In principle, participating agents would opt for the highest possible aggregate benefit from the risk-sharing transaction, as this would increase their chance for personal gain. However, in the absence of a social planner that could potentially impose a truth-telling mechanism, it is reasonable to assume that agents do not negotiate the rules that will allocate the submitted endowments and beliefs. In fact, we assume that agents adapt the specific sharing rules that are consistent with the ones resulting from Arrow-Debreu equilibrium, treating reported beliefs as actual ones, since we regard these sharing rules to be the most natural and universally regarded as efficient.

Agreement on the structure of risk-sharing securities is also consistent with what is observed in many OTC transactions involving security design, where the contracts signed by institutions are standardised and adjusted according to required inputs (in this case, the agents' reported beliefs). Such pre-agreement on sharing rules reduces negotiation time, hence the related transaction costs. Examples are asset-backed securities, whose payoffs are backed by issuers' random incomes, traded among banks and investors in a standardised form, as well as credit derivatives, where portfolios of defaultable assets are allocated among financial institutions and investors.

Combinations of strategic and competitive stages are widely used in the literature of financial
innovation and risk-sharing, under a variety of different guises. The majority of this literature distinguishes participants among designers (or issuers) of securities and investors who trade them. In \cite{DuffJac89}, a security-design game is played among exchanges, each aiming to maximise internal transaction volume; while security design throughout exchanges is the outcome of non-competitive equilibrium, investors trade securities in a competitive manner. Similarly, in \cite{Bis98}, Nash equilibrium determines not only the designed securities among financial intermediaries, but also the bid-ask spread that price-taking investors have to face in the second (perfect competition) stage of market equilibrium. In \cite{CarRosWer12}, it is entrepreneurs who strategically design securities that investors with non-securitised hedging needs competitively trade. In \cite{RahZig09}, the role of security-designers is played by arbitrageurs who issue innovated securities in segmented markets. Mixture of strategic and competitive stages has also been used in models with asymmetric information. For instance, in \cite{Brai05} a two-stage equilibrium game is used to model security design among agents with private information regarding their effort. In a first stage, agents strategically issue novel financial securities; in the second stage, equilibrium on the issued securities is formed competitively.

Our framework models oligopolistic OTC security design, where participants are \textit{not} distinguished regarding their information or ability to influence market equilibrium. Agents mutually agree to apply Arrow-Debreu sharing rules, since these optimally allocate whatever is submitted for sharing, and also strategically choose the inputs of the sharing-rules (their beliefs, in particular).

Given the agreed-upon rules,
agents propose accordingly consistent securities and valuation
measures, aiming to maximise \textit{their own} expected utility. As
explicitly explained in the text, proposing risk-sharing
securities and a valuation kernel is in fact equivalent to agents
reporting beliefs to be considered for sharing. Knowledge of the probability assessments of the counterparties may result in a readjustment of the probability measure an agent is going to report for the transaction. In effect, agents form a game by responding to other agents' submitted
probability measures; the fixed point of this game (if it exists)
is called \textit{Nash risk-sharing equilibrium}.

The first step of analysing Nash risk-sharing equilibria is to
address the well-posedness of an agent's best response problem,
which is the purpose of Section \ref{sec: best_response}. Agents
have motive to exploit other agents' reported beliefs and hedging
needs and drive the sharing transaction as to maximise their own
utility. Each agent's strategic choice set consists of all
possible probability measures (equivalent to a baseline measure),
and the optimal one is called \textit{best probability response}.
Although this is a highly non-trivial infinite-dimensional
maximisation problem, we use a bare-hands approach to establish that it admits a unique solution. It is shown that the beliefs that
an agent declares coincide with the actual ones only in the
special case where the agent's position cannot be improved by any
transaction with other agents. By resorting to examples, one may gain more intuition on how future risk appears under the lens of agents' reported beliefs. Consider, for instance, two financial institutions adapting distinct models for estimating the likelihood of the involved risks. The sharing contract designed by the institutions will result from individual estimation of the joint distribution of the shared-to-be risky portfolios. According to the best probability response procedure, each institution tends to use less favourable assessment for its own portfolio than the one based on its actual beliefs, and understates the downside risk of its counterparty's portfolio. Example \ref{ex: best response} contains an illustration of such a case.

An important consequence of applying the best
probability response is that the corresponding security that the
agent wishes to acquire has bounded liability. If only one agent applies the proposed strategic behaviour, the received security payoff is bounded below (but not necessarily bounded above). In fact, the arguments and results of the best response problem receive extra attention and discussion in the paper, since
they demonstrate in particular the value of the proposed strategic
behaviour in terms of utility increase. This
situation applies to markets where one large
institution trades with a number of small agents, each of
whom has negligible market power.

A Nash-type game occurs when all agents apply the
best probability response strategy. In Section \ref{sec: Nash}, we
characterise Nash equilibrium as the solution of a certain
finite-dimensional problem. Based on this characterisation, we establish existence of Nash risk-sharing equilibrium for an arbitrary (finite)
number of agents. In the special case of two-agent games, the Nash
equilibrium is shown to be unique. The finite-dimensional
characterisation of Nash equilibrium also provides an
algorithm that can be used to approximate the Nash equilibrium
transaction by standard numerical procedures, such as Monte Carlo simulation.

Having Nash equilibrium characterised, we are able to further
perform a joint qualitative and quantitative analysis. Not only
do we verify the expected fact that, in any non-trivial case, Nash
risk-sharing securities are different from the Arrow-Debreu ones,
but we also provide analytic formulas for their shapes. Since the securities that correspond to the best probability response are bounded from below, the application of such strategy from all the agents yields that the Nash risk-sharing market-clearing securities are also bounded from above. This comes in stark contrast to Arrow-Debreu equilibrium, and
implies in particular an important loss of efficiency. We measure
the risk-sharing inefficiency that is caused by the game via the
difference between the aggregate monetary utilities at
Arrow-Debreu and Nash equilibria, and provide an analytic
expression for it. (Note that inefficient allocation of risk in symmetric-information thin market models may also occur when securities are exogenously given---see e.g.~\cite{RosWer15}. When securities are endogenously designed, \cite{CarRosWer12} highlights that imperfect competition among issuers results in risk-sharing inefficiency, even if securities is traded among perfectly competitive investors.)

One may wonder whether the revealed agents' subjective beliefs in Nash equilibrium are far from their actual subjective probability measures, which would be unappealing from a modelling viewpoint. Extreme departures from actual beliefs are endogenously
excluded in our model, as the distance of the truth from reported beliefs in Nash equilibrium admits a-priori bounds. Even though agents are free to choose any probability measure that supposedly represents their beliefs in a risk-sharing transaction, and they do indeed end up choosing probability measures different than their actual ones, this departure cannot be arbitrarily large if the market is to reach equilibrium.

Turning our attention to Nash-equilibrium valuation, we show
that the pricing probability measure can be written as a certain convex combination of the individual agents' marginal indifference valuation measures. The weights of this convex combination depend on agents' relative risk tolerance coefficients, and, as it turns out, the Nash-equilibrium valuation measure is closer to the marginal valuation measure of the more
risk-averse agents. This fact highlights the importance of risk tolerance coefficients in assessing the gain or loss of utility for individual agents in Nash risk-sharing equilibrium; in fact, it implies that more risk tolerant agents tend to get better cash compensation as a result of the Nash game than what they would get in Arrow-Debreu equilibrium.

Inspired by the involvement of the risk tolerance coefficients in
the agents' utility gain or loss, in Section \ref{sec:
extrema} we focus on induced Arrow-Debreu and Nash equilibria of two-agent games, when one of the agents' preferences approach risk neutrality.
We first establish that both equilibria converge to well-defined
limits. Notably, it is shown that an extremely risk tolerant agent
drives the market to the same equilibrium regardless of
whether the other agent acts strategically or plainly submits true subjective beliefs. In other words, extremely risk tolerant agents tend to dominate the risk-sharing transaction. The study of limiting equilibria indicates that, although there is loss of aggregate utility when agents act strategically, there is always utility gain in the Nash transaction as compared to Arrow-Debreu equilibrium for the extremely risk-tolerant agent, regardless of the risk tolerance level and subjective beliefs of the other agent. Extremely risk-tolerant agents are willing to undertake more risk
in exchange of better cash compensation; under the risk-sharing
game, they respond to the risk-averse agent's hedging needs and beliefs by
driving the market to higher price for the security they short.
This implies that agents with sufficiently high risk tolerance---although still not risk-neutral---will prefer thin markets. The case where both acting agents uniformly approach
risk-neutrality is also treated, where it is shown that the
limiting Nash equilibrium sharing securities equal half of the
limiting Arrow-Debreu equilibrium securities, hinting towards the
fact that Nash risk-sharing equilibrium results in loss of trading
volume.

For convenience of reading, all the proofs of the paper are
placed in Appendix \ref{sec: appe}.

\section{Optimal Sharing of Risk}\label{sec: A-D equilibrium}

\subsection{Notation}

The symbols ``$\Natural$'' and ``$\Real$'' will be used to denote the set of all natural and real numbers, respectively. As will be evident subsequently in the paper, we have chosen to use the symbol ``$\rprob$'' to denote (reported, or revealed) probabilities.

In all that follows, random variables are defined on a standard
probability space $(\Omega, \, \mathcal{F}, \, \prob)$. We stress that \emph{no} finiteness restriction is enforced on the state space $\Omega$. We use $\PP$ for the class of all probabilities that are equivalent to the baseline probability $\prob$. For $\qprob \in \PP$, we use ``$\expecq$'' to denote expectation under $\qprob$. 
The space $\Lb^0$ consists of all (equivalence classes, modulo
almost sure equality) finitely-valued random variables endowed
with the topology of convergence in probability. This
topology does not depend on the representative probability from
$\PP$, and $\Lb^0$ may be infinite-dimensional. For $\qprob
\in \PP$, $\Lb^1 (\qprob)$ consists of all $X \in \lz$ with
$\expecq \bra{|X|} < \infty$. We use $\li$ for the subset of
$\lz$ consisting of essentially bounded random variables.

Whenever $\qprob_1 \in \PP$ and $\qprob_2 \in \PP$, $\ud \qprob_2 / \ud \qprob_1$ denotes the (strictly
positive) density of $\qprob_2$ with respect to $\qprob_1$. The \emph{relative entropy} of $\qprob_2 \in
\PP$ with respect to $\qprob_1 \in \PP$ is defined via
\[
\ent(\qprob_2 \such \qprob_1 ) \dfn \expec_{\qprob_1}\left[
\frac{\ud \qprob_2}{\ud \qprob_1} \log \left( \frac{\ud
\qprob_2}{\ud \qprob_1}\right) \right] = \expec_{\qprob_2}\left[
\log \left( \frac{\ud \qprob_2}{\ud \qprob_1}\right) \right] \in
[0, \infty].
\]

For $X \in \lz$ and $Y \in \lz$, we write $X
\sim Y$ if and only if there exists $c \in \Real$ such that $Y = X + c$. In particular, we shall use this notion of equivalence to ease notation on probability densities: for
$\qprob_1 \in \PP$ and $\qprob_2 \in \PP$, we shall write $\log
\pare{\ud \qprob_2 / \ud \qprob_1} \sim \Lambda$ to mean that
$\exp(\Lambda) \in \Lb^1 (\qprob_1)$ and $\ud \qprob_2 /  \ud
\qprob_1 = (\expec_{\qprob_1} \bra{\exp(\Lambda)})^{-1}
\exp(\Lambda)$.

\subsection{Agents and preferences}

We consider a market with a single future period, where all
uncertainty is resolved. In this market, there are $n + 1$
economic agents, where $n \in \Natural = \set{1, 2, \ldots}$; for
concreteness, define the index set $I = \set{0, \ldots, n}$. Agents derive utility only from the consumption of a
num\'eraire in the future, and all considered security payoffs are
expressed in units of this num\'eraire. In particular, future deterministic amounts have the same present value for
the agents. The preference structure of agent $\iii$ over future
random outcomes is numerically represented via the concave
\emph{exponential} utility functional
\begin{equation}\label{eq: utility functional}
\Lb^0 \ni X \mapsto \U_i (X) \dfn - \delta_i \log \expec_{\prob_i} \bra{\exp \pare{-  X /
\delta_i}} \in [- \infty, \infty),
\end{equation}
where $\delta_i \in (0, \infty)$ is the agent's \emph{risk tolerance} and $\prob_i \in \PP$ represents the agent's \emph{subjective beliefs}.
For any $X \in \lz$, agent $\iii$ is indifferent between the cash amount $\U_i (X)$ and the corresponding risky position $X$; in other words, $\U_i (X)$ is the \emph{certainty equivalent} of $X \in \lz$ for agent $\iii$. Note that the functional $- \U_i$ is an \emph{entropic risk measure} in the terminology of convex risk measure literature---see, amongst others, \cite[Chapter 4]{FollSch04}.

Define the aggregate risk tolerance $\delta \dfn
\sumi \delta_i$, as well as the relative risk tolerance $\lambda_i \dfn \delta_i / \delta$ for all $\iii$. Note that $\sumi \lambda_i = 1$. Finally, set $\delta_{-i} \dfn \delta - \delta_i$ and $\lambda_{-i} \dfn 1 - \lambda_i$, for all $\iii$.

\subsection{Subjective probabilities and endowments} \label{subsec: endow}

Preference structures that are numerically represented via \eqref{eq: utility functional} are rich enough to include the possibility of already existing portfolios of random positions for acting agents. To wit, suppose that $\widetilde{\prob}_i\in\PP$ are the actual subjective beliefs of agent $\iii$, who also carries a risky future payoff in units of the num\'eraire. Following standard terminology, we call this cumulative (up to the point of resolution
of uncertainty) payoff \emph{random endowment}, and denote it by $E_i \in \Lb^0$. In this set-up, adding on top of $E_i$ a payoff $X \in \Lb^0$ for agent $\iii$ results in numerical utility equal to $\widetilde{\U}_i(X) \dfn - \delta_i \log \expec_{\widetilde{\prob}_i} \bra{\exp \pare{-  (X+E_i) / \delta_i}}$. Assume that $\widetilde{\U}_i(0) > - \infty$, i.e., that $\exp \pare{- E_i / \delta_i} \in \Lb^1 \big( \widetilde{\prob}_i \big)$. Defining $\prob_i \in \PP$ via $\log \big( \ud \prob_i  / \ud  \widetilde{\prob}_i \big) \sim - E_i / \delta_i$ and $\U_i$ via \eqref{eq: utility functional}, $\U_i(X) = \widetilde{\U}_i(X) - \widetilde{\U}_i(0)$ holds for all $X \in \lz$. Hence, hereafter, the probability $\prob_i$ is understood to incorporate any possible random endowment of agent $\iii$, and utility is measured in \emph{relative} terms, as difference from the baseline level $\widetilde{\U}_i(0)$.

Taking the above discussion into account, we stress that agents are
completely characterised by their risk tolerance level and (endowment-modified) subjective beliefs, i.e., by the collection of pairs $(\delta_i,
\prob_i)_{\iii}$. In other aspects, and unless otherwise noted, agents
are considered symmetric (regarding information, bargaining power,
cost of risk-sharing participation, etc).

\subsection{Geometric-mean probability}  \label{subsec:geo_mean_prob}
We introduce a method that produces a geometric mean of probabilities which will play central role in our discussion. Fix $(\rprob_i)_{\iii} \in \PP^I$. In view of H\"older's inequality, $\prod_{\iii} \pare{\ud \rprob_i / \ud \prob}^{\lambda_i} \in \Lb^1(\prob)$ holds. Therefore, one may define $\qprob \in \PP$ via $\log \pare{\ud \qprob / \ud \prob} \sim \sum_{\iii} \lambda_i \log \pare{\ud \rprob_i / \ud \prob}$. Since $\sum_{\iii} \lambda_i \log \pare{\ud \rprob_i / \ud \qprob} \sim 0$, one is allowed to formally write
\begin{equation} \label{eq:geom-mean-prob}
\log \ud \qprob  \sim \sum_{\iii} \lambda_i \log \ud \rprob_i.
\end{equation}
The fact that $\ud \rprob_i / \ud \qprob \in \Lb^1(\qprob)$ implies $\log_+ \pare{\ud \rprob_i / \ud \qprob} \in \Lb^1(\qprob)$,
and Jensen's inequality gives $\expecq \bra{\log \pare{\ud \rprob_i / \ud \qprob}} \leq 0$, for all $\iii$. Note that \eqref{eq:geom-mean-prob} implies that the existence of $c \in \Real$ such that $\sum_{\iii} \lambda_i \log \pare{\ud \rprob_i / \ud \qprob} = c$ holds; therefore, one actually has $\expecq \bra{\log \pare{\ud \rprob_i / \ud \qprob}} \in (- \infty, 0]$, for all
$\iii$. In particular, $\log \pare{\ud \rprob_i / \ud \qprob} \in
\Lb^1(\qprob)$ holds for all $\iii$, and
\[
\ent(\qprob \such \rprob_i) = - \expec_{\qprob} \bra{\log \pare{\ud \rprob_i / \ud \qprob}} < \infty, \quad \forall \iii.
\]

\subsection{Securities and valuation} 

Discrepancies amongst agents' preferences provide incentive to design securities, the trading of which could be mutually beneficial in terms of risk reduction. In principle, the ability to design and trade securities in any desirable way essentially leads to a \emph{complete} market. In such a market, transactions amongst agents are characterised by a valuation measure (that assigns prices to all imaginable securities), and a collection of the
securities that will actually be traded. Since all
future payoffs are measured under the same num\'{e}raire,
(no-arbitrage) valuation corresponds to taking expectations with
respect to probabilities in $\PP$. Given a valuation measure,
agents agree in a collection $(C_i)_{\iii} \in \lzi$ of zero-value
securities, satisfying the market-clearing condition $\sum_{\iii}
C_i = 0$. The security that agent $\iii$ takes a long position as part of the transaction is $C_i$. 

As mentioned in the introductory section, our model could find applications in OTC markets. For instance, the design of asset-backed securities involves only a few number of financial institutions; in this case, $\prob_i$ stands for the subjective beliefs of each institution $\iii$ and, in view of the discussion of \S \ref{subsec: endow}, further incorporates any existing portfolios that back the security payoffs. In order to share their risky positions, the institutions agree on prices of future random payoffs and on the securities they are going to exchange. Other examples are the market of innovated credit derivatives or the market of asset swaps that involve exchange of random payoff and a fixed payment.

\subsection{Arrow-Debreu equilibrium} \label{subsec: AD}

In the absence of any kind of strategic behaviour in designing securities, the agreed-upon transaction amongst agents will actually form an Arrow-Debreu equilibrium. The valuation measure will determine both trading and indifference prices, and securities will be constructed in a way that maximise each agent's respective utility. 
\begin{defn}
$\pare{\qp, \pare{\cp_i}_{\iii}} \in \PP \times \lzi$ will be called an
\textsl{Arrow-Debreu equilibrium} if:
\begin{enumerate}
    \item $\sumi \cp_i =
0$, as well as $\cp_i \in \Lb^1(\qp)$  and $\expec_{\qp} \bra{\cp_i} =
0$, for all $\iii$, and
    \item for all $C \in \Lb^1 (\qp)$ with $\expecqp \bra{C} \leq 0$, $\U_i (C) \leq \U_i (\cp_i)$ holds for all $\iii$.
\end{enumerate}
\end{defn}

Under risk preferences modelled by \eqref{eq: utility functional}, a unique Arrow-Debreu equilibrium may be explicitly obtained. In other guises, Theorem \ref{thm: AD} that follows has appeared in many works---see for
instance \cite{Bor62}, \cite{BuhlJew79} and \cite{Buhl84}. Its proof is based on standard arguments; however, for reasons of completeness, we provide a short argument in \S \ref{subsec: proof_of_AD}.

\begin{thm} \label{thm: AD}
In the above setting, there exists a unique Arrow-Debreu equilibrium $\pare{\qp, \pare{\cp_i}_{\iii}}$. In fact, the valuation measure $\qp \in \PP$ is such that
\begin{equation} \label{eq: qp}
\log \ud \qp  \sim \sumi \lambda_i \log \ud \prob_i,
\end{equation}
and the equilibrium market-clearing securities $\pare{\cp_i}_{\iii} \in (\lz)^I$ are given by
\begin{equation} \label{eq: opt_contr}
\cp_i \dfn \delta_i \log (\ud \prob_i / \ud \qp) + \delta_i \ent (\qp \such \prob_i ), \quad \forall \iii,
\end{equation}
where the fact that $\ent (\qp \such \prob_i ) < \infty$ holds for all $\iii$ follows from \S \ref{subsec:geo_mean_prob}.
\end{thm}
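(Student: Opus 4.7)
The plan is to exploit the Donsker--Varadhan (Gibbs) variational identity for relative entropy, which in the present setting reads: for any $\qprob\in\PP$ with $\ent(\qprob\such\prob_i)<\infty$ and any $X\in\Lb^1(\qprob)$,
\[
\U_i(X) \;\leq\; \expecq\bra{X} + \delta_i \ent(\qprob\such\prob_i),
\]
with equality if and only if $X \sim \delta_i \log\pare{\ud\prob_i/\ud\qprob}$. This single inequality will deliver, simultaneously, both the unique optimal security for each agent given a valuation measure and the market-clearing condition pinning down the equilibrium valuation.

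First, I would fix any valuation $\qprob\in\PP$ with $\ent(\qprob\such\prob_i)<\infty$ for every $\iii$ and solve agent $i$'s problem. For any feasible $C\in\Lb^1(\qprob)$ with $\expecq\bra{C}\leq 0$, the inequality yields $\U_i(C) \leq \delta_i\ent(\qprob\such\prob_i)$, and equality forces $C = \delta_i \log\pare{\ud\prob_i/\ud\qprob} + c$ for some constant $c$, with the budget $\expecq\bra{C}=0$ pinning down $c = \delta_i \ent(\qprob\such\prob_i)$. This produces a unique individual optimiser $C_i^{\qprob} \dfn \delta_i \log\pare{\ud\prob_i/\ud\qprob} + \delta_i \ent(\qprob\such\prob_i)$. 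Imposing market-clearing $\sumi C_i^{\qprob} = 0$ and dividing by $\delta$ yields
\[
\sumi \lambda_i \log\pare{\ud\prob_i/\ud\qprob} \;=\; -\sumi \lambda_i \ent(\qprob\such\prob_i),
\]
which says exactly that the left-hand side is $\qprob$-a.s.\ constant, equivalently $\log \ud\qprob \sim \sumi \lambda_i \log \ud\prob_i$. So the only candidate for the equilibrium valuation is the geometric-mean probability $\qp$ constructed in \S \ref{subsec:geo_mean_prob}.

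To conclude existence, I would then verify directly that $\pare{\qp, \pare{\cp_i}_{\iii}}$ defined by \eqref{eq: qp}--\eqref{eq: opt_contr} satisfies the equilibrium definition: the integrability $\cp_i\in\Lb^1(\qp)$ and the finiteness $\ent(\qp\such\prob_i)<\infty$ are both supplied by \S \ref{subsec:geo_mean_prob}; the budget $\expecqp\bra{\cp_i}=0$ holds by construction using $\expecqp\bra{\log\pare{\ud\prob_i/\ud\qp}}=-\ent(\qp\such\prob_i)$; market-clearing follows by summing the $\cp_i$ and invoking the constancy of $\sumi \lambda_i \log\pare{\ud\prob_i/\ud\qp}$; and individual optimality is precisely the equality case of the variational inequality. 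Uniqueness is then automatic, since any equilibrium valuation must, by the first step, yield individual optimisers of the stated form, and the second step then forces it to coincide with $\qp$. The main technical point requiring care is the tightness of the variational inequality---that equality forces $X$ to be $\delta_i \log\pare{\ud\prob_i/\ud\qprob}$ up to an additive constant---which rests on strict convexity of $y\mapsto y\log y$; the potentially infinite-dimensional state space poses no difficulty, because the variational bound is pointwise in nature and \S \ref{subsec:geo_mean_prob} has already ensured that all relevant integrability and entropy-finiteness conditions are in force.
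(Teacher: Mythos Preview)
Your approach is correct and takes a genuinely different route from the paper's. The paper establishes necessity (uniqueness) by a perturbation/first-order argument: given an equilibrium $(\qp,(\cp_i)_{\iii})$, it notes $\U_i(\cp_i)\geq 0$ so that $\exp(-\cp_i/\delta_i)\in\Lb^1(\prob_i)$, then differentiates $\epsilon\mapsto\U_i(\cp_i+\epsilon X)$ at $\epsilon=0$ for bounded $X$ with $\expecqp[X]=0$ to obtain $\expec_{\prob_i}[\exp(-\cp_i/\delta_i)X]=0$, forcing $\cp_i\sim\delta_i\log(\ud\prob_i/\ud\qp)$. Sufficiency (existence) is handled via the direct computation in Remark~\ref{rem:pareto}. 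Your route replaces the first-order computation by the Gibbs/Donsker--Varadhan variational inequality together with its equality case; this is arguably cleaner, since one inequality simultaneously yields the individual optimum, its uniqueness, and the market-clearing identification of $\qp$.

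There is one point in your uniqueness step that deserves explicit care. Your first paragraph fixes a valuation $\qprob$ \emph{with} $\ent(\qprob\such\prob_i)<\infty$ for all $\iii$, and your last paragraph invokes that step for ``any equilibrium valuation''. But the definition of Arrow--Debreu equilibrium does not presuppose finite entropy, so you must argue it. The quickest fix within your framework is the conjugate direction of the same duality: for any $\qprob\in\PP$,
\[
\sup_{C\in\Lb^1(\qprob),\ \expecq[C]\leq 0}\U_i(C)\;=\;\delta_i\,\ent(\qprob\such\prob_i),
\]
including the case where both sides equal $+\infty$ (restrict first to $C\in\li$ and use cash-additivity of $\U_i$ to reduce to the standard Fenchel identity $\ent(\qprob\such\prob_i)=\sup_{Y\in\li}\{\expecq[Y]-\log\expec_{\prob_i}[e^{Y}]\}$). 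At any equilibrium the supremum is attained by $\cp_i$ and $\U_i(\cp_i)<\infty$, hence $\ent(\qprob\such\prob_i)<\infty$, and your equality-case argument then applies verbatim. Alternatively, one can borrow the paper's first-order step just for this purpose; either way the gap is minor and easily closed.
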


The securities that agents obtain at Arrow-Debreu equilibrium described in \eqref{eq: opt_contr} provide  higher payoff on events where their individual subjective probabilities are higher than the ``geometric mean'' probability $\qp$ of \eqref{eq: qp}. In other words, discrepancies in beliefs result in allocations where agents receive higher payoff on their corresponding relatively more likely events.

Note also that the securities traded at Arrow-Debreu equilibrium have an interesting decomposition. Since $\U_i(\delta_i \log (\ud \prob_i / \ud \qp)) = - \delta_i \log \expec_{\prob_i} \bra{\ud \qp / \ud \prob_i} = 0 = \U_i(0)$, agent $\iii$ is indifferent between no trading and the first ``random'' part $\delta_i \log (\ud \prob_i / \ud \qp)$ of the security $\cp_i$. The second ``cash'' part $\delta_i \ent (\qp \such \prob_i )$ of $\cp_i$ is always nonnegative, and represents the monetary gain of agent $\iii$ resulting from the Arrow-Debreu transaction. After this transaction, the position of agent $\iii$ has certainty equivalent
\begin{equation} \label{eq:opt_util_price_agent_AD}
\up_i \dfn \U_i \pare{\cp_i}= \delta_i \ent (\qp \such \prob_i ), \quad \forall \iii.
\end{equation}
The aggregate agents' monetary value resulting from the Arrow-Debreu transaction equals
\begin{equation} \label{eq:opt_util_aggr_AD}
\up \dfn \sumi \up_i = \sumi \delta_i \ent (\qp \such \prob_i ).
\end{equation}

\begin{rem}
In the setting and notation of \S \ref{subsec: endow}, let $(E_i)_{\iii}$ be the collection of agents' random endowments. Furthermore, suppose that agents share common subjective beliefs; for concreteness, assume that $\widetilde{\prob}_i = \prob$, for all $i \in I$. In this case, and setting $E\dfn \sumi E_i$, the equilibrium valuation measure of \eqref{eq: qp} satisfies $\log \pare{\ud \qp / \ud \prob} \sim -E/\delta$ and equilibrium securities of \eqref{eq: opt_contr} are given by $\cp_i = \lambda_i E - E_i - \expecqp \bra{\lambda_i E - E_i}$, for all $\iii$. In particular, note the well-known fact that the payoff of each shared security is a linear combination of the agents' random endowments.
\end{rem}

\begin{rem} \label{rem:pareto}
Since $\cp_i / \delta_i \sim - \log(\ud \qp / \ud \prob_i)$, it is straightforward to compute
\begin{equation} \label{eq:relative_util}
\U_i (C_i) - \U_i (\cp_i) = - \delta_i \log \expecqp \bra{\exp \pare{ - \frac{C_i - \cp_i}{\delta_i}}}, \quad \forall C_i \in \lz, \quad \forall \iii.
\end{equation}
In particular, an application of Jensen's inequality gives $\U_i
(C_i) - \U_i (\cp_i) \leq \expecqp \bra{C_i - \cp_i} =
\expecqp \bra{C_i}$ for $C_i \in \Lb^1(\qp)$, with equality if and
only if $C_i \sim \cp_i$. The last inequality shows that $\cp_i$
is indeed the optimally-designed security for agent $\iii$ under
the valuation measure $\qp$. Furthermore, for any collection
$(C_i)_{\iii}$ with $\sum_{\iii} C_i = 0$ and $C_i \in \Lb^1(\qp)$
for all $\iii$, it follows that $\sum_{\iii} \U_i (C_i) \leq
\sum_{\iii} \U_i (\cp_i) = \up$. A standard argument using
the monotone convergence theorem extends the previous inequality
to
\[
\sum_{\iii} \U_i (C_i) \leq \sum_{\iii} \U_i (\cp_i), \quad \forall (C_i)_{\iii} \in \lzi \text{ with } \sum_{\iii} C_i = 0,
\]
with equality if and only if $C_i \sim \cp_i$ for all $\iii$.
Therefore, $(\cp_i)_{\iii}$ is a maximiser of the functional
$\sum_{\iii} \U_i (C_i)$ over all $(C_i)_{\iii} \in (\lz)^I$ with
$\sum_{\iii} C_i = 0$. In fact, the collection of all such
maximisers is $(z_i + \cp_i)_{\iii}$ where $(z_i)_{\iii}
\in \Real^I$ is such that $\sumi z_i = 0$. It can be shown that all Pareto optimal securities are exactly of this form; see e.g., \cite[Theorem 3.1]{JouiSchTou08} for a more general result.
Because of this Pareto optimality, the collection
$\pare{\qp, (\cp_i)_{\iii}}$ usually comes under the appellation
of (welfare) optimal securities and valuation measure,
respectively.

Of course, not every Pareto optimal allocation $(z_i + \cp_i)_{\iii}$, where $(z_i)_{\iii}$ is such that $\sumi z_i = 0$, is economically reasonable. A minimal ``fairness'' requirement that has to be imposed is that the position of each agent after the transaction is at least as good as the initial state. Since the utility comes only in the terminal time, we obtain the requirement $z_i
\geq - \up_i$, for all $\iii$. While there
may be many choices satisfying the latter
requirement in general, the choice $z_i = 0$ of Theorem \ref{thm: AD} has the cleanest economic interpretation in terms of complete financial market equilibrium.
\end{rem}

\begin{rem} \label{rem:no_trade_1}
If we ignore potential transaction costs, the cases where an agent has no motive to enter in a risk-sharing transaction are extremely rare. Indeed, agent $\iii$ will \emph{not} take part in the Arrow-Debreu transaction if and only
if $C_i = 0$, which happens when $\prob_i = \qp$. In
particular, agents will already be in Arrow-Debreu equilibrium and no transaction will take place if and only if they all share the same subjective beliefs.
\end{rem}

\section{Agents' Best Probability Response} \label{sec: best_response}

\subsection{Strategic behaviour in risk sharing} \label{subsec: strategic}

In the Arrow-Debreu setting, the resulting equilibrium is based on
the assumption that
agents do not apply any kind of strategic behaviour. However, in
the majority of practical risk-sharing situations, the modelling assumption of absence
of agents' strategic behaviour is unreasonable, resulting, amongst other things, in overestimation of market efficiency. When securities are negotiated among agents, their design and valuation will depend not only on their existing risky portfolios, but also on the beliefs about the future outcomes they will report for sharing. In general, agents will have incentive to report subjective beliefs that may differ from their true views about future uncertainty; in fact, these will also depend on subjective beliefs reported by the other parties.

As discussed in \S \ref{subsec: AD}, for a given set of agents' subjective beliefs, the optimal sharing rules are governed by the mechanism resulting in Arrow-Debreu equilibrium, as these are
the rules that efficiently allocate discrepancies of risks and beliefs among agents. It is then reasonable to assume that, in absence of a social planner, agents adapt this sharing mechanism for any collection $(\rprob_i)_{\iii} \in \PP^I$ of subjective probabilities  they choose to report---see also the related discussion in the introductory section). More precisely, in accordance to \eqref{eq: qp} and \eqref{eq: opt_contr}, the agreed-upon valuation measure $\qprob \in \PP$ is such that $\log \ud \qprob  \sim \sumi \lambda_i \log \ud \rprob_i$, and the collection of securities that agents will trade are $C_i \dfn \delta_i \log (\ud \rprob_i / \ud \qprob) + \delta_i \ent (\qprob \such \rprob_i )$, $\iii$.

Given the consistent with Arrow-Debreu equilibrium sharing rules, agents respond to subjective beliefs that other agents have reported, with the goal to maximise their individual utility. In this way, a game is formed, with the probability family $\PP$ being the agents' set of strategic choices. The subject of the present Section \ref{sec: best_response} is to analyse the behaviour of individual agents, establish their best response problem and show its well-posedness. The definition and analysis of the Nash risk-sharing equilibrium is taken up in Section \ref{sec: Nash}.

\subsection{Best response} \label{subsec:best_respo}

We shall now describe how agents respond to the reported
subjective probability assessments from their counterparties. For the purposes of \S \ref{subsec:best_respo}, we fix an agent $\iii$ and a collection of reported probabilities $\rprob_{-i} \dfn \pare{\rprob_j}_{j \in I \setminus \set{i}} \in \PP^{I \setminus \set{i}}$ of the remaining agents, and seek the subjective probability that is going to be submitted by agent $\iii$. According to the rules described in \S \ref{subsec: strategic}, a reported probability $\rprob_i \in \PP$ from agent $\iii$ will lead to entering a long position on the
security with payoff
\[
C_i \dfn \delta_i \log \pare{\ud \rprob_i / \ud \qprob^{(\rprob_{-i}, \rprob_i)} } + \delta_i \ent \pare{\qprob^{(\rprob_{-i}, \rprob_i)} \such \rprob_i },
\]
where $\qprob^{(\rprob_{-i}, \rprob_i)} \in \PP$ is such that
\[
\log \ud \qprob^{(\rprob_{-i}, \rprob_i)} \sim \lambda_i \log \ud \rprob_i + \sum_{j \in I \setminus \set{i}} \lambda_j \log \ud \rprob_j.
\]
By reporting subjective beliefs  $\rprob_i \in \PP$, agent $\iii$ also indirectly affects the geometric-mean valuation probability $\qprob^{(\rprob_{-i}, \rprob_i)}$, resulting in a highly non-linear overall effect in the security $C_i$. With the above understanding, and given  $\rprob_{-i} \dfn \pare{\rprob_j}_{j \in I \setminus \set{i}} \in \PP^{I \setminus \set{i}}$, the \emph{response function}  of agent $\iii$ is defined to be
\begin{align*}
\PP \ni \rprob_i \mapsto \V_i (\rprob_i; \rprob_{-i}) &\equiv \U_i \pare{\delta_i \log \pare{\ud \rprob_i / \ud \qprob^{(\rprob_{-i}, \rprob_i)} }   + \delta_i \ent \pare{\qprob^{(\rprob_{-i}, \rprob_i)} \such \rprob_i } } \\
&  = - \delta_i \log \expec_{\prob_i} \bra{\frac{\ud \qprob^{(\rprob_{-i}, \rprob_i)}}{\ud \rprob_i}} + \delta_i \ent \pare{\qprob^{(\rprob_{-i}, \rprob_i)} \such \rprob_i },
\end{align*}
where the fact that $\ent \pare{\qprob^{(\rprob_{-i}, \rprob_i)} \such \rprob_i } < \infty$ follows from the discussion of \S \ref{subsec:geo_mean_prob}. The problem of agent $\iii$ is to report the subjective probability that maximises the certainty equivalent of the resulting position after the transaction, i.e.,  to identify $\rb_i \in \PP$ such that
\begin{equation} \label{eq: best_response}
\V_i (\rb_i; \rprob_{-i}) = \sup_{\rprob_i \in \PP} \V_i (\rprob_i;
\rprob_{-i}).
\end{equation}
Any $\rb_i \in \PP$ satisfying \eqref{eq: best_response}
shall be called \emph{best probability response}.

In contrast to the majority of the related literature, the agent's strategic set of choices in our model may be of infinite dimension. This generalisation is important from a methodological viewpoint; for example, in the setting of \S \ref{subsec: endow} it allows for random endowments with infinite support, like ones with the Gaussian distribution or arbitrarily fat tails, a substantial feature in the modelling of risk.

\begin{rem} \label{rem:no_prob_constraints}
The best response problem \eqref{eq: best_response} imposes no constrains
on the shape of the agent's reported subjective probability, as long as it
belongs to $\PP$. In principle, it is possible for agents
to report subjective views that are considerably far from their actual ones. Such severe departures may be deemed unrealistic and are undesirable from a modelling point of view. However, as will be argued in \S \ref{subsubsec:never_true}, extreme responses are endogenously excluded in our set-up.
\end{rem}

We shall show in the sequel (Theorem \ref{thm:best_response}) that best responses in \eqref{eq: best_response} exist and are unique. We start with a result which gives necessary and sufficient conditions for best probability response.

\begin{prop} \label{prop: best_response_first_ord}
Fix $\iii$ and $\rprob_{-i} \equiv (\rprob_j)_{j \in I \setminus \set{i}} \in \PP^{I \setminus \set{i}}$. Then, $\rb_i \in \PP$ is best probability response for agent $\iii$ given $\rprob_{-i}$ if and only if the random variable $\cb_i \dfn \delta_i \log \pare{\ud \rb_i / \ud \qprob^{(\rprob_{-i}, \rb_i)} } + \delta_i \ent \pare{\qprob^{(\rprob_{-i}, \rb_i)} \such \rb_i }$ is such that $\cb_i > - \delta_{-i}$ and
\begin{equation} \label{eq:best_resp_foc}
\frac{\cb_i}{\delta_i} + \lambda_{-i} \log \pare{1 + \frac{\cb_i}{\delta_{-i}} } \sim  - \sum_{j \in I \setminus \set{i}} \lambda_j \log \pare{\frac{\ud \rprob_j}{\ud \prob_i}}.
\end{equation}
\end{prop}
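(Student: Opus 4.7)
My plan is to derive the first-order optimality condition for $\V_i(\,\cdot\,;\rprob_{-i})$ via a variational calculation, translate the resulting identity algebraically into the claimed form, and then handle sufficiency by reversing the algebra combined with a direct comparison argument. Fix $\rb_i \in \PP$ and abbreviate $\overline{\qprob} \dfn \qprob^{(\rprob_{-i}, \rb_i)}$. For $h \in \li$, consider the perturbation $\rprob_i^t \in \PP$ defined via $\log \bpare{\ud \rprob_i^t / \ud \rb_i} \sim t h$ (for $t \in \Real$), with associated $\qprob^t \dfn \qprob^{(\rprob_{-i}, \rprob_i^t)}$. The identity $\log \ud \qprob^t \sim \lambda_i \log \ud \rprob_i^t + \sum_{j \neq i} \lambda_j \log \ud \rprob_j$ yields $\log \bpare{\ud \qprob^t / \ud \overline{\qprob}} \sim \lambda_i t h$, and a direct computation of $C_i^t = \delta_i \log \bpare{\ud \rprob_i^t / \ud \qprob^t} + \delta_i \ent \bpare{\qprob^t \such \rprob_i^t}$ gives $C_i^t / \delta_i = \cb_i / \delta_i + \lambda_{-i} t h + c(t)$ for some deterministic function $c$ with $c(0) = 0$. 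Differentiating the tautological constraint $\expec_{\qprob^t}[C_i^t] = 0$ at $t = 0$ produces $c'(0) = -\lambda_{-i} \expec_{\overline{\qprob}}[h] - (\lambda_i/\delta_i) \expec_{\overline{\qprob}}[\cb_i h]$. Introducing $\mathbf{P} \in \PP$ with $\ud \mathbf{P}/\ud \prob_i \propto \exp(-\cb_i/\delta_i)$, dominated convergence (after truncating $\cb_i$ to handle possible unboundedness) legitimates
\[
\tfrac{\ud}{\ud t}\V_i(\rprob_i^t; \rprob_{-i})\big|_{t=0} \;=\; \delta_i \lambda_{-i} \expec_\mathbf{P}[h] + \delta_i c'(0),
\]
so demanding this vanish for every $h \in \li$ yields the measure identity $\ud \mathbf{P} = \bpare{1 + \cb_i/\delta_{-i}} \ud \overline{\qprob}$; positivity of the left-hand side forces $\cb_i > -\delta_{-i}$.

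Translating into the claimed form, I take logarithms, expand $\ud \mathbf{P}/\ud \overline{\qprob} = (\ud \mathbf{P}/\ud \prob_i)(\ud \prob_i/\ud \overline{\qprob})$, and combine $\log(\ud \overline{\qprob}/\ud \prob_i) \sim \lambda_i \log(\ud \rb_i/\ud \prob_i) + \sum_{j \neq i} \lambda_j \log(\ud \rprob_j/\ud \prob_i)$ with the immediate consequence $\lambda_{-i} \log(\ud \rb_i/\ud \prob_i) \sim \cb_i/\delta_i + \sum_{j \neq i} \lambda_j \log(\ud \rprob_j/\ud \prob_i)$ of the definition of $\cb_i$. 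The term $\log(\ud \rb_i/\ud \prob_i)$ eliminates cleanly and elementary algebra delivers
\[
\cb_i/\delta_i + \lambda_{-i} \log\bpare{1 + \cb_i/\delta_{-i}} \;\sim\; -\sum_{j \in I \setminus \set{i}} \lambda_j \log\bpare{\ud \rprob_j/\ud \prob_i}.
\]

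For sufficiency, suppose $\rb_i \in \PP$ yields $\cb_i > -\delta_{-i}$ satisfying the displayed equation. Reversing the algebra above (using the automatic $\expec_{\overline{\qprob}}[\cb_i] = 0$ to fix the multiplicative constant) recovers $\ud \mathbf{P}/\ud \overline{\qprob} = 1 + \cb_i/\delta_{-i}$, so $\rb_i$ is a critical point. To upgrade criticality to global optimality, I would derive a closed-form decomposition of $\V_i(\rprob_i; \rprob_{-i}) - \V_i(\rb_i; \rprob_{-i})$ as a function of $g \dfn \log(\ud \rprob_i/\ud \rb_i)$ involving cumulant-generating functions under $\overline{\qprob}$ and $\mathbf{P}$, and exploit the FOC link $\ud \mathbf{P} = \bpare{1 + \cb_i/\delta_{-i}} \ud \overline{\qprob}$ via a Jensen-type comparison to show the difference is nonpositive for arbitrary $\rprob_i$. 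The main obstacle is precisely this sufficiency step, since $\V_i$ is not manifestly concave in $\rprob_i$ under any natural parametrization; the secondary technical issue—interchanging differentiation and expectation when $\cb_i$ is unbounded—is handled by truncation and dominated convergence using the exponential integrability implicit in $\rb_i \in \PP$.
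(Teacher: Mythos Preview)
Your necessity argument is correct and essentially identical in spirit to the paper's: both perturb $\rb_i$ along bounded directions, differentiate $\V_i$ at the base point, and read off a measure identity that forces $\cb_i > -\delta_{-i}$ and the equivalence \eqref{eq:best_resp_foc}. One small correction: the exponential integrability $\exp(-\cb_i/\delta_i)\in\Lb^1(\prob_i)$ needed to define your $\mathbf{P}$ and to justify dominated convergence is not ``implicit in $\rb_i\in\PP$'' but comes from the optimality of $\rb_i$ itself, via $\U_i(\cb_i)\geq\U_i(0)=0$. The paper makes this explicit at the outset; your truncation workaround is unnecessary once this is noted.

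The sufficiency step, however, is a genuine gap. You acknowledge that $\V_i(\,\cdot\,;\rprob_{-i})$ is not known to be concave, so a critical point is not automatically a global maximum---and then offer only a plan (``a closed-form decomposition\ldots via a Jensen-type comparison'') rather than an argument. The paper's proof here is concrete and nontrivial: for an arbitrary competitor $\rprob_i$ it writes the utility difference in terms of $X=\lambda_i\log(\ud\rprob_i/\ud\rb_i)$, uses the FOC relation $\exp(-\cb_i/\delta_i)\propto(\ud\qb/\ud\prob_i)\db_i$ (with $\db_i=1+\cb_i/\delta_{-i}$) to rewrite $\U_i(\cb_i+\delta_{-i}X)-\U_i(\cb_i)$ as $-\delta_i\log\expecqb[\db_i\exp(-\delta_{-i}X/\delta_i)]$, and then chains three separate convexity inequalities: Jensen under the tilted measure with density $\db_i$, the elementary $\exp(x)\geq 1+x$ applied to the price term, and finally Jensen for $z\mapsto z\log z$ to close the estimate. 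None of these steps is suggested by your sketch, and the paper itself remarks that it could not establish concavity directly---so this is exactly the place where a vague plan is not enough.
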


The proof of Proposition \ref{prop: best_response_first_ord} is given in \S \ref{subsec: proof_of_best_resp_prop}. The necessity of the stated conditions for best response follows from applying first-order optimality conditions. Establishing the sufficiency of the stated conditions is certainly non-trivial, due to the fact that it is far from clear (and, in fact, not known to us) whether the response function is concave.

\begin{rem}\label{rem:no trade in best}
In the context of Proposition \ref{prop: best_response_first_ord}, rewriting \eqref{eq:best_resp_foc} we obtain that
\begin{equation} \label{eq:best_resp_foc-alt}
\frac{\cb_i}{\delta_i} + \lambda_{-i} \log \pare{1 + \frac{\cb_i}{\delta_{-i}} } \sim  - \log \pare{\frac{\ud \qprob^{(\rprob_{-i}, \rb_i)}}{\ud \prob_i}} + \lambda_i \log \pare{\frac{\ud \rb_i}{\ud \prob_i}}.
\end{equation}
Using also the fact that $\cb_i / \delta_i \sim \log \pare{\ud \rb_i / \ud \qprob^{(\rprob_{-i}, \rb_i)} }$, it follows that
\begin{equation} \label{eq:best_resp_prob_dens}
\log \pare{\frac{\ud \rb_i}{\ud \prob_i}} \sim - \log \pare{1 + \frac{\cb_i}{\delta_{-i}} }.
\end{equation}
Hence, $\rb_i = \prob_i$ holds if and only if $\log\pare{1 + \cb_i / \delta_{-i} } \sim 0$, which holds if and only if $\cb_i = 0$. (Note that $\cb_i \sim 0$ implies $\cb_i = 0$, since the expectation of $\cb_i$ under
$\qprob^{(\rprob_{-i}, \rb_i)}$ equals zero.) In words, the best probability response and actual subjective probabilities of an agent agree if and only if the agent has no incentive to participate in the risk-sharing transaction, given the reported subjective beliefs of other agents. Hence, in \emph{any} non-trivial cases, agents' strategic behaviour implies a departure from reporting their true beliefs.

Plugging \eqref{eq:best_resp_prob_dens} back to \eqref{eq:best_resp_foc-alt}, and using also \eqref{eq:best_resp_foc}, we obtain
\begin{equation} \label{eq:best_resp_qprob_dens}
\log \pare{\frac{\ud \qprob^{(\rprob_{-i}, \rb_i)}}{\ud \prob_i}} \sim - \frac{\cb_i}{\delta_i} - \log \pare{1 + \frac{\cb_i}{\delta_{-i}} } \sim - \lambda_i \log \pare{1 + \frac{\cb_i}{\delta_{-i}} }  + \sum_{j \in I \setminus \set{i}} \lambda_j \log \pare{\frac{\ud \rprob_j}{\ud \prob_i}},
\end{equation}
providing directly the valuation measure $\qprob^{(\rprob_{-i}, \rb_i)}$ in terms of the security $\cb_i$.
\end{rem}

\begin{rem}\label{rem: probability of best response}
A message from \eqref{eq:best_resp_prob_dens} is that, according to their best response process, agents will report beliefs that understate (resp., overstate) the probability of their payoff being
high (resp., low) relatively to their true beliefs. Such behaviour is clearly driven by a desired post-transaction utility increase.

More importantly, and in sharp contrast to the securities $\pare{\cp_i}_{\iii}$ formed in Arrow-Debreu equilibrium, the security that agent $\iii$ wishes to enter, after taking into account the aggregate reported beliefs of the rest and declaring subjective probability $\rb_i$, has limited liability, as it is bounded from below by the constant $-\delta_{-i}$.
\end{rem}

\begin{rem}\label{rem:best response different belief}
Additional insight regarding best probability responses may be obtained resorting to the discussion of \S\ref{subsec: endow}, where $\prob_i$ incorporates the random endowment $E_i \in \lz$ of agent $\iii$, in the sense that $\log \big( \ud \prob_i / \ud \widetilde{\prob}_i \big) \sim - E_i/\delta_i$, where $\widetilde{\prob}_i$ denotes the subjective probability of agent $\iii$. It follows from \eqref{eq:best_resp_prob_dens} that $ \log \big( \ud \rb_i/\ud \widetilde{\prob}_i \big) \sim -E_i/\delta_i - \log \pare{1 + \cb_i/\delta_{-i}}$. It then becomes apparent that, when agents share their risky endowment, they tend to put more weight on the probability of
the downside of their risky exposure, rather than the upside. For an illustrative situation, see Example \ref{ex: best response} later on.
\end{rem}

\begin{rem}
In the course of the proof of Proposition \ref{prop: best_response_first_ord}, the constant in the equivalence \eqref{eq:best_resp_foc} is explicitly computed; see \eqref{eq:best_resp_z}. This constant has a particularly nice economic interpretation in the case of two agents. To wit, let $I = \set{0,1}$, and suppose that $\rprob_1 \in \PP$ is given. Then, from the vantage point of agent $0$, \eqref{eq:best_resp_foc} becomes
\[
\frac{\cb_0}{\delta_0} + \lambda_1 \log \pare{1 + \frac{\cb_0}{\delta_1} } = \zeb_0  - \lambda_1 \log \pare{\frac{\ud \rprob_1}{\ud \prob_0}},
\]
where the constant $\zeb_0 \in \Real$ is such that
\[
\zeb_0 = - \log \expec_{\prob_0} \bra{\exp \pare{- \frac{\cb_0}{\delta_0} }} + \log \expec_{\rprob_1} \bra{\exp \pare{ \frac{\cb_0}{\delta_{1}} } } = \frac{\U_0 (\cb_0)}{\delta_0} - \frac{\U_1 (-\cb_0; \rprob_1)}{\delta_1}.
\]
where $\U_1 (\cdot; \rprob_1)$ denotes the utility functional of a ``fictitious'' agent with representative pair $(\delta_1, \rprob_1)$. In words, $\zeta_0$ is the post-transaction difference, denominated in units of risk tolerance, of the utility of agent $0$  from the utility of agent $1$ (who obtains the security $- \cb_0$), provided that the latter utility is measured with respect to the reported, as opposed to subjective, beliefs of agent $1$. In particular, when agent $1$ does not behave strategically, in which case $\rprob_1 = \prob_1$, it holds that $\zeb_0 = \U_0 (\cb_0) / \delta_0 -  \U_1 (-\cb_0) / \delta_1$.
\end{rem}

Proposition \ref{prop: best_response_first_ord} sets a roadmap for
proving existence and uniqueness in the best response problem via a one-dimensional parametrisation. Indeed, in accordance to \eqref{eq:best_resp_foc}, in order to find a best response we consider for each $z_i \in \Real$ the unique random
variable $C_i(z_i)$ that satisfies the equation $C_i(z_i) / \delta_i + \lambda_{-i} \log \pare{1 + C_i (z_i) / \delta_{-i} } = \lambda_{-i} z_i  - \sum_{j \in I \setminus \set{i}} \lambda_j \log \pare{ \ud \rprob_j / \ud \prob_i}$; then, upon defining $\qprob_i (z_i)$ via $\log \pare{ \ud \qprob_i(z_i) / \ud \prob_i} \sim - \lambda_i \log \pare{1 + C_i (z_i) / \delta_{-i} }  + \sum_{j \in I \setminus \set{i}} \lambda_j \log \pare{ \ud \rprob_j / \ud \prob_i}$ in accordance to \eqref{eq:best_resp_qprob_dens}, we seek $\widehat{z}_i \in \Real$ such that
$C_i(\widehat{z}_i) \in \Lb^1(\qprob_i(\widehat{z}_i))$ and $\expec_{\qprob_i(\widehat{z}_i)} \bra{C_i(\widehat{z}_i)} = 0$ hold. It turns out that there is
a unique such choice; once found, one simply defines $\rb_i$ via $\log \pare{\ud \rb_i / \ud \prob_i} \sim - \log \pare{1 + C_i(\widehat{z}_i) / \delta_{-i} }$, in accordance to \eqref{eq:best_resp_prob_dens}, to obtain the
unique best response of agent $\iii$ given $\rprob_{-i}$. The technical details of the proof of
Theorem \ref{thm:best_response} below are given in \S
\ref{subsec:proof_of_best_resp_thm}.

\begin{thm} \label{thm:best_response}
For $\iii$ and $\rprob_{-i} \equiv (\rprob_j)_{j \in I \setminus \set{i}} \in \PP^{I \setminus \set{i}}$, there exists a unique $\rb_i \in \PP$ such that $\V_i (\rb_i; \rprob_{-i}) = \sup_{\rprob_i \in \PP} \V_i (\rprob_i; \rprob_{-i})$.
\end{thm}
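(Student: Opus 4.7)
The plan is to follow the one-dimensional reduction outlined just before the theorem statement, using Proposition \ref{prop: best_response_first_ord} for both existence (via sufficiency of its conditions) and uniqueness (via necessity), and reducing the infinite-dimensional optimisation to finding a single scalar $\widehat{z}_i \in \Real$.

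\textbf{Step 1 (pointwise inversion).} For each $z_i \in \Real$, I would define the real function $\Phi: (-\delta_{-i}, \infty) \to \Real$ by $\Phi(x) \dfn x/\delta_i + \lambda_{-i} \log(1 + x/\delta_{-i})$. Since $\Phi$ is continuous and strictly increasing with limits $-\infty$ at $-\delta_{-i}$ and $+\infty$ at $+\infty$, it is a homeomorphism onto $\Real$. Applying $\Phi^{-1}$ pointwise to the $\Lb^0$-random variable $R_i(z_i) \dfn \lambda_{-i} z_i - \sum_{j \in I \setminus \{i\}} \lambda_j \log (\ud \rprob_j / \ud \prob_i)$ yields the unique $C_i(z_i) \in \Lb^0$ with $C_i(z_i) > -\delta_{-i}$ satisfying the equation from \eqref{eq:best_resp_foc}.

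\textbf{Step 2 (candidate valuation measure).} I then use the defining equation to obtain $\exp(-C_i(z_i)/\delta_i) = (1 + C_i(z_i)/\delta_{-i})^{\lambda_{-i}} e^{-R_i(z_i)}$, so that the candidate density takes the clean form
\[
\frac{\ud \qprob_i(z_i)}{\ud \prob_i} \propto \exp(-C_i(z_i)/\delta_i)(1 + C_i(z_i)/\delta_{-i})^{-1} = e^{-\lambda_{-i} z_i}\,(1 + C_i(z_i)/\delta_{-i})^{-\lambda_i} \prod_{j \neq i} \pare{\frac{\ud \rprob_j}{\ud \prob_i}}^{\lambda_j}.
\]
Integrability under $\prob_i$ follows via H\"older (the $\rprob_j \in \PP$ piece behaves as in \S \ref{subsec:geo_mean_prob}) combined with the elementary bound $(1 + C_i(z_i)/\delta_{-i})^{-\lambda_i} \leq 1 + (1 + C_i(z_i)/\delta_{-i})^{-1}$ and the fact that positivity of the density $\ud \qprob_i(z_i)/\ud \prob_i$ (hence $\qprob_i(z_i) \in \PP$) is automatic. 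The key identity relating the construction back to Proposition \ref{prop: best_response_first_ord} is the compatibility of $\qprob_i(z_i)$ with the geometric-mean probability $\qprob^{(\rprob_{-i}, \rb_i)}$ built from $\rb_i$ defined by $\log(\ud \rb_i/\ud \prob_i) \sim -\log(1 + C_i(z_i)/\delta_{-i})$, as dictated by \eqref{eq:best_resp_prob_dens}.

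\textbf{Step 3 (scalar selection).} Define $F(z_i) \dfn \expec_{\qprob_i(z_i)}[C_i(z_i)]$. Pointwise, $z_i \mapsto C_i(z_i)$ is continuous and strictly increasing (by monotonicity of $\Phi^{-1}$), with a.s.\ limits $-\delta_{-i}$ as $z_i \to -\infty$ and $+\infty$ as $z_i \to +\infty$. One shows that $F$ is well-defined, continuous, and strictly increasing, with $\lim_{z_i \to -\infty} F(z_i) = -\delta_{-i} < 0$ (by dominated convergence, using $C_i(z_i) > -\delta_{-i}$) and $\lim_{z_i \to +\infty} F(z_i) = +\infty$; thus a unique $\widehat{z}_i \in \Real$ satisfies $F(\widehat{z}_i) = 0$. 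Setting $\rb_i$ via the $\widehat{z}_i$-prescription, the conditions of Proposition \ref{prop: best_response_first_ord} are verified (the integrability $\cb_i \in \Lb^1(\qprob^{(\rprob_{-i}, \rb_i)})$ and the zero-expectation condition $\expec_{\qprob^{(\rprob_{-i}, \rb_i)}}[\cb_i] = 0$ are precisely the content of $F(\widehat{z}_i) = 0$), giving existence. Uniqueness follows by reversing the construction: any best response $\rb_i$ yields, via the equivalence \eqref{eq:best_resp_foc}, a scalar $z_i$ for which $F(z_i) = 0$, and the uniqueness of the zero of $F$ forces $\widehat{z}_i$ and hence $\rb_i$ to be unique.

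The main obstacle I anticipate is Step 3: establishing integrability of $C_i(z_i)$ under $\qprob_i(z_i)$ and the \emph{strict} monotonicity and limiting behaviour of $F$ in the absence of any bound on $C_i(z_i)$ from above (note only the lower bound $-\delta_{-i}$ is available). Establishing the upper limit $F(z_i) \to +\infty$ in particular requires a careful Fatou/monotone convergence argument exploiting the exact form of the density in Step 2, since one cannot simply invoke dominated convergence for large $z_i$.
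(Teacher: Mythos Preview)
Your outline is exactly the paper's strategy: reduce via Proposition \ref{prop: best_response_first_ord} to finding the unique root of a scalar function; the paper works with the equivalent $f_i(z_i) \dfn \expec_{\qprob_i(z_i)}\bra{1 + C_i(z_i)/\delta_{-i}}$ and seeks $f_i = 1$. The gap you anticipate is real, and here is how the paper closes it. First, your dominated-convergence argument for the lower limit is not quite right as stated, because the measure $\qprob_i(z_i)$ is moving with $z_i$. The paper fixes a reference probability $\oprob_i \in \PP$ via $\log(\ud \oprob_i/\ud \prob_i) \sim R_{-i}$, where $R_{-i} \dfn (1/\lambda_{-i})\sum_{j\neq i}\lambda_j\log(\ud\rprob_j/\ud\prob_i)$, and observes from the defining equation that $\log(\ud\qprob_i(z_i)/\ud\prob_i) \sim D_i(z_i) + R_{-i}$ with $D_i(z_i) \dfn 1 + C_i(z_i)/\delta_{-i}$, giving the clean ratio
\[
f_i(z_i) = \frac{\expec_{\oprob_i}\bra{\exp(D_i(z_i))\, D_i(z_i)}}{\expec_{\oprob_i}\bra{\exp(D_i(z_i))}}.
\]
Now both limits come for free: as $z_i\to-\infty$, $D_i(z_i)\downarrow 0$ and monotone convergence gives numerator $\to 0$, denominator $\to 1$; as $z_i\to\infty$, the comonotonicity of $\exp(D_i)$ and $D_i$ gives $f_i(z_i)\geq \expec_{\oprob_i}\bra{D_i(z_i)}\to\infty$ by monotone convergence. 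This $\oprob_i$-rewrite also makes the integrability in your Step 2 immediate via the bound $1\wedge\exp(z_i-R_{-i})\leq D_i(z_i)\leq 1\vee\exp(z_i-R_{-i})$, which follows directly from the implicit equation.

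For strict monotonicity, which you left open, the paper differentiates: implicit differentiation of the defining equation yields $D_i'(z_i) = q_i(D_i(z_i))$ with $q_i(y) = \lambda_i y/(\lambda_i+y)$ strictly increasing on $(0,\infty)$. A direct computation then gives
\[
f_i'(z_i) = \expec_{\qprob_i(z_i)}\bra{D_i'(z_i)} + \cov_{\qprob_i(z_i)}\pare{D_i(z_i),\, D_i'(z_i)},
\]
and the first term is strictly positive while the covariance is nonnegative because $D_i'$ is an increasing function of $D_i$. This covariance observation is the one idea your proposal is missing.
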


\subsection{The value of strategic behaviour} \label{subsec:market_power}

The increase on agents' utility that is caused by following the best probability response procedure can be regarded as a measure for the value of the strategic behaviour induced by problem \eqref{eq: best_response}. Consider for example the case where only a single agent (say) $0 \in I$ applies the best probability response strategy and the rest of the agents report their true beliefs, i.e., $\rprob_j = \prob_j$ holds for $j \in I \setminus \set{0}$. As mentioned in the introductory section, this is a potential model of a transaction where only agent 0 possesses meaningful market power. Based on the results of \S
\ref{subsec:best_respo}, we may calculate the gains, relative to the Arrow-Debreu transaction, that agent $0$ obtains by incorporating such strategic behaviour (which, among others, implies limited liability of the security the agent takes a long position in). The main insights are illustrated in the following two-agent example.

\begin{exa}\label{ex: best response}
Suppose that $I = \set{0,1}$ and $\delta_0 = 1 = \delta_1$. We shall use the set-up of \S \ref{subsec: endow}, where for simplicity it is assumed that agents have the same subjective probability measure. The agents are exposed to random endowments $E_0$ and $E_1$ that (under the common probability measure) have Gaussian law with mean zero and common variance $\sigma^2 > 0$, while $\rho \in [-1,1]$ denotes the correlation coefficient of $E_0$ and $E_1$. In this case, it is straightforward to check that $\cp_0 = (E_1-E_0)/2$; therefore, after the Arrow-Debreu transaction, the position of agent $0$ is $E_0 + \cp_0 = (E_0 + E_1) / 2$. On the other hand, if agent 1 reports true beliefs, from \eqref{eq:best_resp_foc} the
security $\cb_0$ corresponding to the best probability response of
agent $0$ should satisfy $2 \cb_0 +  \log \pare{1 + \cb_0} = \zeb_0 +E_1$ for appropriate $\zeb_0 \in \Real$ that is coupled with $\cb_0$. For
$\sigma^2 = 1$ and $\rho = -0.5$, straightforward Monte-Carlo
simulation allows the numerical approximation of the probability
density functions (pdf) of $E_0$ and $E_1$  under the best response probability $\rb_0$, illustrated in Figure \ref{fig: rpdf}. As is apparent, the best probability response drives agent 0 in overstating the downside risk of $E_0$ and understating the downside risk of $E_1$.
\begin{figure}[!ht]
\includegraphics[trim = 20mm 0mm 0mm 0mm, clip, scale=0.5]{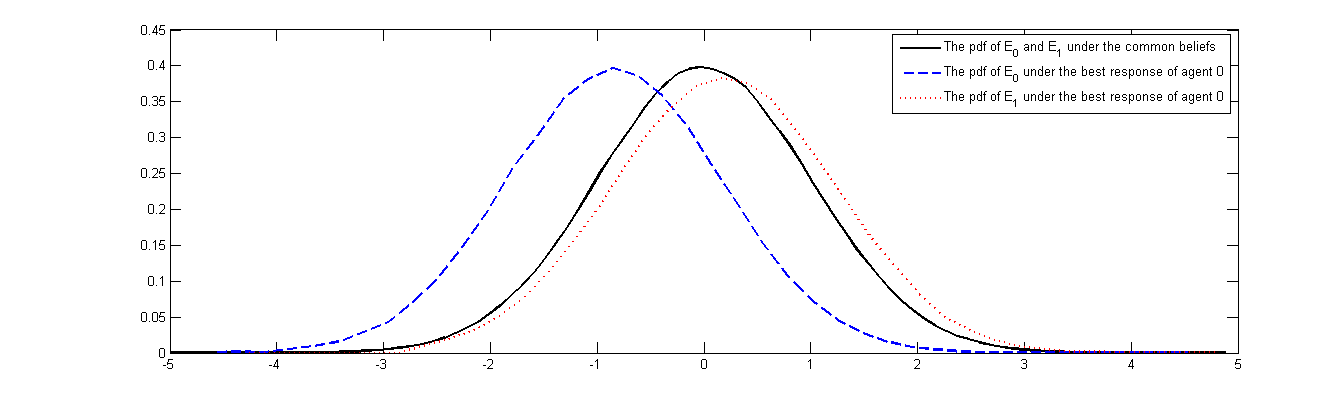}
\caption{{\footnotesize The solid black line is the pdf of endowments $E_0$ and $E_1$ under the agents' common subjective probability measure, while the other
curves illustrate the pdf of $E_0$ (dashed blue) and $E_1$ (dotted red) under the best probability response of agent 0. In
this example, $\sigma^2 = 1$ and $\rho=-0.5$.}}
\label{fig: rpdf}
\end{figure}

The effect of following such strategic behaviour is depicted in Figure \ref{fig: exa1}, where there is comparison between the probability density functions of the positions of agent 0 under (i) no trading; (ii) the Arrow-Debreu transaction; and (iii) the transaction following the application of best response strategic behaviour. As compared to the Arrow-Debreu position, the lower bound of the security $\cb_0$ guarantees a heavier right tail of the agent's position after the best response transaction.

\begin{figure}[!ht]
\includegraphics[trim = 20mm 0mm 0mm 0mm, clip, scale=0.5]{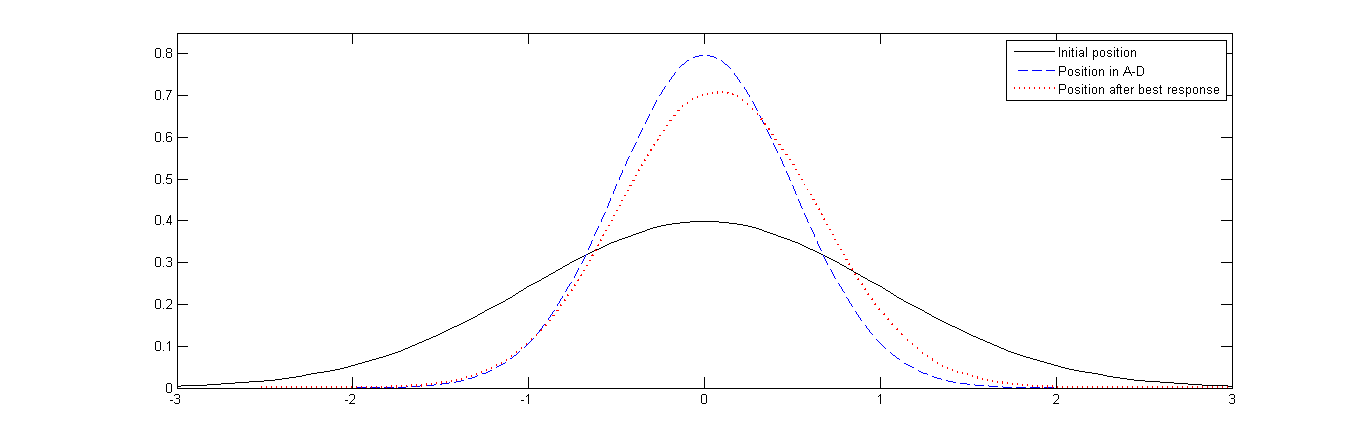}
\caption{{\footnotesize The solid black line is the pdf of the initial position $E_0$, the dashed blue line illustrates the pdf of the position $E_0+\cp_0$ and the dotted red line is the pdf of the position $E_0+\cb_0$, all under the common subjective probability measure. In
this example, $\sigma^2 = 1$ and $\rho=-0.5$.}}
\label{fig: exa1}
\end{figure}
\end{exa}

\section{Nash Risk-Sharing Equilibrium}\label{sec: Nash}

We shall now consider the situation where every single
agent follows the same strategic behaviour indicated by the best
response problem of Section \ref{sec: best_response}. As previously mentioned, sharing securities are designed following the sharing rules determined by Theorem \ref{thm: AD} for any collection of reported subjective views. With the well-posedness of the best response problem established, we are now ready to examine whether the game among agents has an equilibrium point. In view of the analysis of Section \ref{sec: best_response}, individual agents have motive to declare subjective beliefs different than the actual ones. (In particular, and in the setting of \S \ref{subsec: endow}, agents will tend to overstate the probability of their random endowments taking low values.) Each agent will act according to the best response mechanism as in \eqref{eq: best_response}, given what other agents have reported as subjective beliefs. In a sense, the best response mechanism indicates a negotiation scheme, the fixed point (if such exists) of which will produce the Nash equilibrium valuation measure and risk-sharing securities.

Let us emphasise that the actual subjective beliefs of individual players are \emph{not} necessarily assumed to be private knowledge; rather, what is assumed here is that agents have agreed upon the rules that associate any reported subjective beliefs to securities and prices, even if the reported beliefs are not the actual ones. In fact, even if subjective beliefs constitute private knowledge initially, certain information about them will necessarily be revealed in the negotiation process which will lead to Nash equilibrium. 

There are two relevant points to consider here. Firstly, it is unreasonable for participants to attempt to invalidate the negotiation process based on the claim that other parties do not report their true beliefs, as the latter is, after all, a \emph{subjective} matter. This particular point is reinforced from the \emph{a posteriori} fact that reported subjective beliefs in Nash equilibrium do not deviate far from the true ones, as was pointed out in Remark \ref{rem:no_prob_constraints} and is being further elaborated in \S \ref{subsubsec:never_true}. Secondly, it is exactly the limited number of participants, rather than private or asymmetric information, that gives rise to strategic behaviour: agents recognise their ability to influence the market, since securities and valuation become output of collective reported beliefs. Even under the appreciation that other agents will not report true beliefs and the negotiation will not produce an Arrow-Debreu equilibrium, agents will still want to reach a Nash equilibrium, as they will improve their initial position.   
In fact, transactions with limited number of participants typically equilibrate far from their competitive equivalents, as has been also highlighted in other models of thin financial markets with symmetric information structure, like the ones in \cite{CarRosWer12} and \cite{RosWer15}---see also the related discussion in the introductory section.

\subsection{Revealed subjective beliefs}\label{subsec:revealed}
Considering the model from a more pragmatic point of view, one may argue that agents do not actually report subjective beliefs, but rather agree on a valuation measure $\qprob \in \PP$ and zero-price sharing securities $(C_i)_{\iii}$ that clear the market. However, there is a
one-to-one correspondence between reporting subjective beliefs and proposing a valuation measure and securities, as will be described below.

From the discussion of \S \ref{subsec: strategic}, a collection of subjective probabilities $(\rprob_i)_{\iii}$ gives rise to valuation measure $\qprob \in \PP$ such that $\log \ud \qprob  \sim \sumi \lambda_i \log \ud \rprob_i$ and collection $(C_i)_{\iii}$ of securities is such that $C_i \dfn \delta_i \log (\ud \rprob_i / \ud \qprob) + \delta_i \ent (\qprob \such \rprob_i )$, for all $\iii$. Of course, $\sumi C_i = 0$ and $\expecq \bra{C_i} = 0$ holds for all $\iii$. A further technical observation is that $\exp(C_i / \delta_i) \in \Lb^1(\qprob)$ holds for all $\iii$, which is then a necessary condition that an arbitrary collection of market-clearing securities $(C_i)_{\iii}$ must satisfy with respect to an arbitrary valuation probability $\qprob \in \PP$ in order to {be consistent with the  aforementioned risk-sharing mechanism}. The previous observations lead to a definition: for $\qprob \in \PP$, we define the class $\C_\qprob$ of securities that clear the market and are consistent with the valuation measure $\qprob$ via
\[
\C_\qprob \dfn \set{ (C_i)_{\iii} \in \lzi \ \Big| \ \sumi C_i =
0, \text{ and } \exp(C_i / \delta_i) \in \Lb^1(\qprob), \ \expecq \bra{C_i} = 0, \ \forall \iii}.
\]
Note that all expectations of $C_i$ under $\qprob$ in the definition of 
$\C_\qprob$ above are well defined. Indeed, the fact that $\exp(C_i / \delta_i) \in \Lb^1(\qprob)$ in the definition of $\C_\qprob$ implies that $\pare{C_i}_+ \in \Lb^1(\qprob)$ for all $\iii$. From $\sumi C_i = 0$, we obtain $\sumi|C_i| = 2 \sumi \pare{C_i}_+$ and hence $C_i \in \Lb^1(\qprob)$ for all $\iii$.

Starting from a given valuation
measure $\qprob \in \PP$ and securities $(C_i)_{\iii} \in
\C_\qprob$, one may define a collection $(\rprob_i)_{\iii} \in \PP^I$ via $\log (\ud \rprob_i / \ud \qprob) \sim C_i / \delta_i$ for $\iii$, and note that this is the unique collection in $\PP^I$ that results in the valuation probability $\qprob$ and securities $(C_i)_{\iii}$. In this way, the probabilities
$(\rprob_i)_{\iii} \in \PP^I$ can be considered as \emph{revealed} by the valuation measure $\qprob \in
\PP$ and securities $ (C_i)_{\iii} \in \C_\qprob$. Hence, agents proposing risk-sharing securities and a valuation measure is equivalent to them reporting probability beliefs in the transaction. This viewpoint justifies and underlies Definition \ref{def:nash} that follows: the objects of Nash equilibrium are the valuation measure and designed securities, in consistency with the definition of Arrow-Debreu equilibrium.

\subsection{Nash equilibrium and its characterisation}

Following classic literature, we give the formal definition of a Nash risk-sharing equilibrium.

\begin{defn} \label{def:nash}
The collection $\pare{\qg,(\cg_i)_{\iii}} \in \PP \times \lzi$
will be called a \textbf{Nash equilibrium} if
$(\cg_i)_{\iii} \in \C_{\qg}$ and, with $\log (\ud \rg_i / \ud \qg) \sim \cg_i / \delta_i$ for all $\iii$ denoting the
corresponding revealed subjective beliefs, and $\rg_{-i} \dfn (\rg_j)_{j \in I \setminus \set{i}}$ for $\iii$, it holds that
\[
\V_i \pare{\rg_i; \rg_{-i}} = \sup_{\rprob_i \in \PP} \V_i \pare{\rprob_i ;
\rg_{-i}}, \quad \forall \iii.
\]
\end{defn}

A use of Proposition \ref{prop: best_response_first_ord} results
in the characterisation Theorem \ref{thm: nash} below, the proof
of which is given in \S \ref{subsec: proof_of_nash_thm}. For this,
we need to introduce the $n$-dimensional Euclidean space
\begin{equation} \label{eq:simplex}
\DI = \set{z \in \Real^I \ \Big| \ \sum_{\iii} z_i = 0}.
\end{equation}

\begin{thm} \label{thm: nash}
The collection $\pare{\qg, (\cg_i)_{\iii}} \in \PP \times \lzi$ is a Nash equilibrium if and only if the following three conditions hold:
\begin{enumerate}
    \item[(N1)] $\cg_i > - \delta_{-i}$ for all $\iii$, and there exists $\zg = \pare{\zg_i}_{\iii} \in \DI$ such that
\begin{equation} \label{eq:equil_C}
 \cg_i +  \delta_i \log \pare{1 + \frac{\cg_i}{\delta_{-i}} } = \zg_i + \cp_i +  \delta_i \sum_{\jii} \lambda_j \log \pare{1 + \frac{\cg_j}{\delta_{-j}} }, \quad \forall \iii;
\end{equation}
    \item[(N2)] with $\qp \in \PP$ as in \eqref{eq: qp}, i.e., such that $\log \ud \qp  \sim \sumi \lambda_i \log \ud \prob_i$, it holds that
\begin{equation} \label{eq:equil_Q}
    \log \pare{\frac{\ud \qg}{\ud \qp}} \sim - \sum_{\jii} \lambda_j \log \pare{1 + \frac{\cg_j}{\delta_{-j}} };
\end{equation}
    \item[(N3)] $\expecqg \bra{\cg_i}=0$ holds for all $\iii$.
\end{enumerate}
\end{thm}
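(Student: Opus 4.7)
The plan is to unfold Definition \ref{def:nash} via the revealed beliefs, apply the first-order characterisation of Proposition \ref{prop: best_response_first_ord} simultaneously for each agent $\iii$, and then algebraically bundle the $n+1$ resulting equivalences into the compact system (N1)–(N3). Throughout, we keep careful track of the additive constants hidden inside every ``$\sim$'' — these will eventually become the components of the vector $\zg \in \DI$.

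For the forward direction, start from a Nash equilibrium $(\qg,(\cg_i)_{\iii})$, define $\rg_i$ through $\log(\ud\rg_i/\ud\qg) \sim \cg_i/\delta_i$, and note the key identity $\log(\ud\rg_j/\ud\prob_i) \sim \cg_j/\delta_j + \log(\ud\qg/\ud\prob_i)$. Applying Proposition \ref{prop: best_response_first_ord} to agent $\iii$ gives $\cg_i > -\delta_{-i}$ and
\[
\frac{\cg_i}{\delta_i} + \lambda_{-i} \log\!\pare{1 + \frac{\cg_i}{\delta_{-i}}} \sim -\sum_{\jii} \lambda_j \bra{\frac{\cg_j}{\delta_j} + \log\!\pare{\frac{\ud\qg}{\ud\prob_i}}}.
\]
Using $\lambda_j/\delta_j = 1/\delta$ together with the market-clearing identity $\sum_{\jii}\cg_j = -\cg_i$ (a consequence of $(\cg_i)\in\C_{\qg}$), the right-hand side collapses to $\cg_i/\delta - \lambda_{-i} \log(\ud \qg/\ud \prob_i)$, so after multiplying through by $\delta_i/\lambda_{-i}$ and using $\delta_i\log(\ud\prob_i/\ud\qp) \sim \cp_i$ from Theorem \ref{thm: AD}, one obtains the pointwise identity
\[
\cg_i + \delta_i \log\!\pare{1 + \frac{\cg_i}{\delta_{-i}}} = z_i + \cp_i + \delta_i \log\!\pare{\frac{\ud\qp}{\ud\qg}}, \qquad \iii,
\]
for uniquely determined real constants $z_i$. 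Summing over $\iii$, the $\cg_i$ and $\cp_i$ terms both vanish (market clearing plus the analogous identity for the Arrow–Debreu securities), leaving $\sum_{\iii} \delta_i \log(1 + \cg_i/\delta_{-i}) = \sum_{\iii} z_i + \delta \log(\ud\qp/\ud\qg)$. Identifying this with $n$ times itself yields on one hand $\sum_{\iii}z_i = 0$, i.e.\ $\zg\in\DI$, and on the other hand the expression for $\log(\ud\qg/\ud\qp)$ claimed in (N2). Substituting (N2) back recovers (N1), and (N3) holds by virtue of $(\cg_i)\in\C_{\qg}$.

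For the converse, suppose $(\qg,(\cg_i)_{\iii})$ satisfies (N1)–(N3). Summing (N1) shows $\sum_{\iii}\cg_i = 0$; combined with (N3) and a short argument (via (N1) and (N2)) that $\exp(\cg_i/\delta_i)\in\Lb^1(\qg)$, this places $(\cg_i)_{\iii} \in \C_{\qg}$. Defining the revealed beliefs $\rg_i$ as in Definition \ref{def:nash} and running the forward calculation in reverse reproduces, for each $\iii$, precisely the first-order identity \eqref{eq:best_resp_foc} with $\rprob_{-i}$ replaced by $\rg_{-i}$, while the lower bound $\cg_i > -\delta_{-i}$ is the remaining hypothesis of Proposition \ref{prop: best_response_first_ord}. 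Sufficiency in that proposition then yields the best-response property for every agent, which is exactly the Nash condition.

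The main obstacle is bookkeeping rather than any deep analytical estimate: the equivalence ``$\sim$'' loses a scalar at every manipulation, and one must be attentive that the $n+1$ scalars produced by the individual first-order conditions are consistent (so that $\zg$ genuinely lives in the hyperplane $\DI$) and that the identity for the valuation density in (N2) is the same identity inferred from each agent's viewpoint. The auxiliary integrability check $\exp(\cg_i/\delta_i)\in\Lb^1(\qg)$, required for membership in $\C_{\qg}$, is handled by combining (N1), (N2), the bound $\cg_i > -\delta_{-i}$, and the finite-entropy bounds established in \S\ref{subsec:geo_mean_prob} applied to $\qp$.
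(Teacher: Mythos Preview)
Your approach is essentially the paper's: apply Proposition \ref{prop: best_response_first_ord} for each agent, rewrite through $\qp$ to obtain the system (N1)--(N2), and read off (N3) from membership in $\C_{\qg}$. Two points deserve tightening. First, the step ``identifying this with $n$ times itself yields $\sum_{\iii} z_i = 0$'' is opaque as written; the constants $z_i$ you obtain with $\delta_i \log(\ud\qp/\ud\qg)$ on the right need not themselves sum to zero. The clean argument (and the one the paper uses) is to first extract (N2) from the summed relation, then substitute (N2) back to rewrite each equation in the exact form \eqref{eq:equil_C}; summing \eqref{eq:equil_C} over $\iii$, the terms $\delta_i\sum_{\jii}\lambda_j\log(1+\cg_j/\delta_{-j})$ and $\sum_{\iii}\delta_i\log(1+\cg_i/\delta_{-i})$ cancel identically, and $\sum_{\iii}\cg_i = \sum_{\iii}\cp_i = 0$ forces $\sum_{\iii}\zg_i = 0$. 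Second, the integrability check $\exp(\cg_i/\delta_i)\in\Lb^1(\qg)$ in the converse is much simpler than you indicate: once $\sum_{\iii}\cg_i = 0$ is established, the lower bounds $\cg_j > -\delta_{-j}$ give $\cg_i = -\sum_{j\neq i}\cg_j < (n-1)\delta + \delta_i$, so each $\cg_i$ is bounded and $\exp(\cg_i/\delta_i)\in\Lb^\infty$; no entropy bounds are needed. (Also, in your displayed application of \eqref{eq:best_resp_foc} the sum should range over $j\in I\setminus\{i\}$, not $\jii$; your subsequent use of $\sum_{j\neq i}\cg_j = -\cg_i$ shows you intended this.)
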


\begin{rem} \label{rem:no_trade_2}
Suppose that the agents' preferences and risk exposures are such that no trade occurs in Arrow-Debreu equilibrium, which happens when all $\prob_i$ are the same (and equal to, say, $\prob$) for all $\iii$---see Remark \ref{rem:no_trade_1}. In this case, $\qp = \prob$ and $\cp_i = 0$ for all $\iii$. It is then straightforward from Theorem \ref{thm: nash} to see that a Nash equilibrium is also given by $\qg = \prob$ and $\cg_i = 0$ (as well as $\zg_i = 0$) for all $\iii$. In fact, as will be argued in \S \ref{subsubsec: z in compact set}, this is the unique Nash equilibrium in this case. Conversely, suppose that a Nash equilibrium is given by $\qg = \prob$ and $\cg_i = 0$ for all $\iii$. Then, \eqref{eq:equil_Q} shows that $\qp = \qg = \prob$ and \eqref{eq:equil_C} implies that $\cp_i \sim - \zg_i \sim 0$, which means that $\cp_i = 0$ for all $\iii$. In words, the Nash risk-sharing equilibrium involves no risk transfer if and only if the agents are already in a Pareto optimal situation.
\end{rem}

In the important case of two acting agents, since $\cg_0 = - \cg_1$, applying simple algebra in \eqref{eq:equil_C}, we obtain that a Nash equilibrium risk sharing security $\cg_0$ is such that $-\delta_1 < \cg_0 < \delta_0$ and satisfies
\begin{equation}\label{eq: C_two_agent}
\cg_0 + \frac{\delta_0 \delta_1}{\delta} \log \pare{ \frac{1 +
\cg_0 / \delta_1}{1 - \cg_0 / \delta_0}} = \zg_0 + \cp_0.
\end{equation}
In Theorem \ref{thm:nash_exist_uniq}, existence of a unique Nash equilibrium for the two-agent case will be shown. Furthermore, a one-dimensional root-finding algorithm presented in \S \ref{subsec:root_finding} allows to calculate the Nash equilibrium, and further calculate and compare the final
position of each individual agent. Consider for
instance Example \ref{ex: best response} and its symmetric situation that is illustrated in Figure \ref{fig: exa1}, where the limited liability of the security $\cb_0$ implies less variability and flatter right tail of the agent's position. Under the Nash equilibrium, as will be argued in \S \ref{subsubsec:endog_bounds}, security $\cg_0$ is
further bounded from above, which implies that the probability density function of agent's final position is shifted to the left. This fact is illustrated in Figure \ref{fig: exa2}.

\begin{figure}[!ht]
\includegraphics[trim = 20mm 0mm 0mm 0mm, clip, scale=0.5]{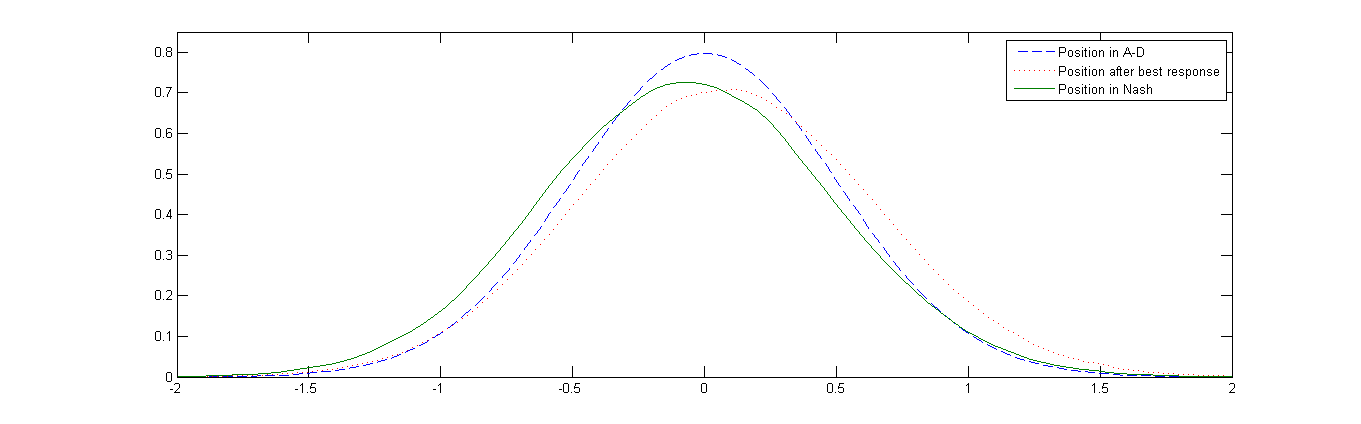}
\caption{{\footnotesize The solid green line is the pdf of the
position $E_0+\cg_0$, the dashed blue line illustrates the pdf
of the position $E_0+\cp_0$ and the dotted red line is the pdf
of the position $E_0+\cb_0$, all under the common subjective probability measure. In this example,
$\sigma_0=\sigma_1=1$, and $\rho=-0.5$.}} \label{fig: exa2}
\end{figure}

Despite the above symmetric case, it is not necessary true that
all agents suffer a loss of utility at the Nash equilibrium risk
sharing. As we will see in the Section \ref{sec: extrema}, for agents with sufficiently large risk tolerance the negotiation game
results in higher utility compared to the one gained through Arrow-Debreu equilibrium.

\subsection{Within equilibrium}\label{subsec: in Nash}

According to Theorem \ref{thm:nash_exist_uniq}, Nash equilibria in the sense of Definition \ref{def:nash} always exist. Throughout \S \ref{subsec: in Nash}, we assume that $\pare{\qg, (\cg_i)_{\iii}}$ is a Nash equilibrium and provide a discussion on certain aspects of it, based on the characterisation Theorem \ref{thm: nash}.

\subsubsection{Endogenous bounds on traded securities} \label{subsubsec:endog_bounds}
As was pointed in Remark \ref{rem: probability of best response}, the security that each agent enters resulting from the best response procedure is
bounded below. When all participating agents
follow the same strategic behaviour, Nash equilibrium securities
are bounded from above as well. Indeed, since the market clears, the security that agents take a long position into is shorted by the rest of the agents, who similarly intend to bound their liabilities. Mathematically, since $\cg_i > -
\delta_{-i}$ is valid for all $\iii$ and $\sum_{\iii} \cg_i = 0$
holds, it also follows that $\cg_i = - \sum_{\jii \setminus
\set{i}} \cg_{j} < \sum_{\jii \setminus \set{i}} \delta_{-j} = (n
-1) \delta + \delta_i$, for all $\iii$. Therefore, a consequence of the agents' strategic behaviour is that Nash risk-sharing securities are \emph{endogenously} bounded. This fact is in sharp contrast with the Arrow-Debreu equilibrium of \eqref{eq: opt_contr}, where the risk transfer may involve securities with unbounded payoffs. An immediate consequence of the bounds on the securities is that the potential gain from the Nash risk-sharing transaction is also endogenously bounded. Naturally, the resulting endogenous bounds are an indication of how the game among agents restricts the risk-sharing transaction, which in turn may be a source of large loss of efficiency. The next example is an illustration of the such inefficiency in a simple symmetric setting. Later on, in Figure \ref{fig: exa2}, the loss of utility in another two-agent
example is visualised.

\begin{exa} 
Let $X \in \lz$ have the standard (zero mean, unit standard deviation) Gaussian law under the baseline probability $\prob$. For $\beta \in \Real$, define $\prob^\beta \in \PP$ via $\log \pare{\ud \prob^\beta / \ud \prob }\sim \beta X$; under $\prob^\beta$, $X$ has the Gaussian law with mean $\beta$ and unit standard deviation. Fix $\beta > 0$, and set $\prob_0 \dfn \prob^\beta$ and $\prob_1 \dfn \prob^{-\beta}$. In this case, it is straightforward to compute that $\cp_0
= \beta X = - \cp_1$. It also follows that $\up_0 = \beta^2 / 2 = \up_1$. If $\beta$ is large, the discrepancy between the agents' beliefs results in large monetary profits to both after the Arrow-Debreu transaction. On the other hand, as will be established in Theorem \ref{thm:nash_exist_uniq}, in case of two agents there exists a unique Nash equilibrium. In fact, in this symmetric case we have that $-1 < \cg_0 < 1$, and it can be checked that (see also \eqref{eq: C_two_agent} later)
\[
\cg_0 + \frac{1}{2} \log \pare{\frac{1 + \cg_0}{1 - \cg_0}} = \beta X.
\]
The loss of efficiency caused by the game becomes greater with increasing values of $\beta > 0$. In fact, if $\beta$ converges
to infinity, it can be shown that $\cg_0$ converges to
$\mathsf{sign} (X) = \indic_{\set{X > 0}} - \indic_{\set{X < 0}}$; furthermore, both $\U_0 \pare{\cg_0}$ and $\U_1 \pare{\cg_1}$ will converge to $1$, which demonstrates the tremendous inefficiency of the Nash equilibrium transaction as compared to the Arrow-Debreu one.
\end{exa}

Note that the endogenous bounds $- \delta_{-i} < \cg_i < (n -1) \delta + \delta_i$ depend only on the risk tolerance profile of the agents,
and \emph{not} on their actual beliefs (or risk exposures). In addition,
these bounds become stricter in games where quite risk-averse agents are playing, as they become increasingly hesitant towards undertaking risk.

\subsubsection{If trading, you never reveal your true beliefs} \label{subsubsec:never_true}
As discussed in Remark \ref{rem:no trade in best}, agents' best probability response differ from their actual subjective beliefs in any situation where risk transfer is involved. This
result becomes more pronounced when we consider the Nash
risk-sharing equilibrium. To wit, if $(\rg_i)_{\iii}$ are revealed subjective beliefs corresponding to a Nash equilibrium, it is as a consequence of Theorem \ref{thm: nash} (see also \eqref{eq:best_resp_prob_dens}) that
\begin{equation} \label{eq:reported_nash_agent}
\log \pare{\frac{\ud \rg_i}{\ud \prob_i}} \sim - \log \pare{1 + \frac{\cg_i}{\delta_{-i}}}, \quad \forall \iii.
\end{equation}
Note that $\rg_i = \prob_i$ holds if and only if $\cg_i = 0$ for any fixed $\iii$; therefore, whenever agents take part (by actually trading) in Nash equilibrium, their reported subjective beliefs are \emph{never} the same as their actual ones.

Even though in any non-trivial trading situation agents will
report different subjective beliefs from their actual ones, we
shall argue below that  \eqref{eq:reported_nash_agent} imposes
\textit{endogenous} constraints on the magnitude of the possible
discrepancy; the discussion the follows expands on Remark \ref{rem:no_prob_constraints}. Start by writing \eqref{eq:reported_nash_agent} as
$\log \pare{ \ud \prob_i / \ud \rg_i } = - \kappa^\diamond_i +
\log \pare{1 + \cg_i / \delta_{-i}}$, where $\kappa^\diamond_i
\dfn \log \expec_{\rg_i} \bra{1 + \cg_i / \delta_{-i}}$, and note
that $\expec_{\rg_i} \bra{\cg_i} \geq - \delta_i \log
\expec_{\rg_i} \bra{\exp \pare{ -\cg_i / \delta_i}}  \geq 0$
holds, where we have used Jensen's inequality and the fact that
$\pare{\qg, (\cg_i)_{\iii}}$ is an Arrow-Debreu equilibrium for
the fictitious agents' preference pairs $(\delta_i,
\rg_i)_{\iii}$. It follows that $\kappa^\diamond_i \geq 0$ holds,
which implies that $\ud \prob_i / \ud \rg_i \leq 1 + \cg_i /
\delta_{-i}$, for all $\iii$. Defining weights $(\alpha_i)_{\iii}$
via $\alpha_i \dfn \delta_{-i}/n\delta = \lambda_{-i}/n$ for all
$\iii$ (noting that $0 < \alpha_i < 1/n$ holds for all $\iii$, and
that $\sum_{\iii} \alpha_i = 1$), a use of the market-clearing
condition $\sumi \cg_i = 0$ gives $\sumi \alpha_i \pare{\ud
\prob_i / \ud \rg_i}  \leq 1$. One can obtain a corresponding
lower bound. Indeed, using the endogenous bounds $\cg_i \leq (n-1)
\delta + \delta_i$, it follows that $\kappa^\diamond_i \leq - \log
\alpha_i$ for all $\iii$, which gives $\ud \prob_i / \ud \rg_i
\geq \alpha_i (1 + \cg_i / \delta_{-i}) = \alpha_i + \cg_i / (n
\delta)$. Using again the market-clearing condition $\sumi \cg_i =
0$, it follows that $\sumi \pare{\ud \prob_i / \ud \rg_i} \geq 1$.
To recapitulate,
\[
\sumi \alpha_i \frac{\ud \prob_i}{\ud \rg_i}  \leq 1 \leq \sumi \frac{\ud \prob_i}{\ud \rg_i}
\]
holds, which imposes considerable a-priori restrictions on the
likelihood ratios $\ud \prob_i / \ud \rg_i$ for all $\iii$. (For example, there are no events for which all agents will overstate or understate their likelihood as compared to their actual subjective beliefs.) In
particular, since $1 / \alpha_i = n / \lambda_{-i}$, we obtain
that
\begin{equation} \label{eq:lower_bound_dens}
\frac{\ud \prob_i}{\ud \rg_i} \leq \frac{n}{\lambda_{-i}}, \quad \forall \iii.
\end{equation}
The above upper bound on the likelihood of $\prob_i$ with respect to $\rg_i$ only depends on the number of remaining agents $n$ and the relative risk tolerance coefficient of the agents; it does not depend neither the aggregate risk tolerance level $\delta$ nor the actual subjective beliefs of other agents. Furthermore, note also that bound \eqref{eq:lower_bound_dens} implies that $\ent(\prob_i \such  \rg_i) = \expec_{\prob_i} \bra{\log(\ud \prob_i / \ud \rg_i)} \leq \log (n / \lambda_{-i})$. The latter gives an a-priori endogenous estimate on the distance of the truth from the reported beliefs in Nash equilibrium.


\subsubsection{Loss of efficiency} \label{subsubsec:loss_of_eff}

As already mentioned, agents' strategic behaviour results in risk-sharing \textit{inefficiency}, which, since utilities $(\U_i)_{\iii}$ are
numerically represented by certainty equivalents, can be measured through the difference of the aggregate monetary utility under the Arrow-Debreu transaction and the aggregate  monetary  utility under the Nash equilibrium risk-sharing transaction. Note that similar measures of inefficiency have been used in risk-sharing literature---see e.g., \cite{Vay99}
or \cite{AchBis05}. Mathematically, the loss of efficiency
equals $\up - \ug = \sumi \up_i - \sumi \ug_i$, where
$(\up_i)_{\iii}$ and $\up$ are defined in
\eqref{eq:opt_util_price_agent_AD} and
\eqref{eq:opt_util_aggr_AD}, while
\[
\ug_i \dfn \U_i \pare{\cg_i}, \quad \forall \iii, \quad \text{and} \quad \ug \dfn \sumi \ug_i.
\]
From \eqref{eq:relative_util}, \eqref{eq:equil_C} and \eqref{eq:equil_Q}, it follows that
\begin{align*}
\nonumber \ug_i - \up_i &= - \delta_i \log \expecqp \bra{\exp \pare{ - \frac{\cg_i - \cp_i}{\delta_i}}} \\
&= \zg_i - \delta_i \log \expecqp \bra{\pare{1 + \frac{\cg_i}{\delta_{-i}}} \prod_{\jii} \pare{1 + \frac{\cg_j}{\delta_{-j}}}^{-\lambda_j}}\\
&= \zg_i - \delta_i \log \expecqg \bra{1 + \frac{\cg_i}{\delta_{-i}}} - \delta_i \log \expecqp \bra{\prod_{\jii} \pare{1 + \frac{\cg_j}{\delta_{-j}}}^{-\lambda_j}}, \quad \forall \iii.
\end{align*}
Recalling that $\expecqg \bra{\cg_i} = 0$ holds for all $\iii$,
and noting the equality
\[
\expecqp \bra{\prod_{\jii} \pare{1 + \frac{\cg_j}{\delta_{-j}}}^{-\lambda_j}} = \expecqg \bra{\prod_{\jii} \pare{1 + \frac{\cg_j}{\delta_{-j}}}^{\lambda_j}}^{-1}
\]
which holds in view of \eqref{eq:equil_Q}, we obtain
\begin{equation} \label{eq:util_loss_nash_indiv}
\ug_i - \up_i = \zg_i + \lambda_i \delta \log \expecqg \bra{\prod_{\jii} \pare{1 + \frac{\cg_j}{\delta_{-j}}}^{\lambda_j}}, \quad \forall \iii.
\end{equation}
Adding up \eqref{eq:util_loss_nash_indiv} over all $\iii$ and using the fact that $\sum_{\iii} \zg_i = 0$, one obtains an analytic expression of the loss of efficiency caused by the game:
\begin{equation} \label{eq:util_loss_nash}
\ug - \up = \delta \log \expecqg \bra{\prod_{\iii} \pare{1 + \frac{\cg_i}{\delta_{-i}}}^{\lambda_i}}.
\end{equation}
Since $\prod_{\iii} \pare{1 + \cg_i / \delta_{-i}}^{\lambda_i}
\leq \sum_{\iii} \lambda_i \pare{1 + \cg_i / \delta_{-i}} = 1 +
\sum_{\iii} \lambda_i \cg_i / \delta_{-i}$ and from the fact that $\expecqg
\bra{\cg_i} = 0$ holds for all $\iii$, we indeed have $\ug \leq \up$ (which was anyway known from Remark \ref{rem:pareto}); furthermore, the equality $\ug = \up$ happens if and only if $\cg_i
= 0$ holds for all $\iii$, which happens if and only if $\cp_i =
0$ holds for all $\iii$---see Remark \ref{rem:no_trade_2}. In other words, the Nash risk-sharing equilibrium always implies a strict loss of
efficiency, except for the case where there is no trading within
Nash equilibrium (which is equivalent to the case where there is
no trading within Arrow-Debreu equilibrium as well).

\subsubsection{A priori information on $\zg$} \label{subsubsec: z in compact set}

From \eqref{eq:util_loss_nash_indiv} and \eqref{eq:util_loss_nash}, one obtains
\begin{equation} \label{eq:loss_decomp}
\ug_i - \up_i = \zg_i + \lambda_i \pare{ \ug - \up}, \quad \forall \iii.
\end{equation}
The above equality implies an economic interpretation for $\zg_i = \lambda_i \pare{ \up - \ug} + \pare{\ug_i - \up_i}$. Indeed, $ \lambda_i \pare{ \up - \ug}$ is the fraction, corresponding to agent $\iii$, of the aggregate loss of utility caused by forming a Nash, instead of Arrow-Debreu, equilibrium; on the other hand, $\ug_i - \up_i$ is the difference between the utility that agent $\iii$ acquires in Nash equilibrium from the Arrow-Debreu one.

Although the aggregate utility $\ug$ in Nash
equilibrium risk sharing can never be higher than the Arrow-Debreu
aggregate utility $\up$, \emph{it may happen that some agents
benefit from the game}, in the sense that their individual utility
after the negotiation game is higher when compared to the utility
gain of the Arrow-Debreu equilibrium. We will address such cases
in Section \ref{sec: extrema}.

Equation \eqref{eq:loss_decomp} is useful in obtaining tight bounds on $\zg = \pare{\zg_i}_{\iii}$. Using the facts that $\ug_i \geq 0$ for all $\iii$, $\ug \leq \up$, and the equality $\zg_i  = \lambda_i \pare{ \up - \ug} + \ug_i - \up_i$, it follows that
\begin{equation} \label{eq:z_bounds}
\zg_i \geq - \up_i, \quad \forall \iii.
\end{equation}
Combined with $\sum_{\iii} \zg_i = 0$, the previous \emph{a priori} bounds imply that $\zg$ has to live in a compact simplex on $\DI$. The bounds in \eqref{eq:z_bounds} are indeed sharp: in the no-trade setting of Remark \ref{rem:no_trade_2}, it follows that $\up_i = 0$ for all $\iii$, which implies that $\zg_i \geq 0$ should hold for all $\iii$; since $\zg \in \DI$, it follows that $\zg_i = 0$ should hold for all $\iii$. This also shows that the trivial Nash equilibrium obtained in Remark \ref{rem:no_trade_2} is unique.

\subsubsection{Individual marginal indifference valuation} \label{subsubsec:marginal_util_based}

In view of \eqref{eq:util_loss_nash} and the subsequent discussion, and recalling Remark \ref{rem:pareto}, it follows that the allocation in Nash equilibrium fails to be Pareto optimal (except in the trivial no-trade case). Another way to demonstrate the inefficiency  of Nash equilibrium is through the disagreement between the individual agent's marginal (utility) indifference valuation measures after the Nash risk-sharing transaction.

Recall that, given a position $G_i \in \lz$ with $\U_i(G_i) > - \infty$, the \emph{marginal indifference valuation measure}  $\qprob_i = \qprob_i(G_i)$ of agent $\iii$ has the property that the function $\Real \ni q \mapsto \U_i \pare{G_i + q (X - \expec_{\qprob_i} \bra{X}) }$ is maximised at $q = 0$ for all $X \in \Lb^\infty$; in other words, if prices are given by expectations under $\qprob_i$, agent $\iii$ has no incentive to take any position other than $G_i$. Using first-order conditions, it is straightforward to show that $\log \pare{\ud \qprob_i / \ud \prob_i} \sim - G_i / \delta_i$ holds.

In Arrow-Debreu equilibrium, the collection $\pare{\qp_i}_{\iii}$ with $\log \pare{\ud \qp_i / \ud \prob_i} \sim -  \cp_i  / \delta_i$ for $\iii$, which are the individual marginal indifference valuation measures associated with positions $\pare{\cp_i}_{\iii}$ after the Arrow-Debreu risk-sharing transaction, satisfies $\qp_i = \qp$ for all $\iii$: all agents' marginal indifference valuation measures agree. Now, denote the individual agent's marginal indifference valuation measures  after the Nash risk-sharing transaction by $\pare{\qg_i}_{\iii}$, for which $\log
\pare{\ud \qg_i / \ud \prob_i} \sim - \cg_i / \delta_i$
holds for all $\iii$. In view of $\log \pare{\ud \prob_i /\ud \qp} \sim \cp_i / \delta_i$, \eqref{eq:equil_C} and
\eqref{eq:equil_Q}, it follows that $\log \pare{\ud \qg_i / \ud \qg}
\sim \log \pare{1 + \cg_i / \delta_{-i}}$. Since $\expec_{\qg}
\bra{1 + \cg_i / \delta_{-i}} = 1$, it actually follows that
\begin{equation} \label{eq:indifference_val_nash}
\frac{\ud \qg_i}{\ud \qg} = 1 + \frac{\cg_i}{\delta_{-i}}, \quad
\forall \iii.
\end{equation}
Pareto optimality would require all $\pare{\qg_i}_{\iii}$ to
agree, which is possible only if $\cg_i = 0$, for all $\iii$,
i.e., exactly when no trade occurs.

All Nash securities $(\cg_i)_{\iii}$ have zero value under $\qg$. For each individual agent $\iii$, we can measure the marginal indifference value of $\cg_i$ via
\begin{equation}\label{eq:valuation under qgi}
\expec_{\qg_i} \bra{\cg_i} = \expec_{\qg} \bra{\pare{1 + \frac{\cg_i}{\delta_{-i}}} \cg_i} = \frac{1}{\delta_{-i}} \var_{\qg} \pare{\cg_i}, \quad \forall \iii.
\end{equation}
In particular, note that $\expec_{\qg_i} \bra{\cg_i} \geq 0$, with strict inequality if $\cg_i$ is non-zero, for all $\iii$. This observation implies that (except in trivial situations of no trading) \emph{all} agents would be better off if they would take a larger position in their individual securities; for all $a \in \Real_+$ the collection $(a \cg_i)_{\iii}$ of securities clears the market, and for some $a > 1$ this collection of securities would result in higher utility for each agent than using securities $(\cg_i)_{\iii}$. Of course, what prevents agents from doing so is that they would find themselves in (Nash) disequilibrium. The fact that agents will not agree on market-clearing collections $(a \cg_i)_{\iii}$ which for some $a > 1$ would be individually (and therefore, also collectively) preferable also indicates that trading volume within Nash equilibrium tends to be reduced.

The individual marginal indifference valuation measures
$\pare{\qg_i}_\iii$ allow for an interesting expression of the
Nash valuation measure $\qg$. To wit, recall from \S
\ref{subsubsec:never_true} the weights $\alpha_i =
\delta_{-i}/n\delta$ for all $\iii$; then, from
\eqref{eq:indifference_val_nash} and the market clearing condition
$\sum_{\iii} C_i = 0$, it follows that
\begin{equation}\label{eq:qg decomp}
\qg = \sum_{\iii} \alpha_i \qg_i.
\end{equation}
In words, the Nash valuation measure $\qg$ is a
convex combination of the individual agent's marginal
indifference valuation measures, assigning weight $\alpha_i$ to
agent $\iii$. Note also that more risk-averse agents carry more weight;
however, since $\max_{\iii} \alpha_i < 1/n$, $\qg$ is almost equal
to the equally-weighted average of $\pare{\qg_i}_{\iii}$ for large
numbers of agents.

Relation \eqref{eq:qg decomp} highlights the importance of risk tolerance levels regarding the gain or loss of utility for individual agents in Nash equilibrium. Consider for instance the
situation of two interacting agents, with one of them being
considerably more risk tolerant than the other. In this case,
$\qg$ will be very close to the risk-averse agent's marginal
utility-based valuation measure, who will agree with
the quoted prices. On the other hand, the possible discrepancy of $\qg$ from the risk tolerant agent's marginal utility-based valuation is beneficial to this agent, as it allows for the opportunity to purchase a positive-value security for zero price. A limiting instructive scenario along these lines is treated in Section \ref{sec: extrema}.

The marginal indifference valuation measures $\pare{\qg_i}_{\iii}$ of
\eqref{eq:indifference_val_nash} can be used to provide interesting formulas for the utility gain in Nash equilibrium and the utility difference between the Nash and Arrow-Debreu transactions. Note first that \eqref{eq:equil_Q} and \eqref{eq:util_loss_nash} give
\[
\log \pare{\frac{\ud \qp}{\ud \qg}} =  \sum_{\jii} \lambda_j \log \pare{1 + \frac{\cg_j}{\delta_{-j}} } + \frac{\up - \ug}{\delta},
\]
which combined with \eqref{eq:loss_decomp} and the fact that $\cp_i = \delta_i \log \pare{\ud \prob_i / \ud \qp} + \up_i$ implies
that
\begin{equation} \label{eq:marg_util_nash_help}
\cg_i + \delta_i \log \pare{1 + \frac{\cg_i}{\delta_{-i}}}  = \zg_i + \cp_i + \delta_i \sum_{\jii} \lambda_j \log \pare{1 + \frac{\cg_j}{\delta_{-j}} } = \ug_i + \delta_i
\log \pare{\frac{\ud \prob_i}{\ud \qg}}.
\end{equation}
Using further \eqref{eq:indifference_val_nash} and taking expectation with respect to $\qg$ in \eqref{eq:marg_util_nash_help}, we obtain
\[
\ug_i = \delta_i \ent(\qg \such \prob_i) - \delta_i \ent(\qg \such \qg_i), \quad \iii.
\]
The last equality has to be compared with \eqref{eq:opt_util_price_agent_AD}. As in Arrow-Debreu equilibrium, agents in Nash equilibrium  benefit from the distance of the resulting valuation measure from their subjective views; however, unlike the Pareto-optimal efficiency of the Arrow-Debreu transaction, agents in the Nash transaction suffer loss from the distance of the valuation measure from their respective marginal indifference valuation measures.

From \eqref{eq:marg_util_nash_help} and  \eqref{eq:indifference_val_nash} it follows that $\cg_i = - \delta_i \log \pare{\ud \qg_i / \ud \prob_i} + \U_i( C_i)$; combining this with $\cp_i = \delta_i \log \pare{\ud \prob_i / \ud \qp} + \up_i$, we obtain $\cg_i +
\delta_i \log \pare{\ud \qg_i / \ud \qp} = \cp_i + (\ug_i -
\up_i)$, for all $\iii$. Taking expectations with respect to
$\qp$, it follows that
\begin{equation} \label{eq:nash_gain_ent_indiv}
\ug_i - \up_i = \expec_{\qp} \bra{\cg_i} - \delta_i \ent (\qp \such
\qg_i ), \quad \forall \iii.
\end{equation}
The difference of individual agents' utilities in the two
equilibria comes from two distinct sources. The first stems from
the discrepancy (measured via the relative entropy) of the
Arrow-Debreu valuation from the individual marginal
indifference valuation of agent $\iii$ in Nash equilibrium. When
the agents' marginal indifference valuation measure in Nash
equilibrium is close to the Arrow-Debreu measure, his loss of
utility caused by the Nash game is lower. In a sense, this is the
part of aggregate loss of utility that is ``paid'' by agent $\iii$
(see also \eqref{eq:nash_gain_ent_aggr} below). The other term on
the right-hand-side of \eqref{eq:nash_gain_ent_indiv} regards the
price under the Arrow-Debreu valuation measure $\qp$ of the actual
security that agent $\iii$ buys at Nash equilibrium. Recalling that
Nash equilibrium prices of the Nash securities $(\cg_i)_{\iii}$
are zero, positivity of $\expec_{\qp} \bra{\cg_i}$ implies that
the security $\cg_i$ is \textit{undervalued} in Nash equilibrium
transaction. Again, note that if $\qg_i$ is close to $\qp$, the
valuation $\expec_{\qp} \bra{\cg_i}$ tends to be positive, since
$\expec_{\qg_i} \bra{\cg_i}$ is always nonnegative (see
\eqref{eq:valuation under qgi}). To recapitulate the previous
discussion: \textit{agents whose marginal indifference valuation measure
is close to the Arrow-Debreu one tend to benefit from the Nash
game}. As we will see in Section \ref{sec: extrema}, this
happens, for example, when agent $\iii$ is sufficiently risk
tolerant.

Due to the market-clearing condition $\sum_{\iii}\cg_i=0$, the
aggregate loss takes into account only the aggregate discrepancy
of individual marginal measures from the Arrow-Debreu optimal one:
under-valuation of certain securities balances off by
over-valuation of others. Indeed, adding up
\eqref{eq:nash_gain_ent_indiv} over all $\iii$, gives
\begin{equation} \label{eq:nash_gain_ent_aggr}
\up - \ug = \sum_{\iii} \delta_i \ent (\qp \such \qg_i ),
\end{equation}
which measures Nash inefficiency as aggregate discrepancy from
optimal valuation of the individual agents' marginal indifference
valuation in Nash equilibrium. Equation \ref{eq:nash_gain_ent_aggr} is the counterpart of \eqref{eq:opt_util_aggr_AD}, where the inefficiency of complete absence of trading as compared to Arrow-Debreu risk-sharing is considered.

\subsection{Existence and uniqueness of Nash equilibrium via finite-dimensional root finding} \label{subsec:root_finding}

Theorem \ref{thm: nash} is used as a guide in order to search for equilibrium, parametrising candidates for optimal securities using the $n$-dimensional space $\DI$ introduced in \eqref{eq:simplex}. Proposition \ref{prop: C_system_sol} that follows, and whose proof is the content of \S \ref{subsec: proof_C_system_sol}, enables to reduce the search of Nash equilibrium, an inherently infinite-dimensional problem in our setting, to a finite-dimensional one. The latter problem gives the necessary tools for numerical approximations of Nash equilibria (see also Example \ref{ex:three_agents_equil} below).

\begin{prop} \label{prop: C_system_sol}
For all $z \in \DI$ there exists unique $(C_i(z))_{\iii} \in \lzi$ with $C_i(z) > - \delta_{-i}$ and
\begin{equation} \label{eq: C_system}
C_i(z) +  \delta_i \log \pare{1 + \frac{C_i(z)}{\delta_{-i}} } =
z_i + \cp_i +  \delta_i \sum_{\jii} \lambda_j \log \pare{1 + \frac{C_j(z)}{\delta_{-j}} }, \quad \forall \iii.
\end{equation}
(Note that,  necessarily, $\sum_{\iii} C_i(z) = 0$ for all $z \in \DI$.) Furthermore, it holds that
\begin{equation} \label{eq:Qg_z_integr}
\expecqp \bra{ \prod_{\jii}  \pare{1 + \frac{C_j(z)}{\delta_{-j}} }^{- \lambda_j} } < \infty.
\end{equation}
\end{prop}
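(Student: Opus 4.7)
The plan is to solve the coupled system \eqref{eq: C_system} pointwise by reducing it to a one-dimensional search, parametrised by the scalar $M \dfn \sum_j \lambda_j \log(1 + C_j/\delta_{-j})$ appearing on the right-hand side of \eqref{eq: C_system}. For each $i$, let $F_i : (-\delta_{-i}, \infty) \to \Real$ be given by $F_i(C) \dfn C + \delta_i \log(1 + C/\delta_{-i})$; since $F_i'(C) = 1 + \delta_i/(\delta_{-i}+C) > 0$ with $F_i(-\delta_{-i}{+}) = -\infty$ and $F_i(\infty) = +\infty$, $F_i$ is a continuous strictly increasing bijection onto $\Real$. For each $\omega$ and $M \in \Real$, set $C_i(M; \omega) \dfn F_i^{-1}(z_i + \cp_i(\omega) + \delta_i M)$, so that each $C_i(\cdot;\omega)$ is continuous, strictly increasing, with range $(-\delta_{-i}, \infty)$.

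The key observation is that, under the ansatz $C_i = C_i(M;\omega)$, the self-consistency $M = \sum_j \lambda_j \log(1 + C_j/\delta_{-j})$ is equivalent to market-clearing $\sum_i C_i = 0$. To see this, sum $F_i(C_i) = z_i + \cp_i + \delta_i M$ over $i$; using $\sum_i z_i = 0 = \sum_i \cp_i$ and the identity $\sum_i F_i(C_i) = \sum_i C_i + \delta \sum_j \lambda_j \log(1 + C_j/\delta_{-j})$, the equivalence is immediate. Thus the whole system reduces to the scalar equation $S(M;\omega) \dfn \sum_i C_i(M;\omega) = 0$. Since $S(\cdot;\omega)$ is continuous, strictly increasing, with limits $-\sum_i \delta_{-i} = -n\delta$ at $-\infty$ and $+\infty$ at $+\infty$, a unique root $M(\omega) \in \Real$ exists; measurability of $\omega \mapsto M(\omega)$, and hence of $C_i(z) \dfn C_i(M(\cdot);\cdot)$, follows from the joint measurability of $S$ and the unique-root characterisation. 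The pointwise bound $C_i(z) > -\delta_{-i}$ is immediate from $F_i^{-1}$ mapping into $(-\delta_{-i},\infty)$, and $\sum_i C_i(z) = 0$ holds by construction.

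For the integrability \eqref{eq:Qg_z_integr}, I would switch to the baseline probability, use $\ud\qp/\ud\prob \propto \prod_j (\ud\prob_j/\ud\prob)^{\lambda_j}$, and apply H\"older's inequality with weights $(\lambda_j)_j$ to obtain
\[
\expecqp\bra{\prod_j Y_j^{-\lambda_j}} \le \mathrm{const} \cdot \prod_j \bpare{\expec_{\prob_j}[Y_j^{-1}]}^{\lambda_j},
\]
where $Y_j \dfn 1 + C_j(z)/\delta_{-j}$. It then suffices to show $\expec_{\prob_j}[Y_j^{-1}] < \infty$ for each $j$. The singular regime is $C_j \to -\delta_{-j}$, which via $F_j(C_j) = z_j + \cp_j + \delta_j M$ corresponds to $z_j + \cp_j + \delta_j M \to -\infty$. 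The deterministic upper bound $C_j < (n-1)\delta + \delta_j =: K_j$ (from market clearing and $C_i > -\delta_{-i}$) combined with the defining equation yields $Y_j \ge \exp(M + (z_j + \cp_j - K_j)/\delta_j)$, and since $e^{-\cp_j/\delta_j} = e^{\ent(\qp \such \prob_j)} (\ud \qp/\ud\prob_j)$, this gives $Y_j^{-1} \le \tilde C \cdot e^{-M} (\ud\qp/\ud\prob_j)$ for a deterministic $\tilde C$. Integrability then reduces to controlling the tail of $M(\omega)$ under $\prob_j$: the event $\{M(\omega) \le -t\}$ coincides with $\{S(-t;\omega) \ge 0\}$, which forces at least one $\cp_i$ to exceed a linear function of $t$, an event whose probability decays sufficiently fast via $\cp_i = \delta_i \log(\ud\prob_i/\ud\qp) + \delta_i \ent(\qp \such \prob_i)$.

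I expect the integrability step to be the main obstacle. Direct change-of-measure identities yield estimates of the form $\expec_{\prob_j}[Y_j^{-1}] \asymp \expecqp[e^{-M}]$ up to random constants depending on $\cp_j$, leading to circular bounds when the two quantities are compared against each other. Breaking this circularity requires genuinely exploiting the coupling between $(C_i)_i$ and $(\cp_i)_i$ encoded by $S(M(\omega);\omega) = 0$: a sufficiently negative $M(\omega)$ can occur only when one of the $\cp_i$'s is exceptionally large, and the probability of such a tail event under any $\prob_j$ is \emph{a priori} controlled by the geometric-mean structure of $\qp$.
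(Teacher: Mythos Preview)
Your existence--uniqueness argument is essentially the paper's, repackaged: instead of solving $w(y) = y - \sum_i \lambda_i \log \theta_i(z_i + \cp_i + \delta_i y) = 0$ for $y = L(z)$, you solve the equivalent scalar equation $S(M) = 0$, and you record the neat observation that self-consistency of $M$ is equivalent to market clearing.

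The integrability argument, however, is overcomplicated and, as you yourself suspect, circular in its current form: the H\"older detour through $\expec_{\prob_j}[Y_j^{-1}]$ leads you right back to $\expecqp[e^{-M}]$, which is the very quantity you are trying to bound. The resolution is already implicit in your own tail analysis, and is exactly how the paper proceeds. From $S(M(\omega);\omega) = 0$ you know at least one $C_i(M(\omega);\omega) \ge 0$; since $F_i(0) = 0$ and $F_i$ is increasing, this forces $z_i + \cp_i(\omega) + \delta_i M(\omega) \ge 0$ for that $i$, and hence the \emph{pointwise} bound
\[
-M(\omega) \;\le\; \max_{i \in I} \frac{z_i + \cp_i(\omega)}{\delta_i} \;\le\; \max_{i \in I} \frac{z_i}{\delta_i} + \max_{i \in I} \frac{\cp_i(\omega)}{\delta_i}.
\]
No probabilistic tail estimate is needed; this is a deterministic consequence of market clearing, and it is precisely the inequality \eqref{eq:L_z_ineq} in the paper. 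Then
\[
\expecqp\bra{e^{-M}} \;\le\; e^{\max_i z_i/\delta_i} \, \expecqp\bra{\max_{i \in I} e^{\cp_i/\delta_i}} \;\le\; e^{\max_i z_i/\delta_i} \sum_{i \in I} \expecqp\bra{e^{\cp_i/\delta_i}} \;<\; \infty,
\]
since $e^{\cp_i/\delta_i} \propto \ud\prob_i/\ud\qp \in \Lb^1(\qp)$ for every $i$. This is the entire integrability argument; the H\"older step, the change-of-measure to $\prob_j$, and the tail probability analysis are all unnecessary.
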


In the notation of Proposition \ref{prop: C_system_sol}, for each $z \in \DI$, define the probability $\qprob(z)$ via
\begin{equation} \label{eq: Q_z}
\log \pare{ \frac{\ud \qprob(z)}{\ud \qp}} \sim - \sum_{\jii} \lambda_j \log \pare{1 + \frac{C_j(z)}{\delta_{-j}} }.
\end{equation}
The uniform bounds $- \delta_{-i} < C_i(z) < (n-1) \delta + \delta_i$ follow exactly as in \S \ref{subsubsec:endog_bounds}, and imply that $\exp( C_i(z) / \delta_i) \in \Lb^1(\qz)$ holds for all $\iii$ and $z \in \DI$. In particular, $\pare{C_i(z)}_{\iii} \in \C_{\qprob(z)}$ holds for all $z \in \DI$.
In view of Theorem \ref{thm: nash}, Nash equilibria amount to finding $z \in \DI$ such that $\expecqz \bra{C_i(z)} = 0$ holds for all $\iii$. We can in fact define a   function $\dis : \DI \mapsto \Real_+$ that gives a ``distance from equilibrium'' via the formula
\begin{equation}\label{eq: dist}
\dis(z) = - \sumi \delta_{-i} \log \pare{1 + \frac{\expecqz
\bra{C_i(z)}}{\delta_{-i}}}, \quad \forall z \in \DI.
\end{equation}
Since $C_i(z) > - \delta_{-i}$ holds for all $z \in \DI$, $\ell$
is well defined. Furthermore, the inequality $\log(x) \leq x-1$,
valid for all $x \in (0, \infty)$, gives
\[
\dis(z) \geq - \sumi \delta_{-i} \pare{\frac{\expecqz \bra{C_i(z)}}{\delta_{-i}}} = - \expecqz \bra{\sumi C_i(z)} = 0, \quad \forall z  \in \DI,
\]
in view of the fact that $\sum_{\iii} C_i(z) = 0$ for all $z
\in \DI$, which shows that $\dis$ is indeed $\Real_+$-valued. Furthermore, since $\log(x) < x-1$ holds for all $x
\in (0, \infty) \setminus \set{1}$, for any $z \in \DI$ it follows
that $\dis(z) = 0$ is equivalent to $\expecqz \bra{C_i(z)} = 0$
for all $\iii$.

The following result summarises the above discussion.

\begin{prop} \label{prop:nash_root}
With the previous notation, the following are true:
\begin{itemize}
    \item Assume that $\pare{\qg, (\cg_i)_{\iii}}$ is a Nash equilibrium, and let $\zg \equiv \pare{\zg_i}_{\iii} \in \DI$ be as in \eqref{eq:equil_C}. Then, $\dis \pare{\zg} = 0$.
    \item Assume the existence of $\zg \in \DI$ such that $\dis \pare{\zg} = 0$. Then, $\pare{\qprob(\zg), (C_i (\zg))_{\iii}}$ as defined in \eqref{eq: C_system} and \eqref{eq: Q_z} is a Nash equilibrium.
\end{itemize}
\end{prop}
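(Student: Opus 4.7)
The plan is to read off the proposition directly from Theorem \ref{thm: nash} and Proposition \ref{prop: C_system_sol}, since conditions (N1) and (N2) of the characterisation of Nash equilibrium coincide, by design, with the defining equations \eqref{eq: C_system} and \eqref{eq: Q_z} for $(C_i(z))_{\iii}$ and $\qprob(z)$, while condition (N3) is captured by the vanishing of the distance function $\ell$. No new analytic work is needed beyond noting that the inequality $\log(1+y)\le y$ is strict away from $y=0$.

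For the forward implication, suppose $\pare{\qg, (\cg_i)_{\iii}}$ is a Nash equilibrium and let $\zg \in \DI$ be the vector supplied by Theorem \ref{thm: nash}. Condition (N1) asserts that $\cg_i > -\delta_{-i}$ and that the family $(\cg_i)_{\iii}$ solves the system \eqref{eq:equil_C}, which is exactly \eqref{eq: C_system} evaluated at $z=\zg$; the uniqueness part of Proposition \ref{prop: C_system_sol} then gives $C_i(\zg)=\cg_i$ for all $\iii$. Condition (N2) is nothing but \eqref{eq: Q_z} at $z=\zg$, hence $\qprob(\zg)=\qg$. Finally, (N3) reads $\expec_{\qprob(\zg)}\bra{C_i(\zg)}=0$ for every $\iii$, which makes each logarithm in \eqref{eq: dist} vanish, so $\dis(\zg)=0$.

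For the reverse implication, assume $\dis(\zg)=0$ for some $\zg \in \DI$. By construction, $(C_i(\zg))_{\iii}$ automatically satisfies (N1) and $\qprob(\zg)$ automatically satisfies (N2). It remains to show that $\expec_{\qprob(\zg)}\bra{C_i(\zg)}=0$ for every $\iii$. Recall from the text that the lower bound $\dis(\zg)\ge 0$ was obtained by applying $\log(1+y)\le y$ to each summand in \eqref{eq: dist}, followed by the market-clearing identity $\sum_{\iii} C_i(\zg)=0$. Since $\log(1+y)<y$ whenever $y\neq 0$, the equality $\dis(\zg)=0$ forces $\expec_{\qprob(\zg)}\bra{C_i(\zg)}/\delta_{-i}=0$, i.e.\ $\expec_{\qprob(\zg)}\bra{C_i(\zg)}=0$, for each $\iii$. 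Thus (N3) holds as well, and Theorem \ref{thm: nash} yields that $\pare{\qprob(\zg),(C_i(\zg))_{\iii}}$ is a Nash equilibrium.

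There is essentially no obstacle here: all the heavy lifting (existence and uniqueness of $C_i(z)$, finite integrability giving well-definedness of $\qprob(z)$, and the necessary-and-sufficient characterisation) has been discharged in Proposition \ref{prop: C_system_sol} and Theorem \ref{thm: nash}. The only subtle point is the strict-concavity observation needed to upgrade $\dis(\zg)=0$ into coordinate-wise vanishing of the expectations; this is immediate from $\log(1+y)=y$ forcing $y=0$.
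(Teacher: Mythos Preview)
Your proof is correct and follows essentially the same approach as the paper, which explicitly presents Proposition~\ref{prop:nash_root} as a summary of the discussion preceding it. You have accurately reproduced that discussion: the forward direction invokes the uniqueness in Proposition~\ref{prop: C_system_sol} to identify $(\cg_i)_{\iii}$ with $(C_i(\zg))_{\iii}$ and $\qg$ with $\qprob(\zg)$, while the reverse direction uses the strict inequality $\log x < x-1$ for $x\neq 1$ (equivalently, your $\log(1+y)<y$ for $y\neq 0$) to upgrade $\dis(\zg)=0$ to coordinate-wise vanishing of the expectations, exactly as the paper does.
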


Proposition \ref{prop:nash_root} provides a one-to-one correspondence between Nash equilibria and roots of $\dis$. Recalling the discussion in \S\ref{subsubsec: z in compact set}, any root of $\ell$ belongs to the compact subset of $\DI$ consisting of $(z_i)_{\iii} \in \DI$ with $z_i \geq - \up_i$ for all $\iii$. This fact allows for numerical approximations of Nash equilibria via, for example, Monte-Carlo simulation.

\smallskip

Its practical usefulness notwithstanding, Proposition \ref{prop:nash_root} does not answer the question of actual \emph{existence} of Nash equilibria and, in case of existence, the \emph{uniqueness}. Such issues are settled in Theorem \ref{thm:nash_exist_uniq} that follows, the proof of which is the subject of \S \ref{subsec:proof_of_nash_exist_uniq}.

\begin{thm} \label{thm:nash_exist_uniq}
A Nash risk-sharing equilibrium always exists. When, additionally, $I = \set{0,1}$, Nash risk-sharing equilibrium is necessarily unique.
\end{thm}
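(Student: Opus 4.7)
My approach builds on the finite-dimensional reduction of Proposition \ref{prop:nash_root}: Nash equilibria correspond bijectively to zeros $\zg \in \DI$ of the non-negative continuous function $\dis$, and by the a priori bounds of \S\ref{subsubsec: z in compact set}, all such zeros belong to the compact convex set $K \dfn \set{z \in \DI \such z_i \geq -\up_i \text{ for all } \iii}$. The first task is to verify that $\DI \ni z \mapsto \pare{C_i(z)}_{\iii}$ is continuous with respect to pointwise $\prob$-a.s.\ convergence, using the uniform bounds $-\delta_{-i} < C_i(z) < (n-1)\delta + \delta_i$ from \S\ref{subsubsec:endog_bounds} for domination. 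Writing the left-hand side of \eqref{eq: C_system} as $\Phi_i(C_i(z))$ with $\Phi_i \colon c \mapsto c + \delta_i \log(1 + c/\delta_{-i})$ a strictly increasing continuous bijection $(-\delta_{-i}, \infty) \to \Real$, any $\prob$-a.s.\ subsequential limit of $C_i(z_n)$ along $z_n \to z$ must itself solve \eqref{eq: C_system} at $z$, forcing equality with $C_i(z)$ by the uniqueness in Proposition \ref{prop: C_system_sol}. It then follows from \eqref{eq: Q_z} and dominated convergence that $\ud \qprob(z)/\ud \qp$ is continuous in $\Lb^1(\qp)$, hence $z \mapsto \expecqz \bra{C_i(z)}$ and $\dis$ are continuous.

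For existence I would define $F \colon \DI \to \DI$ via $F(z) \dfn z - \pare{\expecqz\bra{C_i(z)}}_{\iii}$, which is well-defined, lands in $\DI$ by market clearing $\sumi \expecqz\bra{C_i(z)} = 0$, is continuous by the previous paragraph, and whose fixed points are exactly the zeros of $\dis$. Using the uniform bound $\abs{\expecqz \bra{C_i(z)}} \leq (n-1)\delta + \delta_i$ together with the a priori restriction of fixed points to $K$, I would identify a compact convex $\widetilde{K} \subseteq \DI$ containing $K$ with $F(\widetilde{K}) \subseteq \widetilde{K}$, either by direct inward-pointing analysis on $\partial \widetilde K$ or by composing $F$ with a retraction onto $\widetilde K$, and invoke Brouwer's fixed-point theorem.

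For uniqueness when $I = \set{0,1}$ the space $\DI$ is one-dimensional via $z = (z_0, -z_0)$, and Nash equilibria correspond to zeros of the scalar function $h(z_0) \dfn \expec_{\qprob(z_0)}\bra{C_0(z_0)}$. Implicit differentiation of \eqref{eq: C_two_agent} shows $z_0 \mapsto C_0(z_0)$ is smooth with strictly positive pointwise derivative. Differentiating $h$ and using the explicit density from \eqref{eq: Q_z} gives
\[
h'(z_0) = \expec_{\qprob(z_0)}\bra{\frac{\ud C_0}{\ud z_0}\pare{1 + C_0\, G(C_0)}} - h(z_0) \cdot \expec_{\qprob(z_0)}\bra{G(C_0)\, \frac{\ud C_0}{\ud z_0}},
\]
where $G(c) \dfn \lambda_0/(\delta_0 - c) - \lambda_1/(\delta_1 + c)$. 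The identity $\lambda_0 \delta_1 = \lambda_1 \delta_0 = \delta_0 \delta_1/\delta$ forces $G(c)$ to have the same sign as $c$ on $(-\delta_1, \delta_0)$, whence $c\, G(c) \geq 0$ everywhere. At any zero $z_0^\diamond$ of $h$ the second term vanishes and the first is strictly positive, so $h'(z_0^\diamond) > 0$. Continuity of $h'$ then makes each zero of $h$ a strictly locally increasing point, hence isolated. Two distinct zeros would, via the intermediate value theorem, produce an infinite sequence of zeros inside the bounded interval to which all zeros belong, whose accumulation point would itself be a zero at which $h$ is locally strictly monotone yet fails to be injective, a contradiction.

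The main obstacle is the existence step: while continuity and the a priori bounds are routine, rigorously establishing invariance $F(\widetilde K) \subseteq \widetilde K$ for a compact convex set suitable for Brouwer requires careful boundary analysis, and one may instead prefer a variational formulation that minimises $\dis$ on $K$ and rules out strictly positive minimisers via first-order conditions. The uniqueness step, by contrast, is largely a direct computation once the sign structure of $G$ has been pinned down by the symmetric identity $\lambda_0 \delta_1 = \lambda_1 \delta_0$.
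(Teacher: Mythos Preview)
Your uniqueness argument is close in spirit to the paper's, but there is a slip and a genuine gap in existence.

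\textbf{Existence.} This is where your proposal has a real hole, and you acknowledge it. Your map $F(z) = z - \bigl(\expecqz[C_i(z)]\bigr)_{\iii}$ is natural, but you give no mechanism to produce a compact convex $\widetilde K$ with $F(\widetilde K) \subseteq \widetilde K$. The a priori bounds of \S\ref{subsubsec: z in compact set} constrain only \emph{actual Nash equilibria}, not the range of $F$, so they cannot be used directly; and the retraction trick only yields a fixed point of $\pi \circ F$, which need not be a fixed point of $F$ unless you already know $F$ points inward at $\partial\widetilde K$. The variational alternative (minimise $\dis$ and rule out positive minimisers) is no easier: first-order conditions at a minimum give $\nabla\dis = 0$, not $\dis = 0$. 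The paper avoids all this by using a different map. Define $u_i(z) = \U_i(C_i(z))$ and
\[
\phi_i(z) \;=\; u_i(z) - \up_i + \lambda_i\bigl(\up - \textstyle\sum_j u_j(z)\bigr), \qquad \iii.
\]
One checks that $\phi:\DI\to\DI$ is continuous, and (the key point) that its fixed points are exactly the Nash equilibria. Then the universal bounds $u_i(z)\geq -\delta_{-i}$ (from $C_i(z) > -\delta_{-i}$) and $\sum_j u_j(z) \leq \up$ (Pareto optimality of Arrow--Debreu) give $\phi_i(z) \geq -\delta_{-i} - \up_i$ for \emph{every} $z$, so $\phi$ maps the compact simplex $\{z\in\DI : z_i \geq -\delta_{-i}-\up_i\}$ into itself and Brouwer applies. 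This utility-based construction is the missing idea.

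\textbf{Uniqueness.} Your formula for $G$ has the weights swapped: differentiating the log-density from \eqref{eq: Q_z} actually gives $-L'(z) = C_0'(z)\,\widetilde G(C_0(z))$ with $\widetilde G(c) = \lambda_1/(\delta_0 - c) - \lambda_0/(\delta_1 + c)$, not your $G$. With the correct $\widetilde G$ the identity $\lambda_0\delta_1 = \lambda_1\delta_0$ does \emph{not} make $\widetilde G(c)$ share the sign of $c$; one finds instead $\widetilde G(c) = (c + \delta_1 - \delta_0)/\bigl((\delta_0 - c)(\delta_1 + c)\bigr)$. Fortunately your conclusion survives, because a direct computation gives $1 + c\,\widetilde G(c) = \delta_0\delta_1/\bigl((\delta_0 - c)(\delta_1 + c)\bigr) > 0$, so $h'(z_0^\diamond) > 0$ at zeros still holds and your isolated-zeros argument goes through. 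The paper does this more cleanly: writing $L'(z) = -q(C_0(z))$ with $q$ strictly increasing, one gets $h'(z) = \expecqz[C_0'(z)] - \cov_{\qprob(z)}(C_0(z), L'(z))$, and both terms are non-negative (the first strictly positive), so $h$ is strictly increasing \emph{everywhere}, dispensing with the accumulation-point manoeuvre.
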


The question of uniqueness for three or more agents remains open, and is significantly more challenging from a mathematical perspective. In all cases of numerical simulation that were carried out, we observed (existence and) uniqueness of a Nash equilibrium. The next example is representative.

\begin{exa} \label{ex:three_agents_equil}
Consider a three-agent game with  $\delta_0 = \delta_1 = \delta_2 = 1$. We assume that $\log \pare{\ud \prob_i / \ud \prob} \sim X_i$ holds for $i \in \set{0,1,2}$, where $(X_0, X_1, X_2)$ under baseline probability $\prob$ has a mean-zero normal distribution with $\sigma(X_0) = 0.4$, $\sigma (X_1) = 2.7$, $\sigma(X_2) = 1.1$, $\rho(X_0, X_1)=-0.9$, $\rho(X_0,X_2)=0.7$ and $\rho(X_1,X_2)=-0.3$. In Figure \ref{fig: distance}, we plot the function $\ell$ for different values
of $(z_1, z_2)$, only in the bounded region specified by the inequalities $z_1 \geq - \up_1$, $z_2 \geq - \up_2$, and $z_1 + z_2 \leq \up_0$, where recall that $z_1 + z_2 = - z_0$. As can be seen, there is a unique root of $\ell$ approximately at the vector $\zg=(\zg_0,\zg_1,\zg_2) = (0.14,-0.7,0.56)$.
\begin{figure}[!ht]
\includegraphics[trim = 35mm 0mm 0mm 0mm, clip, scale=0.5]{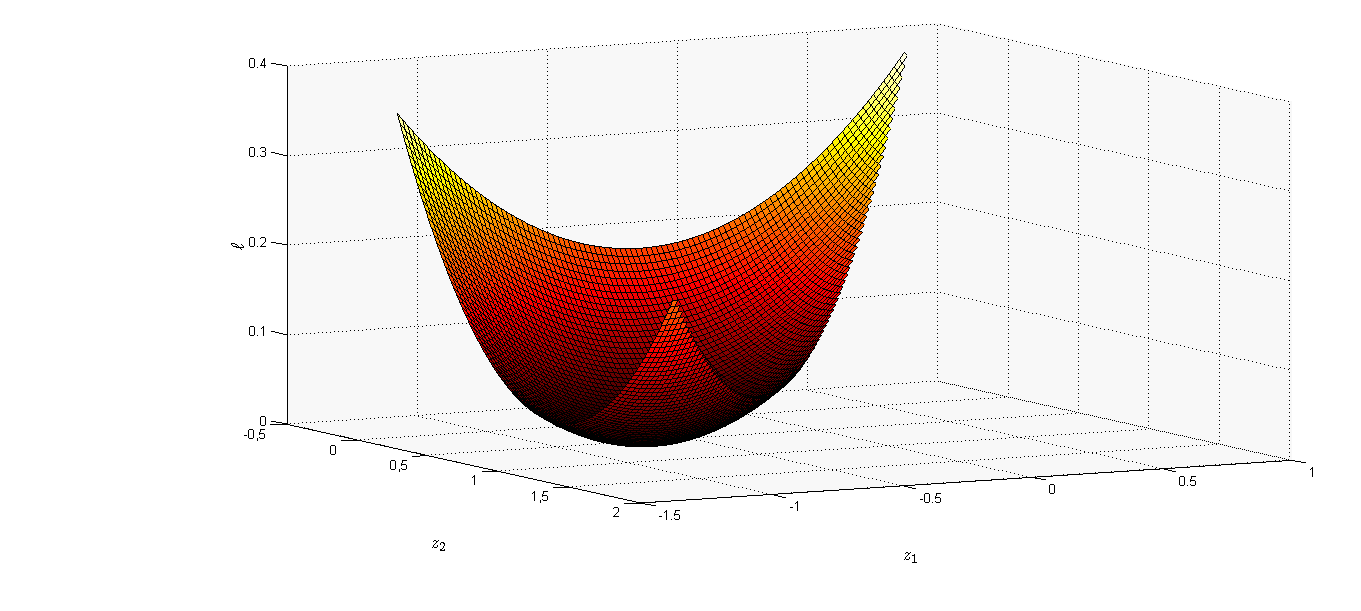}
\caption{{\footnotesize The function $\ell$ for different values
of $z$, corresponding to Example \ref{ex:three_agents_equil}.}}
\label{fig: distance}
\end{figure}
\end{exa}

\section{Extreme Risk Tolerance}\label{sec: extrema}

As discussed in \S \ref{subsubsec:marginal_util_based}, risk-tolerance coefficients are crucial factors in the gain or the loss caused by the game in each agent's utility. In this section, we investigate this issue closer by studying and comparing the Arrow-Debreu and Nash risk-sharing equilibria when agents' risk preferences approach risk neutrality, in the sense that risk tolerance approaches infinity. In order to focus on the economic interpretation of the results, we consider the simplified (but representative) case of two agents.

The analysis that follows examines two cases: firstly, when only one
agent becomes extremely risk tolerant and, secondly, when both agents'
risk tolerance coefficients uniformly approach infinity. Besides the
interest of this analysis in its own right, it also allows us to
substantiate the claim that highly
risk tolerant agents are the ones actually benefit from the risk-sharing game.

\subsection{One extremely risk tolerant agent} \label{subsec: limiting_one}

We start with the two-agent case $I = \set{0,1}$, wherein the risk aversion of only one agent approaches zero. We keep the
risk tolerance $\delta_1$ and subjective probability $\prob_1$ of agent 1
fixed. On the other hand, for agent $0$ we consider a sequence of risk tolerance coefficients $\pare{\delta_0^m}_{\mina}$ with the property that $\limm
\delta^m_0 = \infty$ and a fixed subjective probability $\prob_0$. In this set-up, Theorem \ref{thm: AD} and Theorem \ref{thm:nash_exist_uniq} state that for each $\mina$ there exist a unique
Arrow-Debreu equilibrium $\pare{\qpm, (\cpm_i)_{\iii}}$ and a
unique Nash equilibrium $\pare{\qgm, (\cgm_i)_{\iii}}$. We use agent 0 as the baseline and focus on the securities $\cpm_0$ and $\cgm_0$, since $\cpm_1 = -\cpm_0$ and $\cgm_1 = -\cgm_0$.

We first examine the limiting behaviour of the valuation rule and the securities in the Arrow-Debreu equilibrium transaction. For each $\mina$, from \eqref{eq: qp} we obtain that $\qpm \in \PP$ is such that
$\log \pare{\ud \qpk / \ud \prob_0} \sim \lambda_1^m \log \pare{\ud \prob_1 / \ud \prob_0}$. More precisely, we have
\begin{equation} \label{eq:qpm0}
\frac{\ud \qpk}{\ud \prob_0} = \expec_{\prob_0} \bra{\pare{\frac{\ud \prob_1}{\ud \prob_0}}^{ \lambda_1^m}}^{-1} \pare{\frac{\ud \prob_1}{\ud \prob_0}}^{ \lambda_1^m}.
\end{equation}
Given that $\limm \lambda_1^m = 0$, $\plimm \pare{\ud \qpk / \ud \prob_0} = 1$ readily follows from the dominated convergence theorem---in fact, with $\normtv{\cdot}$ denoting total variation norm, Scheffe's lemma implies that $\limm \normtv{\qpk - \prob_0} = 0$. Since, $\cpm_0 = - \cpm_1 = \delta_1 \log \pare{\ud \qpm / \ud \prob_1} - \delta_1 \ent(\qpk \such \prob_1)$ holds for all $\mina$ and $(\qpk)_{\mina}$ converges to  $\prob_0$, one expects that $\plimm \cpk_0 = \delta_1 \log(\ud \prob_0 / \ud \prob_1) - \delta_1 \ent(\prob_0 \such \prob_1)$. Clearly, for the previous limit to be valid the following (technical) assumption is \emph{necessary}.

\begin{ass} \label{ass:asymptotic}
$\ent(\prob_0 \such \prob_1) < \infty$.
\end{ass}

In \S \ref{subsec: proof_of_prop_limit_AD}, it is shown that the latter assumption is also sufficient for the validity of Proposition \ref{prop: limit_AD} below, giving the limiting valuation and security in Arrow-Debreu equilibrium, as well as the limiting gain of both agents.

\begin{prop}\label{prop: limit_AD}
Under the force of Assumption \ref{ass:asymptotic}, it holds that $\cpi_0 \dfn \plimm \cpk_0 = \delta_1 \log(\ud \prob_0 / \ud \prob_1) - \delta_1 \ent(\prob_0 \such \prob_1)$, $\limm \upm_0 = 0$ and $\limm \upm_1 = \delta_1 \ent(\prob_0 \such \prob_1)$.
\end{prop}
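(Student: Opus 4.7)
\bigskip

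\noindent\textbf{Proof proposal for Proposition \ref{prop: limit_AD}.}
The plan is to exploit the explicit formulas for $\qpm$ and for the equilibrium securities, convert everything to expectations with respect to the fixed measure $\prob_0$, and then pass to the limit as $\lambda_1^m \to 0$ using dominated convergence, with Assumption \ref{ass:asymptotic} providing the missing integrability.

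First I would record the baseline convergence $\plimm (\ud \qpk / \ud \prob_0) = 1$, which follows from \eqref{eq:qpm0} and the dominated-convergence argument already sketched in the text (the densities $(\ud \prob_1/\ud \prob_0)^{\lambda_1^m}$ are dominated by $1 + (\ud \prob_1/\ud \prob_0)$, a $\prob_0$-integrable function). By Scheffé's lemma this in fact gives $\qpk \to \prob_0$ in total variation, hence any uniformly bounded continuous functional of the density converges.

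Next I would use that $\cpm_0 = -\cpm_1$ and \eqref{eq: opt_contr} to write
\begin{equation*}
\cpm_0 = \delta_1 \log \pare{\ud \qpm / \ud \prob_1} - \delta_1 \ent(\qpm \such \prob_1).
\end{equation*}
From \eqref{eq:qpm0} one has the explicit identity $\log(\ud \qpm/\ud \prob_1) = \lambda_0^m \log(\ud \prob_0/\ud \prob_1) - \log \expec_{\prob_0}[(\ud \prob_1/\ud \prob_0)^{\lambda_1^m}]$. Since $\lambda_0^m \to 1$ and the logarithm of the normalising expectation tends to $\log 1 = 0$, the first summand converges in probability to $\delta_1 \log(\ud \prob_0/\ud \prob_1)$. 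For the entropy term, the same identity gives
\begin{equation*}
\ent(\qpm \such \prob_1) = \lambda_0^m \, \expec_{\qpm}\bra{\log(\ud \prob_0/\ud \prob_1)} - \log \expec_{\prob_0}\bra{(\ud \prob_1/\ud \prob_0)^{\lambda_1^m}}.
\end{equation*}
Here Assumption \ref{ass:asymptotic} is essential: it guarantees $\log(\ud \prob_0 / \ud \prob_1) \in \Lb^1(\prob_0)$, so that $\expec_{\qpm}[\log(\ud \prob_0/\ud \prob_1)] \to \expec_{\prob_0}[\log(\ud \prob_0/\ud \prob_1)] = \ent(\prob_0 \such \prob_1)$ by total-variation convergence plus uniform integrability (the density $\ud \qpm / \ud \prob_0$ is itself dominated by $1 + \ud \prob_1/\ud \prob_0$). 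Combining these pieces yields $\cpi_0$ as stated, and simultaneously $\upm_1 = \delta_1 \ent(\qpm \such \prob_1) \to \delta_1 \ent(\prob_0 \such \prob_1)$.

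The delicate point, and what I expect to be the main obstacle, is showing $\upm_0 = \delta_0^m \ent(\qpm \such \prob_0) \to 0$, because $\delta_0^m \to \infty$ while the entropy tends to zero, so a cancellation has to be extracted. Writing
\begin{equation*}
\ent(\qpm \such \prob_0) = \lambda_1^m \, \expec_{\qpm}\bra{\log(\ud \prob_1/\ud \prob_0)} - \log \expec_{\prob_0}\bra{(\ud \prob_1/\ud \prob_0)^{\lambda_1^m}}
\end{equation*}
and using $\delta_0^m \lambda_1^m = \delta_1 \lambda_0^m$, the first piece multiplied by $\delta_0^m$ converges to $-\delta_1 \ent(\prob_0 \such \prob_1)$. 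For the second piece, I would analyse the real function $f(t) \dfn \log \expec_{\prob_0}[(\ud \prob_1/\ud \prob_0)^t]$ near $t=0$: it is convex with $f(0)=0$ and one-sided derivative $f'(0^+) = \expec_{\prob_0}[\log(\ud \prob_1/\ud \prob_0)] = -\ent(\prob_0 \such \prob_1)$, the latter finite by Assumption \ref{ass:asymptotic}. Hence $f(\lambda_1^m)/\lambda_1^m \to -\ent(\prob_0 \such \prob_1)$, giving $\delta_0^m f(\lambda_1^m) = (\delta_1 \lambda_0^m) f(\lambda_1^m)/\lambda_1^m \to -\delta_1 \ent(\prob_0 \such \prob_1)$, so the two contributions cancel and $\upm_0 \to 0$. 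The main technicality is justifying the differentiability of $f$ at $0^+$ from the right with the claimed derivative under only Assumption \ref{ass:asymptotic}; a convexity argument combined with monotone convergence applied to $(\log(\ud \prob_1/\ud \prob_0))_+$ and $(\log(\ud \prob_1/\ud \prob_0))_-$ should handle this.
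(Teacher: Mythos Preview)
Your argument is correct in outline and reaches the right conclusions, but it diverges from the paper's proof in two places, and one of your justifications is a bit loose as written.

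For the convergence $\ent(\qpm \such \prob_1) \to \ent(\prob_0 \such \prob_1)$ the paper works under $\prob_1$ rather than $\prob_0$: setting $Z = \ud \prob_0/\ud \prob_1$ and $\phi(x)=x\log x$, it writes $\ent(\qpm \such \prob_1) = \big(\expec_{\prob_1}[\phi(Z^{\lambda_0^m})] - \phi(\expec_{\prob_1}[Z^{\lambda_0^m}])\big)/\expec_{\prob_1}[Z^{\lambda_0^m}]$ and passes to the limit by dominated convergence, with Assumption~\ref{ass:asymptotic} exactly saying $\phi(Z)\in \Lb^1(\prob_1)$. Your route via $\expec_{\qpm}[\log(\ud\prob_0/\ud\prob_1)]$ is also valid, but note that the domination you cite for the density, $\ud\qpm/\ud\prob_0 \leq c(1+\ud\prob_1/\ud\prob_0)$, does \emph{not} by itself give an integrable bound for the product with $|\log(\ud\prob_0/\ud\prob_1)|$: on $\{\ud\prob_1/\ud\prob_0>1\}$ you would need $\expec_{\prob_0}[(\ud\prob_1/\ud\prob_0)\log(\ud\prob_1/\ud\prob_0)]<\infty$, i.e.\ $\ent(\prob_1\such\prob_0)<\infty$, which is not assumed. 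The fix is easy: use instead the sharper bound $\ud\qpm/\ud\prob_0 = c_m^{-1}(\ud\prob_1/\ud\prob_0)^{\lambda_1^m}$ directly, so that on that set, for $\lambda_1^m<1/2$, $(\ud\prob_1/\ud\prob_0)^{\lambda_1^m}\log(\ud\prob_1/\ud\prob_0) \leq c\,\ud\prob_1/\ud\prob_0 \in \Lb^1(\prob_0)$.

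For $\upm_0 \to 0$ the paper takes a much shorter path than your cancellation of the two $-\delta_1\ent(\prob_0\such\prob_1)$ contributions. It simply sandwiches: Jensen's inequality gives $\upm_0 = \U_0(\cpm_0) \leq \expec_{\prob_0}[\cpm_0]$, a uniform bound $|\cpm_0| \leq c(1+|\log(\ud\prob_0/\ud\prob_1)|)$ together with Assumption~\ref{ass:asymptotic} allows dominated convergence to get $\limsup_m \upm_0 \leq \expec_{\prob_0}[\cpi_0]=0$, and $\upm_0 \geq 0$ is automatic. This completely sidesteps the analysis of the right-derivative of $t\mapsto \log\expec_{\prob_0}[(\ud\prob_1/\ud\prob_0)^t]$ at $0$ that you correctly flag as the main technicality of your approach. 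Your computation is valid (the convexity/MVT argument you sketch does go through under Assumption~\ref{ass:asymptotic}), but the paper's sandwich is more economical and avoids that detour entirely.
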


It is indeed expected that the utility gain of a nearly risk neutral agent is almost zero. To see this, compare the limiting valuation measure, which is $\prob_0$, with the limiting utility of agent 0, which is linear expectation with respect to $\prob_0$. On the other hand, the only case where there is no limiting utility gain for agent $1$ is when the two agents' subjective beliefs coincide.

\smallskip

We now turn to Nash risk-sharing equilibrium. From \eqref{eq:
C_two_agent}, we obtain
\[
\cgm_0 + \delta_1 \lambda^m_0 \log \pare{ \frac{1 + \cgm_0 /
\delta_1}{1 - \cgm_0 / \delta^m_0}} = \zgm_0 + \cpm_0, \quad
\forall \mina.
\]
Accepting that the sequence $\pare{\zgm_0}_{\mina}$ converges in
$\Real$ and $\pare{\cgm_0}_{\mina}$ converges in $\lz$ (these
conjectures actually have to be proved as part of Theorem
\ref{thm:limiting_nash_one} below), and given that $\limm
\delta^m_0 = \infty$, $\limm \lambda_0^m = 1$, and $\plimm \cpm_0
= \cpi_0$, the limiting security $\cgi_0 \dfn \plimm \cgm_0$
should satisfy $\cgi_0 + \delta_1 \log \pare{ 1 + \cgm_0 /
\delta_1} = \zgi_0 + \cpi_0$, where $\zgi_0 \dfn \limm \zgm_0$. This
heuristic discussion gives a method to compute the limit. For any
$z \in \Real$, define the random variable $C^{\infty}_0(z)$
satisfying the equation
\begin{equation}\label{eq: limiting security}
C^{\infty}_0(z) + \delta_1 \log \pare{1 +
\frac{C^{\infty}_0(z)}{\delta_1}} = z + \cpi_0.
\end{equation}
Since the function $(- 1, \infty) \ni x \mapsto x + \log \pare{1 +
 x}$ is strictly increasing and continuous and maps $(-1, \infty)$ to $(- \infty, \infty)$, it follows that $C^{\infty}_0(z)$ is a well defined $(- \delta_1, \infty)$-valued random variable for all $z \in \Real$. Then, we should have $\cgi_0 = C^\infty_0 (\zgi_0)$. Although $\zgi_0$ is given as the limit of $\pare{\zgm_0}_{\mina}$, we may actually identify \emph{a priori} what its value will be. To make headway, note that
 from \eqref{eq:equil_Q}
\[
\log \pare{\ud \qgm / \ud \qpm} \sim - \lambda^m_0 \log(1 + \cgm_0
/ \delta_1) - \lambda^m_1 \log(1 - \cgm_0 / \delta^m_0), \quad
\forall \mina,
\]
and the fact that $\limm \normtv{\qpk - \prob_0} = 0$, the limiting Nash valuation probability $\qgi$ should be such that
$\log \pare{\ud \qgi / \ud \prob_0} \sim -\log(1 + \cgi_0 /
\delta_1)$; since $\expec_{\qgi} \bra{\cgi_0} = 0$ is expected to
hold at the limit, we obtain actually that $\expec_{\prob_0} \big[
\pare{1 + \cgi_0 / \delta_1}^{-1} \big] = 1$ would have to be
satisfied. The next result, the proof of which is given in \S
\ref{subsec: proof_of_lem:zgi}, ensures that a unique such
candidate $\zgi_0 \in \Real$ exists.

\begin{lem}\label{lem: zgi}
In the notation of \eqref{eq: limiting security},
there exists a unique $\zgi_0 \in \Real$ satisfying the equality $\expec_{\prob_0} \big[  \pare{1 + C^{\infty}_0(\zgi_0) / \delta_1}^{-1} \big] = 1$.
\end{lem}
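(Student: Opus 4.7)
The plan is to show that the function $F : \Real \to (0, \infty]$ defined by $F(z) \dfn \expec_{\prob_0} \bra{(1 + C^\infty_0(z)/\delta_1)^{-1}}$ is everywhere finite, continuous, strictly decreasing, with $\lim_{z \to -\infty} F(z) = \infty$ and $\lim_{z \to +\infty} F(z) = 0$. Once this is established, the intermediate value theorem delivers a unique root of $F(z) = 1$, which is the $\zgi_0$ required by the lemma.

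The main technical step is to set up a workable one-dimensional representation. I would introduce $Y(z) \dfn 1 + C^\infty_0(z)/\delta_1$, so that \eqref{eq: limiting security} rewrites as $Y(z) + \log Y(z) = 1 + (z+\cpi_0)/\delta_1$. Since $(0,\infty) \ni y \mapsto y + \log y$ is a continuous strictly increasing bijection onto $\Real$, the random variable $Y(z)$ is well defined, strictly positive, strictly increasing and continuous in $z$ pointwise almost surely. The key pointwise estimate is: whenever $Y(z) \leq 1$, the identity gives $\log Y(z) = 1 + (z+\cpi_0)/\delta_1 - Y(z) \geq (z+\cpi_0)/\delta_1$, hence $Y(z) \geq e^{(z+\cpi_0)/\delta_1}$; combined with the trivial bound $Y(z)^{-1} \leq 1$ when $Y(z) \geq 1$, this yields
\[
Y(z)^{-1} \leq 1 + e^{-(z+\cpi_0)/\delta_1}, \qquad \prob_0\text{-a.s.}
\]
Taking $\prob_0$-expectations and using $\cpi_0 = \delta_1 \log(\ud \prob_0 / \ud \prob_1) - \delta_1 \ent(\prob_0 \such \prob_1)$, one computes $\expec_{\prob_0}\bra{e^{-\cpi_0/\delta_1}} = e^{\ent(\prob_0 \such \prob_1)} \expec_{\prob_0}\bra{\ud \prob_1 / \ud \prob_0} = e^{\ent(\prob_0 \such \prob_1)}$, which is finite by Assumption \ref{ass:asymptotic}. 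Therefore $F(z) \leq 1 + e^{-z/\delta_1} e^{\ent(\prob_0 \such \prob_1)} < \infty$ for every $z \in \Real$.

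With $F$ finite, the remaining properties are essentially soft. Strict monotonicity of $F$ is immediate from pointwise strict monotonicity of $Y(z)$ in $z$ (together with $\prob_0 \in \PP$). For continuity, the uniform-in-$z$ domination $Y(z)^{-1} \leq 1 + e^{-(z'+\cpi_0)/\delta_1}$, valid on any interval $z \geq z'$, permits dominated convergence along any convergent sequence $z_n \to z$. For the limit as $z \to +\infty$, $Y(z) \to \infty$ $\prob_0$-a.s.\ and dominated convergence (with majorant corresponding to a fixed $z_0$) gives $F(z) \to 0$. For the limit as $z \to -\infty$, $Y(z) \downarrow 0$ $\prob_0$-a.s., so $Y(z)^{-1} \uparrow \infty$, and monotone convergence gives $F(z) \uparrow \infty$.

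The only place where any real work occurs is the a priori bound ensuring $F$ is finite; the role of Assumption \ref{ass:asymptotic} is precisely to secure this, via the explicit computation $\expec_{\prob_0}\bra{e^{-\cpi_0/\delta_1}} = e^{\ent(\prob_0 \such \prob_1)}$. Once finiteness is in hand, the existence and uniqueness of $\zgi_0$ follows by a direct intermediate-value and strict-monotonicity argument.
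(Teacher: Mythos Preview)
Your proof is correct and follows essentially the same approach as the paper's: introduce $Y(z) = 1 + C^\infty_0(z)/\delta_1$ (the paper calls this $D^\infty_0(z)$), split into the events $\{Y(z) \leq 1\}$ and $\{Y(z) > 1\}$ to obtain the domination $Y(z)^{-1} \leq 1 \vee e^{-(z+\cpi_0)/\delta_1}$ (you use the equivalent sum bound), invoke Assumption \ref{ass:asymptotic} to get integrability under $\prob_0$, and then conclude by monotonicity, continuity via dominated convergence, and the limits at $\pm\infty$. Your write-up is slightly more explicit about the limits and the role of monotone versus dominated convergence, but the argument is the same.
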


Before we state our main result on the limiting behaviour of Nash
equilibrium, we make a final observation. Recall from
\eqref{eq:reported_nash_agent} that $\log \pare{ \ud \rgm_1 / \ud \prob_1} \sim - \log \pare{1 -  \cgm_0 / \delta^m_0}$ holds for all $\mina$. Since
$\limm \delta_0^m = \infty$ and, as it turns out,
$\pare{\cgm_0}_{\mina}$ is convergent, the revealed subjective probability
$\rgm_1$ of agent $1$ when $m$ is large is very close to the
actual $\prob_1$. (There is an alternative way to obtain the same intuition. From \eqref{eq:lower_bound_dens}, note that $\ud \rgm_1 / \ud \prob_1 \geq \lambda_0^m$ holds for all $\mina$. Since $\limm \lambda_0^m = 1$ and $(\ud \rgm_1 / \ud \prob_1)_{\mina}$ has constant unit expectation under $\prob_1$, $\pare{\rgm_1}_{\mina}$ has to converge to $\prob_1$.) This suggests the same asymptotic behaviour in the case discussed in \S \ref{subsec:market_power}, where only agent 0 acts strategically as indicated by the best probability response, while agent 1 reports true subjective beliefs $\prob_1$.
Indeed, the following result, whose proof is given in
\S \ref{subsec: proof_of_limiting_nash_one}, implies that the
limiting security structure is the same, regardless of whether the
risk-averse agent 1 enters in the game or simply reports true
subjective beliefs (in which case, only the approximately risk neutral agent behaves strategically).

\begin{thm} \label{thm:limiting_nash_one}
With the previous notation (in particular, of Lemma \ref{lem:
zgi}), it holds that
\[
\cgi_0 \dfn \plimm \cgm_0 = C^{\infty}_0(\zgi_0) = \plimm \cbom.
\]
\end{thm}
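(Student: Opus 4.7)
The plan is to parametrise Nash equilibria via the constant $\zgm_0 \in \Real$ from the characterisation of Theorem \ref{thm: nash}, reduce to a subsequential argument, and invoke the uniqueness in Lemma \ref{lem: zgi} to identify the limit. The a priori bounds of \S\ref{subsubsec: z in compact set} give $\zgm_0 \in [-\upm_0, \upm_1]$ for every $\mina$; since $\upm_0 \to 0$ and $\upm_1 \to \delta_1 \ent(\prob_0 \such \prob_1) < \infty$ by Proposition \ref{prop: limit_AD} (invoking Assumption \ref{ass:asymptotic}), the sequence $(\zgm_0)_{\mina}$ is bounded. To establish $\plimm \cgm_0 = C^\infty_0(\zgi_0)$, it suffices to show that every subsequence admits a further subsequence along which $\cgm_0 \to C^\infty_0(\zgi_0)$ $\prob_0$-a.s. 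Fix such a subsequence and extract a further subsequence, reindexing by $m$, along which $\zgm_0 \to z^* \in \Real$, $\cpm_0 \to \cpi_0$ $\prob_0$-a.s., and $\ud \qpm / \ud \prob_0 \to 1$ $\prob_0$-a.s.\ (the last invoking Scheffe's lemma applied to \eqref{eq:qpm0}).

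Rewrite the Nash equation \eqref{eq: C_two_agent} as $f_m(\cgm_0) = \zgm_0 + \cpm_0$, where
\[
f_m(x) \dfn x + \delta_1 \lambda_0^m \log \pare{1 + \frac{x}{\delta_1}} - \delta_1 \lambda_1^m \log \pare{1 - \frac{x}{\delta_0^m}}, \qquad x \in (-\delta_1, \delta_0^m).
\]
Each $f_m$ is strictly increasing and continuous with range $\Real$, and $f_m \to f_\infty$ uniformly on compact subsets of $(-\delta_1, \infty)$, with $f_\infty(x) \dfn x + \delta_1 \log(1 + x/\delta_1)$ itself strictly increasing and surjective onto $\Real$. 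A standard monotone-inverse argument yields $\cgm_0 \to C^\infty_0(z^*)$ $\prob_0$-a.s. To identify $z^*$, pass the market-clearing condition $\expec_{\qgm}[\cgm_0] = 0$ to the limit: using \eqref{eq:equil_Q}, the pointwise limits established above, and the fact that $\lambda_1^m \to 0$ trivialises the factor $(1 - \cgm_0/\delta_0^m)^{-\lambda_1^m} \to 1$, one obtains $\expec_{\prob_0} \bra{ \pare{1 + C^\infty_0(z^*)/\delta_1}^{-1} } = 1$. Lemma \ref{lem: zgi} then forces $z^* = \zgi_0$.

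The equality $\plimm \cbom = C^\infty_0(\zgi_0)$ follows by applying the same machinery to the best-response problem of \S\ref{subsec:market_power} with $\rprob_1 = \prob_1$. Proposition \ref{prop: best_response_first_ord} with $i = 0$, after multiplication by $\delta_0^m$ and use of $\delta_0^m \lambda_1^m = \delta_1 \lambda_0^m$, gives
\[
\cbom + \delta_1 \lambda_0^m \log \pare{1 + \frac{\cbom}{\delta_1}} \sim \delta_1 \lambda_0^m \log \pare{\frac{\ud\prob_0}{\ud \prob_1}},
\]
whose subsequential limit $\cboi$ satisfies $\cboi + \delta_1 \log(1 + \cboi/\delta_1) \sim \cpi_0$; the zero-mean normalisation of $\cbom$ under the associated valuation measure, combined with the latter's convergence to the probability on $(\Omega,\F)$ with density proportional to $(1 + \cboi/\delta_1)^{-1}$ with respect to $\prob_0$, reduces in the limit to $\expec_{\prob_0} \bra{(1 + \cboi/\delta_1)^{-1}} = 1$, so Lemma \ref{lem: zgi} again yields $\cboi = C^\infty_0(\zgi_0)$. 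The principal obstacle is the passage to the limit in the market-clearing expectation in the Nash setting, because the measures $\qgm$ vary with $m$ and $\cgm_0$ is only one-sidedly bounded from below by $-\delta_1$; the required uniform integrability is secured via the identity $\expec_{\qgm}[\cgm_0 + \delta_1] = \delta_1$ together with Scheffe's lemma applied to the convergent probability densities.
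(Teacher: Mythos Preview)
Your overall architecture is correct and matches the paper's: parametrise by $\zgm_0$, show boundedness, pass to subsequences, identify the limit via Lemma \ref{lem: zgi}. Your boundedness argument for $(\zgm_0)_{\mina}$ via the a priori bounds $\zgm_0 \in [-\upm_0, \upm_1]$ of \S\ref{subsubsec: z in compact set} combined with Proposition \ref{prop: limit_AD} is in fact cleaner than the paper's direct argument in its Lemma on bounding $(\zgm_0)$.

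However, the step you flag as ``the principal obstacle'' is not actually closed. Your appeal to Scheffe's lemma is circular: Scheffe requires the pointwise limit of the probability densities $\ud\qgm/\ud\prob_0$ to itself be a probability density, i.e., that $\expec_{\prob_0}\big[(1+C^\infty_0(z^*)/\delta_1)^{-1}\big]=1$---which is precisely the conclusion you are trying to reach. What you can get for free is the inequality $\leq 1$ via Fatou; the reverse direction needs a uniform domination of $(1/\dgm_0)_{\mina}$, where $\dgm_0 = 1 + \cgm_0/\delta_1$. The paper obtains this by observing that on $\{\dgm_0 \leq 1\}$ the Nash equation $f_m(\cgm_0) = \zgm_0 + \cpm_0$ forces $\delta_1\lambda_0^m \log \dgm_0 \geq \zgm_0 + \cpm_0$ (the other two terms being nonpositive there), and then uses the explicit form of $\cpm_0$ together with boundedness of $(\zgm_0)$ to produce a bound $\log(1/\dgm_0) \leq a + \log(\ud\prob_1/\ud\prob_0)$; this yields an integrable majorant and permits dominated convergence. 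You have all the ingredients (boundedness of $\zgm_0$, the equation, the form of $\cpm_0$), but you do not carry out this step, and the Scheffe shortcut does not substitute for it.

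For the best-response half, there is a second gap: the bounds of \S\ref{subsubsec: z in compact set} apply only to Nash equilibria, not to the one-sided best response of \S\ref{subsec:market_power}, so you have not established boundedness of the constants $(\zbm)_{\mina}$ governing $\cbom$. The paper handles this separately by a covariance argument showing $\expec_{\prob_1}[\dbm] \leq 1$ (hence boundedness from above) and then ruling out $\zbm \to -\infty$ via dominated convergence; you would need something comparable before invoking the monotone-inverse limit.
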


The equality of the limits of $\pare{\cgm_0}_{\mina}$ and
$\pare{\cbom}_{\mina}$ implies that the strategic behaviour of a
risk neutral agent \textit{dominates} the risk sharing
transaction. Intuitively, agents
with high risk tolerance are willing to undertake more risk at the
sharing transaction in return of a higher cash compensation. Thus, at
the limit, the risk neutral agent satisfies the
reported hedging needs of other agents, but achieves better prices by applying the best response strategy. On the
other hand, for the risk averse agent the risk reduction
is more important than a higher price to be paid. As a result,
at the equilibrium the risk averse agent prefers to submit true beliefs,
even though this results in a higher price to be paid to the risk neutral
agent. The situation is totally different in an Arrow-Debreu
equilibrium transaction, where agents act basically as price
takers and the securities and prices are determined by the
efficiency of the transaction.

\smallskip

We argued in Subsection \ref{subsec: in Nash} that in any
risk-transfer situation the Nash equilibrium incurs some loss of
efficiency. Although the aggregate utility is reduced in Nash
equilibrium when compared with the Arrow-Debreu one, certain
agents may obtain higher utility gain in risk-sharing games. In
particular, Proposition \ref{prop: limiting_gain_game} below (the
proof of which is given in \S \ref{subsec:
proof_of_prop_limiting_gain_game}) demonstrates that the agent
with sufficiently high risk tolerance enjoys higher utility at
Nash equilibrium transaction than the utility at the Arrow-Debreu
equilibrium sharing.

\begin{prop} \label{prop: limiting_gain_game}
Define $\qgi \in \PP$ such that $\ud \qgi / \ud \prob_0 = \pare{1 +
\cgi_0 / \delta_1}^{-1}$. Then:
\begin{align*}
\limm \pare{\ugm_0 - \upm_0} &= \ \ \frac{1}{\delta_1} \var_{\qgi}
    \pare{\cgi_0}, \\
\limm \pare{\ugm_1 - \upm_1} &=  - \frac{1}{\delta_1} \var_{\qgi}
\pare{\cgi_0} - \delta_1 \ent \big( \prob_0 \such \qgi \big).
\end{align*}
\end{prop}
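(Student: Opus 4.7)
My plan is to tie both limits together through the intermediary $\zgi_0$, evaluate $\zgi_0$ explicitly from the defining equation \eqref{eq: limiting security}, and then pass to the limit in each agent's utility separately. Integrating \eqref{eq: limiting security} against $\prob_0$ gives $\expec_{\prob_0}[\cgi_0] + \delta_1 \expec_{\prob_0}[\log(1 + \cgi_0/\delta_1)] = \zgi_0 + \expec_{\prob_0}[\cpi_0]$; the explicit form of $\cpi_0$ from Proposition \ref{prop: limit_AD} yields $\expec_{\prob_0}[\cpi_0] = 0$. Using $\ud\qgi/\ud\prob_0 = (1 + \cgi_0/\delta_1)^{-1}$ together with the normalisation $\expec_{\qgi}[\cgi_0] = 0$ from Lemma \ref{lem: zgi}, a change-of-measure computation gives $\expec_{\prob_0}[\cgi_0] = \expec_{\qgi}[\cgi_0(1 + \cgi_0/\delta_1)] = \var_{\qgi}(\cgi_0)/\delta_1$, while $\expec_{\prob_0}[\log(1 + \cgi_0/\delta_1)] = \ent(\prob_0 \such \qgi)$. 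Altogether,
\[
\zgi_0 \;=\; \frac{1}{\delta_1}\var_{\qgi}(\cgi_0) \;+\; \delta_1 \ent(\prob_0 \such \qgi).
\]

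For agent $1$, I rearrange \eqref{eq: limiting security} pointwise to obtain $\exp(\cgi_0/\delta_1) = (\ud\qgi/\ud\prob_1)\exp\bigl((\zgi_0 - \delta_1 \ent(\prob_0 \such \prob_1))/\delta_1\bigr)$, so integration against $\prob_1$ yields $\expec_{\prob_1}[\exp(\cgi_0/\delta_1)] = \exp\bigl((\zgi_0 - \delta_1 \ent(\prob_0 \such \prob_1))/\delta_1\bigr)$. Since $\ugm_1 = -\delta_1 \log \expec_{\prob_1}[\exp(\cgm_0/\delta_1)]$ and $\cgm_0 \to \cgi_0$ in probability by Theorem \ref{thm:limiting_nash_one}, justifying passage to the limit inside $\expec_{\prob_1}[\cdot]$ (see obstacle below) produces $\limm \ugm_1 = -\zgi_0 + \delta_1 \ent(\prob_0 \such \prob_1)$; combined with $\limm \upm_1 = \delta_1 \ent(\prob_0 \such \prob_1)$ from Proposition \ref{prop: limit_AD} and the formula for $\zgi_0$, this gives the second claim. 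For agent $0$, Proposition \ref{prop: limit_AD} directly gives $\limm \upm_0 = 0$, so it remains to show $\limm \ugm_0 = \expec_{\prob_0}[\cgi_0] = \var_{\qgi}(\cgi_0)/\delta_1$. The endogenous bounds $-\delta_1 < \cgm_0 < \delta_0^m$ from \S\ref{subsubsec:endog_bounds} keep $\exp(-\cgm_0/\delta_0^m)$ in $[e^{-1}, e]$ for $m$ large enough, and a second-order Taylor expansion $e^{-x} = 1 - x + (x^2/2) e^{-\xi}$ at $x = \cgm_0/\delta_0^m$ reduces $-\delta_0^m \log \expec_{\prob_0}[\exp(-\cgm_0/\delta_0^m)]$ to $\expec_{\prob_0}[\cgm_0]$ plus a quadratic remainder of order $1/\delta_0^m$.

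The main obstacle is the interchange of limit and expectation for the unbounded-above quantities $\exp(\cgm_0/\delta_1)$ (for agent $1$) and $\cgm_0$ (for agent $0$). For agent $1$, Fatou's lemma gives $\expec_{\prob_1}[\exp(\cgi_0/\delta_1)] \leq \liminf_m \expec_{\prob_1}[\exp(\cgm_0/\delta_1)]$; the explicit identification of $\expec_{\prob_1}[\exp(\cgi_0/\delta_1)]$ obtained above, paired with the convergence $\rgm_1 \to \prob_1$ noted in \S\ref{subsec: limiting_one} (and the identity $\ud\rgm_1/\ud\prob_1 = \exp(\cgm_0/\delta_1)/\expec_{\prob_1}[\exp(\cgm_0/\delta_1)]$), should pin down the matching limsup, upgrading convergence in probability to $L^1(\prob_1)$ convergence via Scheffé's lemma. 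For agent $0$, controlling the quadratic remainder after multiplication by $\delta_0^m$ requires a uniform bound of the form $\expec_{\prob_0}[(\cgm_0)^2]/\delta_0^m \to 0$, which in turn calls for uniform integrability of $(\cgm_0)_{\mina}$ under $\prob_0$---this is precisely the kind of a priori estimate that must already be handled within the proof of Theorem \ref{thm:limiting_nash_one}, and it is taken here as available.
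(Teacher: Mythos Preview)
Your computation of $\zgi_0$ and the pointwise rearrangement leading to $\expec_{\prob_1}[\exp(\cgi_0/\delta_1)] = \exp\bigl((\zgi_0 - \delta_1\ent(\prob_0\such\prob_1))/\delta_1\bigr)$ are both correct, and the overall strategy is sound. However, the justification of the limit interchange for agent~$1$ has a genuine error: the density identity you invoke is wrong. From \eqref{eq:reported_nash_agent} one has $\log(\ud\rgm_1/\ud\prob_1) \sim -\log(1+\cgm_1/\delta_0^m) = -\log(1-\cgm_0/\delta_0^m)$, so $\ud\rgm_1/\ud\prob_1$ is proportional to $(1-\cgm_0/\delta_0^m)^{-1}$, \emph{not} to $\exp(\cgm_0/\delta_1)$. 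Consequently the convergence $\rgm_1\to\prob_1$ in total variation gives no direct information about $\expec_{\prob_1}[\exp(\cgm_0/\delta_1)]$, and the Scheff\'e route collapses. The interchange can in fact be rescued by the domination $\cgm_0 \leq \zgm_0 + \cpm_0$ on $\{\cgm_0>0\}$ (immediate from \eqref{eq: C_two_agent} since the logarithmic term is positive there), which via \eqref{eq:cgm0_estimate} yields $\exp(\cgm_0/\delta_1) \leq \text{const}\cdot\max\bigl(1,\ud\prob_0/\ud\prob_1\bigr) \in \Lb^1(\prob_1)$ and hence dominated convergence. The paper bypasses this entirely: once the agent-$0$ limit is known, it uses the exact decomposition \eqref{eq:loss_decomp} in the form $\zgm_0 = \lambda_0^m(\upm_1-\ugm_1) - \lambda_1^m(\upm_0-\ugm_0)$ together with $\zgm_0\to\zgi_0$ (established inside the proof of Theorem~\ref{thm:limiting_nash_one}) to read off $\upm_1-\ugm_1\to\zgi_0$ directly.

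For agent~$0$, your Taylor approach is viable but the appeal to ``a~priori estimates handled within the proof of Theorem~\ref{thm:limiting_nash_one}'' is inaccurate: that proof establishes only $\Lb^0$-boundedness and convergence in probability of $(\cgm_0)_{\mina}$, never an $\Lb^1(\prob_0)$ bound, let alone the second-moment control $\expec_{\prob_0}[(\cgm_0)^2]/\delta_0^m\to 0$ you need. The missing ingredient is again the pointwise bound $\cgm_0 \leq \zgm_0 + \cpm_0$ on $\{\cgm_0>0\}$, which combined with $\cgm_0 > -\delta_1$ gives a single $\Lb^1(\prob_0)$ majorant $k+\log_+(\ud\prob_0/\ud\prob_1)$; from this and $\cgm_0/\delta_0^m<1$ one gets $(\cgm_0)^2/\delta_0^m \leq (\cgm_0)_+ + \delta_1^2/\delta_0^m$, which is dominated and goes to zero pointwise, so the remainder is handled. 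The paper's argument for agent~$0$ uses the same domination but avoids Taylor expansion altogether: the upper bound $\ugm_0\leq\expec_{\prob_0}[\cgm_0]$ plus reverse Fatou gives $\limsup\ugm_0\leq\expec_{\prob_0}[\cgi_0]$, while for any fixed $k$ one has $\ugm_0\geq -k\log\expec_{\prob_0}[\exp(-\cgm_0/k)]\to -k\log\expec_{\prob_0}[\exp(-\cgi_0/k)]$ by dominated convergence (using only $-\cgm_0\leq\delta_1$), and then $k\to\infty$ finishes the squeeze.
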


The limiting loss for the risk averse agent comes from two sides. The first is $(1 / \delta_1) \var_{\qgi} \pare{\cgi_0}$, which is the limiting gain
of agent $0$. The remaining quantity $\delta_1 \ent \big( \qpi |
\qgi \big)$ is in fact the loss from the applied strategic behaviour as opposed
to sharing in a Pareto optimal way. Both terms are strictly
positive as long as $\cgi_0$ is not identically equal to zero.

The message of Proposition \ref{prop: limiting_gain_game} is
clear. The introduction of strategic behaviour allows agents with
high risk tolerance to achieve better prices that the more risk
averse agents are willing to pay in order to achieve risk reduction. In contrast to the Arrow-Debreu equilibrium where prices are given by the optimal sharing measure, agents with sufficiently high risk tolerance are willing to accept
more risk in the Nash game, since their strategy drives the
market to better cash compensation for them. In fact, a more risk
averse agent not only tends to undertake all the efficiency loss caused by the game, but also fuels the utility gain of the (sufficiently) risk tolerant counterparty.

Recalling the discussion and notation of \S
\ref{subsubsec:marginal_util_based}, we may offer some more detailed
comments. From \eqref{eq:indifference_val_nash} and Proposition
\ref{prop: limiting_gain_game}, it follows that the marginal
valuation measure of agent 0 approaches the limiting optimal
valuation measure $\qpi$. This implies that, for large enough
$\mina$, the security that agent 0 gets in Nash
equilibrium is undervalued---indeed, note that $\expec_{\qpi} \bra{\cgi_0}
= \expec_{\qgi} \bra{\cgi_0 (1 + \cgi_0 / \delta_1)} = (1 /
\delta_1) \var_{\qgi}
\pare{\cgi_0}$. According to \eqref{eq:nash_gain_ent_indiv} and
the discussion that follows, we readily get that utility of agent 0 is increased. For the risk averse agent, the situation is different. From \eqref{eq:qg decomp}, it follows that $\qgm_1$ will be close to $\qgm$ for
large $\mina$, which in turn will be close to $\qgi$. Hence, for
large enough $\mina$, the security received by agent 1 in
Nash equilibrium is overvalued; on top of this, agent 1 also carries all the risk-sharing inefficiency of Nash equilibrium.

\subsection{Both agents being extremely risk tolerant} 

We have seen above that the strategic behaviour of a high risk tolerant agent dominates the Nash game and drives the market to his preferable transaction, regardless the actions of the other agent. Here, we shall examine what happens to the equilibria when both agents approach risk neutrality at the same speed. More precisely, we fix $\lambda_0 \in (0,1)$ and $\lambda_1 \in (0,1)$ with $\lambda_0 + \lambda_1 = 1$ and consider a non-decreasing sequence $\pare{\delta^m}_{\mina}$ such that $\limm \delta^m = \infty$. Define $\delta^m_i \dfn \lambda_i \delta^m$ for all $\mina$ and $i \in \set{0,1}$. In contrast to the set-up of \S \ref{subsec: limiting_one}, here the subjective beliefs of the agents will have to depend on $\mina$. To obtain intuition on why and how the subjective probabilities have to behave, note that according to Theorem \ref{thm: AD}, for all $\mina$ the security $\cpm_0$ is given as a multiple of $\delta_0^m$ of a random variable whose dependence on risk tolerance comes only through $\lambda_0$ and $\lambda_1$. Since the latter weights are fixed for each $\mina$, in order to guarantee that the securities in Arrow-Debreu equilibrium have a well-behaved limit, we make the following assumption.

\begin{ass} \label{ass:asymptotic_both}
For $i \in \set{0,1}$, there exists $\xi_i \in \li$ such that $\expecp \bra{\xi_i} = 0$ and
\[
\log \pare{\frac{\ud \prob_i^m}{\ud \prob}} \sim \frac{\xi_i}{\delta_i^m}, \quad  i \in \set{0,1}, \quad \mina.
\]
\end{ass}
Note that condition $\expecp \bra{\xi_i} = 0$ for $i \in \set{0,1}$ appearing in Assumption \ref{ass:asymptotic_both} is just a normalisation, and does not constitute any loss of generality.

\begin{thm} \label{thm:limiting_contract_both}
In the above set-up, and under Assumption \ref{ass:asymptotic_both},
the sequences $\pare{\cpm_0}_{\mina}$ and
$\left(\cgm_0\right)_{\mina}$ converge in $\Lb^0$ to limiting
securities $\cpi_0$ and $\cgi_0$, where
\[
    \cpi_0 = \lambda_1 \xi_0 - \lambda_0 \xi_1, \quad \cgi_0 = \frac{\lambda_1}{2} \xi_0 - \frac{\lambda_0}{2} \xi_1 = \frac{ \cpi_0}{2}.
\]
\end{thm}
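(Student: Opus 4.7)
The approach is to exploit the explicit structure provided by Assumption \ref{ass:asymptotic_both} to compute the Arrow-Debreu limit directly, and then to leverage the scalar implicit equation \eqref{eq: C_two_agent} to pass the Nash equilibrium to the limit. Since the data $(\xi_0, \xi_1)$ are essentially bounded, all the relevant convergences can be upgraded from $\lz$ to $\li$, greatly simplifying the analysis.

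For the Arrow-Debreu part, I start from \eqref{eq: qp}, which yields $\log(\ud \qpm / \ud \prob) \sim (\xi_0 + \xi_1)/\delta^m$; a second-order Taylor expansion of $\log \expec_\prob \bra{\exp((\xi_0+\xi_1)/\delta^m)}$ (using $\expec_\prob[\xi_0+\xi_1] = 0$) shows that the normalising constant is $O(1/(\delta^m)^2)$, so that $\ud \qpm / \ud \prob \to 1$ in $\li$. Substituting into \eqref{eq: opt_contr} and rewriting $1/\delta^m = \lambda_0/\delta_0^m = \lambda_1/\delta_1^m$, I obtain
\[
\delta_0^m \log \pare{\ud \prob_0^m / \ud \qpm} = \lambda_1 \xi_0 - \lambda_0 \xi_1 + O(1/\delta^m)
\]
in $\li$, and a similar second-order expansion yields $\upm_i = \delta_i^m \ent(\qpm \such \prob_i^m) = O(1/\delta^m)$ for $i \in \set{0,1}$. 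Together these give $\cpm_0 \to \lambda_1 \xi_0 - \lambda_0 \xi_1$ in $\li$, identifying $\cpi_0$ as claimed.

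For the Nash part, I rewrite \eqref{eq: C_two_agent} using $\delta_0^m \delta_1^m / \delta^m = \lambda_0 \lambda_1 \delta^m$, and introduce
\[
f_m(x) \dfn x + \lambda_0 \lambda_1 \delta^m \bra{\log(1 + x/\delta_1^m) - \log(1 - x/\delta_0^m)}, \quad x \in (-\delta_1^m, \delta_0^m).
\]
Since $f_m'(x) = 1 + \lambda_0 \lambda_1 \delta^m \bra{(\delta_1^m + x)^{-1} + (\delta_0^m - x)^{-1}} \geq 1$ and $f_m(0) = 0$, the inverse $f_m^{-1}$ is $1$-Lipschitz, whence $|\cgm_0| \leq |\zgm_0| + |\cpm_0|$ pointwise. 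The a priori bounds from \S \ref{subsubsec: z in compact set} give $-\upm_0 \leq \zgm_0 \leq \upm_1$, both vanishing by the Arrow-Debreu step, so the sequence $(\cgm_0)_{\mina}$ is uniformly bounded in $\li$ by some constant $M$. This uniform bound then justifies a uniform Taylor expansion: for $|x| \leq M$ and large $m$,
\[
\log(1 + x/\delta_1^m) - \log(1 - x/\delta_0^m) = x \pare{1/\delta_0^m + 1/\delta_1^m} + O\pare{M^2/(\delta^m)^2},
\]
and multiplying by $\lambda_0 \lambda_1 \delta^m$ (using $1/\delta_0^m + 1/\delta_1^m = 1/(\lambda_0 \lambda_1 \delta^m)$) transforms \eqref{eq: C_two_agent} into
\[
2 \cgm_0 = \zgm_0 + \cpm_0 + O(1/\delta^m) \quad \text{in } \li.
\]
Passing $m \to \infty$ with $\zgm_0 \to 0$ and $\cpm_0 \to \cpi_0$ yields $\cgm_0 \to \cpi_0/2$ in $\li$, and hence in $\lz$, as required.

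The main obstacle is securing the uniform $\li$-bound on $(\cgm_0)_{\mina}$: without it, the quadratic remainder in the logarithmic expansion cannot be controlled uniformly, and the identification $2 \cgm_0 \approx \zgm_0 + \cpm_0$ breaks down. Fortunately, the Lipschitz property of $f_m^{-1}$ (a consequence of $f_m' \geq 1$) reduces this to uniform bounds on $\zgm_0$ and $\cpm_0$, both of which emerge as by-products of the Arrow-Debreu analysis, specifically from the vanishing of the utility gains $\upm_i$.
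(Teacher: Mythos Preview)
Your proof is correct and takes a genuinely different, more elementary route than the paper's argument. The Arrow-Debreu part is essentially the same in both (the paper phrases it via total-variation convergence of $\qpm$ to $\prob$, you via explicit Taylor expansions, but the content is identical given $\xi_i \in \li$). The real difference lies in the Nash part, specifically in how one establishes $\zgm_0 \to 0$.

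The paper proceeds indirectly: first a contradiction argument using covariance inequalities under $\prob_1^m$ (Lemma~\ref{lem: z_bdd_both}) shows that $(\zgm_0)_{\mina}$ is bounded; then, along any convergent subsequence, further covariance estimates and Fatou's lemma force $\expec_\prob[\tilde{C}_0^\infty] = 0$, whence the subsequential limit of $\zgm_0$ must be $0$ (Lemma~\ref{lem: correct_conv_both}); a final subsequence argument closes the proof. Your approach is much shorter: you invoke the a~priori bound \eqref{eq:z_bounds} from \S\ref{subsubsec: z in compact set}, which for two agents reads $-\upm_0 \leq \zgm_0 \leq \upm_1$, and combine it with the second-order computation $\upm_i = O(1/\delta^m)$ obtained as a by-product of the Arrow-Debreu step. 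This yields $\zgm_0 \to 0$ directly, with no need for subsequences, covariance tricks, or Fatou. The $1$-Lipschitz observation for $f_m^{-1}$ then gives uniform $\li$-bounds on $\cgm_0$, after which the Taylor expansion is routine. What your approach buys is brevity and transparency, at the cost of relying on the structural bound \eqref{eq:z_bounds} (itself derived from the optimality of Arrow-Debreu sharing and the nonnegativity of $\ug_i$); the paper's argument is more self-contained in that it works directly from the equilibrium equations, and its covariance method is the same tool used elsewhere in the appendix (e.g.\ in the proofs of Theorems~\ref{thm:best_response} and~\ref{thm:limiting_nash_one}).
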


The proof of Theorem \ref{thm:limiting_contract_both} is given in \S \ref{subsec: proof of limiting_contract_both}. Interestingly, the risk neutrality of both agents drives Nash equilibrium to the half of the Arrow-Debreu securities, which is an evidence of the market inefficiency caused by the strategic behaviour of risk-neutral agents. The result of Theorem \ref{thm:limiting_contract_both} is another manifestation of the claim (initially made in \S \ref{subsubsec:marginal_util_based}) that trading volume in Nash equilibrium tends to be lower than Pareto-optimal allocations.



\appendix

\section{Proofs}\label{sec: appe}

\subsection{Proof of Theorem \ref{thm: AD}}
\label{subsec: proof_of_AD}
Suppose that $\pare{\qp, \pare{\cp_i}_{\iii}}$ is an Arrow-Debreu equilibrium. We shall show the necessity of \eqref{eq: qp} and \eqref{eq: opt_contr}. For all $\iii$, note that $\U_i (\cp_i) \geq \U_i (0) = 0$, which implies that $\exp(- \cp_i / \delta_i) \in \Lb^1(\prob_i)$. Fix $X \in \li$ with $\expecqp \bra{X} = 0$ and $\iii$. Since the function $\Real \ni \epsilon \mapsto \U_i(\cp_i + \epsilon X) \in \Real$ has a maximum at $\epsilon = 0$, first order conditions and the dominated convergence theorem, using the fact that $\exp(- \cp_i / \delta_i) \in \Lb^1(\prob_i)$, imply that $\expec_{\prob_i} \bra{\exp \pare{-\cp_i / \delta_i} X } = 0$. The latter equality holds for all $X \in \li$ with $\expec_{\qp} \bra{X} = 0$ and all $\iii$; therefore, $\cp_i \sim \delta_i \log \pare{\ud \prob_i / \ud \qp}$, for all $\iii$. Since $\expec_{\qp} \bra{\cp_i} = 0$, \eqref{eq: opt_contr} follows. Furthermore, the fact that $\sumi \cp_i = 0$ gives $\sumi \delta_i \log \pare{\ud \prob_i / \ud \qp} \sim 0$, from which \eqref{eq: qp} follows.

Assume now that $\pare{\qp, \pare{\cp_i}_{\iii}}$ is given by \eqref{eq: qp} and \eqref{eq: opt_contr}. The fact that $\expec_{\qp} \bra{\cp_i} = 0$ holds for all $\iii$ is immediate by definition. Furthermore, \eqref{eq: qp} and \eqref{eq: opt_contr} give $\sumi \cp_i \sim \sumi \delta_i \log \pare{\ud \prob_i / \ud \qp} \sim \delta \sumi \lambda_i \log \pare{\ud \prob_i / \ud \qp} \sim 0$; together with $\expec_{\qp} \bra{\cp_i} = 0$ for all $\iii$, this implies that $\sumi \cp_i = 0$. The fact that $\cp_i$ is optimal for agent $\iii$ under the valuation measure $\qp$ is argued in Remark \ref{rem:pareto}. We have shown that $\pare{\qp, \pare{\cp_i}_{\iii}}$ given by \eqref{eq: qp} and \eqref{eq: opt_contr} is an Arrow-Debreu equilibrium. The necessity of \eqref{eq: qp} and \eqref{eq: opt_contr} for Arrow-Debreu equilibrium proved in the previous paragraph establishes its uniqueness.

\subsection{Proof of Proposition \ref{prop: best_response_first_ord}} \label{subsec: proof_of_best_resp_prop}

In order to ease the reading, in the course of the proof of Proposition \ref{prop: best_response_first_ord} we shall denote $\qprob^{(\rprob_{-i},\rb_{i})}$ by $\qb$.

\subsubsection{First-order conditions} \label{subsubsec:proof_of_best_resp_prop_foc}

We shall prove here the necessity of the stated conditions for best response. Fix $\iii$ and $\rb_i \in \PP$ such that $\V_i (\rb_i; \rprob_{-i}) = \sup_{\rprob_i \in \PP} \V_i (\rprob_i;\rprob_{-i})$ holds. For $\rprob^0_i \in \PP$ defined via $\log \ud \rprob^0_i \sim (1 / \lambda_{-i}) \sum_{\jii \setminus \set{i}} \lambda_j \log \ud \rprob_j$, the resulting contract for agent $\iii$ would be zero; therefore, $\U_i (\cb_i) = \V_i (\rb_i; \rprob_{-i}) \geq \V_i (\rprob^0_i; \rprob_{-i}) = 0$. In particular we have that $\exp \pare{- \cb_i / \delta_i} \in \Lb^1(\prob_i)$, a fact that will be useful in several places applying the dominated convergence theorem in the sequel.

Fix $X \in \Lb^\infty$. For $\epsilon \in \Real$, define $\rprob_i (\epsilon) \in \PP$ via $\log \pare{\ud \rprob_i (\epsilon) / \ud \rb_i} \sim - \epsilon X / (\lambda_{-i} \delta_i)$. With $\qprob (\epsilon) \equiv \qprob^{(\rprob_{-i},\rprob_i (\epsilon))}$, it follows that $\log \pare{\ud \qprob (\epsilon) / \ud \qb} \sim - \epsilon X / \delta_{-i}$. In accordance with $\cb_i$, define $C_i(\epsilon) = \delta_i \log \pare{\ud \rprob_i (\epsilon) / \ud \qprob (\epsilon)} + \delta_i \ent \pare{\qprob (\epsilon) \such \rprob_i (\epsilon)}$ and then, $C_i(0) = \cb_i$. Noting that
\[
\delta_i \log \pare{\frac{\ud  \rprob_i (\epsilon)}{\ud \qprob (\epsilon)}} = \delta_i \log \pare{\frac{\ud  \rprob_i (\epsilon)}{\ud \rb_i}} + \delta_i \log \pare{\frac{\ud  \rb_i}{\ud \qb}} + \delta_i \log \pare{\frac{\ud  \qb}{\ud \qprob (\epsilon)}} \sim \cb_i - \epsilon  X,
\]
it follows that $C_i(\epsilon) = \cb_i - \epsilon X - \expec_{\qprob (\epsilon)} \bra{\cb_i - \epsilon X}$, where the constant in the equivalence above was cancelled out by definition of $C_i(\epsilon)$. The dominated convergence theorem and simple differentiation, using also the fact that $\expec_{\qb} \bra{\cb_i} = 0$, imply that
\[
C_i'(0) = \frac{\partial C_i(\epsilon)}{ \partial \epsilon} \Big|_{\epsilon = 0} = - X + \expec_{\qb} \bra{\pare{1 + \frac{\cb_i}{\delta_{-i}} }X}.
\]
Since $\V_i(\rprob(\epsilon); \rprob_{-i}) = \U_i (C_i(\epsilon))$ holds for all $\epsilon \in \Real$, another application of the dominated convergence theorem gives
\[
\frac{\partial \V_i(\rprob(\epsilon); \rprob_{-i})}{ \partial \epsilon} \Big|_{\epsilon = 0} = \frac{\expec_{\prob_i} \bra{\exp(- \cb_i / \delta_i)  C_i'(0)}}{\expec_{\prob_i} \bra{\exp(- \cb_i / \delta_i)}}.
\]
Since $\Real \ni \epsilon \mapsto \V_i(\rprob(\epsilon) ; \rprob_{-i})$ is maximised at $\epsilon = 0$, first-order conditions give that
\begin{equation} \label{eq:f-o-1}
0 = \frac{\expec_{\prob_i} \bra{\exp(- \cb_i / \delta_i)  C_i'(0)}}{\expec_{\prob_i} \bra{\exp(- \cb_i / \delta_i)}} = - \frac{\expec_{\prob_i} \bra{\exp(- \cb_i / \delta_i)  X} }{\expec_{\prob_i} \bra{\exp(- \cb_i / \delta_i)}} + \expec_{\qb} \bra{\pare{1 + \frac{\cb_i}{\delta_{-i}} }X}.
\end{equation}
Noting that $\sum_{\jii \setminus \set{i}} \lambda_j \log \pare{ \ud \rprob_j / \ud \prob_i} \sim \log \pare{ \ud \qb / \ud \prob_i} - \lambda_i \log \pare{ \ud \rb_i / \ud \prob_i}$ implies
\[
\lambda_{-i} \log \pare{\frac{\ud \qb}{\ud \prob_i}} - \sum_{\jii \setminus \set{i}} \lambda_j \log \pare{\frac{\ud \rprob_j}{\ud \prob_i}} \sim \lambda_i \log \pare{\frac{\ud \rb_i}{\ud \qb}},
\]
it follows from $\cb_i \sim \delta_i \log \pare{ \ud \rb_i / \ud \qb}$ that
\[
\log \pare{\frac{\ud \qb}{\ud \prob_i}} \sim \frac{\cb_i}{\delta_{-i}} + \frac{1}{\lambda_{-i}} \sum_{\jii \setminus \set{i}} \lambda_j \log \pare{\frac{\ud \rprob_j}{\ud \prob_i}}.
\]
The last equivalence relation allows us to write \eqref{eq:f-o-1} as
\begin{equation} \label{eq:f-o-1.5}
\expec_{\qb} \bra{\pare{-\exp \pare{ \zeb_i - \frac{\cb_i}{\lambda_{-i} \delta_i} - \frac{1}{\lambda_{-i}} \sum_{\jii \setminus \set{i}} \lambda_j \log \pare{\frac{\ud \rprob_j}{\ud \prob_i}}} + 1 + \frac{\cb_i}{\delta_{-i}} }X} = 0,
\end{equation}
where
\begin{equation}  \label{eq:best_resp_z}
\zeb_i = - \log \expec_{\prob_i} \bra{\exp \pare{- \frac{\cb_i}{\delta_i} }} + \log \expec_{\prob_i} \bra{\exp \pare{ \frac{\cb_i}{\delta_{-i}} } \prod_{\jii \setminus \set{i}} \pare{\frac{\ud \rprob_j}{\ud \prob_i}}^{\lambda_j / \lambda_{-i}}}.
\end{equation}
Up to now, $X \in \Lb^\infty$ was fixed, but arbitrary. Ranging $X$ over $\Lb^\infty$ in \eqref{eq:f-o-1.5} gives
\begin{equation} \label{eq:f-o-2}
\exp \pare{ \zeb_i - \frac{\cb_i}{\lambda_{-i} \delta_i} - \frac{1}{\lambda_{-i}} \sum_{\jii \setminus \set{i}} \lambda_j \log \pare{\frac{\ud \rprob_j}{\ud \prob_i}}}  = 1 +
 \frac{\cb_i}{\delta_{-i}}.
\end{equation}
Necessarily, $\cb_i > - \delta_{-i}$ should hold. Taking logarithms and rearranging \eqref{eq:f-o-2} gives \eqref{eq:best_resp_foc}.

\subsubsection{Optimality of candidates for best response}

We now proceed to showing that the necessary conditions for best response are also sufficient. (As mentioned in the discussion following Theorem \ref{thm:best_response}, we have not been able to show whether $\V_i (\cdot; \rprob_{-i})$ is concave; therefore, first order conditions do not immediately imply optimality.) Fixing $\rprob \in \PP$, and assuming the conditions stated, we shall show below that $\V_i (\rprob; \rprob_{-i}) \leq \V_i (\rb_i; \rprob_{-i})$.

Define $X \dfn \lambda_i \log \pare{\ud \rprob / \ud \rb_i}$. Similarly to the arguments in \S \ref{subsubsec:proof_of_best_resp_prop_foc} above, the contract that agent $\iii$ would obtain by responding $\rprob \in \PP$ would be
\[
C_i^X \dfn \cb_i + \delta_{-i} X - \expec_{\qprob^X} \bra{\cb_i + \delta_{-i} X},
\]
where $\qprob^X \in \PP$ is such that $\log(\ud \qprob^X / \ud \qb) \sim X$. It follows that
\begin{align} \label{eq:util_diff}
\V_i (\rprob; \rprob_{-i}) - \V_i (\rb_i; \rprob_{-i}) &= \U_i (C_i^X) - \U_i (\cb_i) \\
\nonumber &= \U_i (\cb_i + \delta_{-i} X ) - \U_i (\cb_i) -
\frac{\expecqb \bra{\exp \pare{X  } (\cb_i + \delta_{-i}
X)}}{\expecqb \bra{\exp \pare{X } } }.
\end{align}

\begin{rem} \label{rem: exp_integrability}
If $\expecqb \bra{\exp \pare{ X_+ } X_+ } = \infty$ was true (equivalently, and since $\exp \pare{ X } X $ is bounded below, if $\expecqb \bra{\exp \pare{  X  } X } = \infty$ was true), one would necessarily have $\expecqb \big[ \exp \pare{ X } \cb_i \big] = - \infty$, which
is impossible in view of $\expecqb \bra{\exp
\pare{  X  }} < \infty$ and $\cb_i > - \delta_{-i}$. It follows
that $\expecqb \bra{\exp
\pare{  X_+ } X_+ } < \infty$.
\end{rem}

Since $\cb_i > - \delta_{-i}$, one may define the $(0,
\infty)$-valued random variable $\db_i \dfn 1 + \cb_i /
\delta_{-i}$. From  \eqref{eq:best_resp_qprob_dens}, it follows that $- \cb_i / \delta_i \sim \log \pare{\ud \qb / \ud \prob_i} + \log \db_i$; since $\expecqb \bra{\db_i} = 1$, it actually holds that $\exp(- \cb_i / \delta_i) = \expec_{\prob_i}\bra{\exp(- \cb_i / \delta_i)}(\ud \qb / \ud \prob_i)\db_i$. Therefore, we obtain
\begin{align*}
\U_i (\cb_i + \delta_{-i} X ) - \U_i (\cb_i) &=- \delta_i \log \expec_{\prob_i} \bra{ \exp \pare{- \frac{ \cb_i+X\delta_{-i}}{\delta_{i}}  } }
+\delta_i \log \expec_{\prob_i} \bra{ \exp \pare{- \frac{ \cb_i}{\delta_{i}}  } }
\\
&= - \delta_i \log \expecqb \bra{ \db_i \exp \pare{- \frac{
\delta_{-i} }{\delta_{i}} X } }.
\end{align*}
Combining the previous, including \eqref{eq:util_diff} and Remark \ref{rem: exp_integrability}, it suffices to show that
\[
- \delta_i \log \expecqb \bra{ \db_i \exp \pare{-
\delta_{-i} X / \delta_i } } \leq \frac{\expecqb \bra{\exp \pare{X  } (\cb_i + \delta_{-i} X)}}{\expecqb \bra{\exp \pare{X  } } }
\]
holds for all $X \in \Lb^0$ with $\expecqb \bra{\exp \pare{
X_+ } X_+ } < \infty$. Since $\db_i > 0$ and $\expecqb \big[ \db_i \big] =
1$, an application of Jensen's inequality under the probability
which has density $\db_i$ with respect to $\qb$ gives $- \delta_i \log \expecqb \bra{ \db_i \exp \pare{- \delta_{-i} X / \delta_i } } \leq \delta_{-i} \expecqb \bra{\db_i X}$. (In particular, $\expecqb \big[ \db_i X_- \big] <
\infty$.) On the other hand, upon defining $\chi \dfn
 \log \expecqb \bra{\exp \pare{ X  }} \in
\Real$, note that
\[
\frac{\expecqb \bra{\exp \pare{X  }  \cb_i }}{\expecqb \bra{\exp \pare{X  } } } = \expecqb \bra{\exp \pare{ X - \chi}
 \cb_i } = \delta_{-i} \expecqb \bra{ \db_i  \pare{\exp \pare{
X - \chi } - 1}},
\]
where in the last equality we have used the facts $\cb_i =
\delta_{-i} (\db_i - 1)$ and $\expecqb \bra{\exp \pare{
X - \chi  } } = 1 = \expecqb \bra{ \db_i }$.
Using the inequality $\exp(x) \geq 1 + x$, valid for all $x \in
\Real$, we obtain that
\[
\frac{\expecqb \bra{\exp \pare{X  }  \cb_i }}{\expecqb \bra{\exp
\pare{X  } } } \geq \delta_{-i} \expecqb \bra{\db_i (X - \chi)} =
\delta_{-i} \expecqb \bra{\db_i X} - \delta_{-i} \log \expecqb
\bra{\exp \pare{ X  }}.
\]
(In particular, $\expecqb \bra{ \db_i X_+ } < \infty$, which
implies that $\expecq \bra{ \db_i X } \in \Real$.) Putting
everything together, it follows that it suffices to show that, for $X \in \Lb^0$ with $\expecqb \bra{\exp \pare{
X_+ } X_+ } < \infty$, $\log \expecqb \bra{\exp \pare{  X  }}
\leq \expecqb \bra{\exp \pare{X  }  X } / \expecqb \bra{\exp \pare{X  } }$ holds. This inequality follows from Jensen's inequality applied to $(0, \infty) \ni z \mapsto \phi(z) = z \log z$, which is a convex function; then, $\phi (\expecqb \bra{\exp(X)}) \leq \expecqb \bra{\phi(\exp(X))}$, which is exactly what was required.

\subsection{Proof of Theorem \ref{thm:best_response}}

\label{subsec:proof_of_best_resp_thm}
Define $R_{-i} \dfn (1 / \lambda_{-i}) \sum_{j \in I \setminus \set{i}} \lambda_j \log \pare{ \ud \rprob_j / \ud \prob_i}$, and note that $\exp(R_{-i}) \in \Lb^1(\prob_i)$ holds in view of H\"older's inequality. For $z_i \in \Real$ implicitly define $C_i (z_i) \in \Lb^0$ as the $(- \delta_{-i}, \infty)$-valued random variable satisfying $(1 / \lambda_{-i})C_i(z_i) / \delta_i + \log \pare{1 + C_i (z_i) / \delta_{-i} } = z_i  - R_{-i}$. Note that the existence and uniqueness of the solution $C_i(z_i)$ for each $z_i \in \Real$ follows from the fact that the function $(- 1, \infty) \ni  y \mapsto (\delta / \delta_i ) y + \log (1 + y)$ is strictly increasing from $- \infty$ to $\infty$. For $z_i \in \Real$, define also  the $(0, \infty)$-valued random variable $D_i (z_i) \dfn 1 + C_i (z_i) / \delta_{-i}$, and note that it is the unique solution to the equation
\begin{equation} \label{eq:D_implicit_response}
\frac{D_i (z_i) - 1}{\lambda_i} +  \log D_i (z_i) = z_i - R_{-i}.
\end{equation}
Observe that $D_i$ (and, as a consequence $C_i$) is increasing as a function of $z_i$. It is also straightforward to check that $\plim_{z_i \to - \infty} D_i (z_i) = 0$ and $\plim_{z_i \to \infty} D_i (z_i) = \infty$.

\begin{lem} \label{lem:best_resp_lem_1}
For all $z_i \in \Real$, it holds that
\begin{equation} \label{eq:best_resp_lem_1}
1 \wedge \exp(z_i - R_{-i}) \leq D_i (z_i) \leq 1 \vee \exp(z_i - R_{-i}).
\end{equation}
In particular, both $D_i (z_i)^{-\lambda_i} \exp(\lambda_{-i} R_{-i})$ and $D_i (z_i)^{\lambda_{-i}} \exp(\lambda_{-i} R_{-i})$ are in $\Lb^1(\prob_i)$.
\end{lem}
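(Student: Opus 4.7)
The plan is to exploit the monotonicity of the deterministic map $g(y) \dfn (y-1)/\lambda_i + \log y$ on $(0, \infty)$, which is strictly increasing from $-\infty$ to $\infty$ and satisfies $g(1) = 0$. By \eqref{eq:D_implicit_response}, $D_i(z_i)$ is the pointwise (in $\omega$) solution to $g(D_i(z_i)) = z_i - R_{-i}$. Set $W \dfn z_i - R_{-i}$. On the event $\set{W \geq 0}$, monotonicity yields $D_i(z_i) \geq 1$, whence $(D_i(z_i) - 1)/\lambda_i \geq 0$ and so $\log D_i(z_i) = W - (D_i(z_i) - 1)/\lambda_i \leq W$, giving $D_i(z_i) \leq \exp(W)$; the lower bound $D_i(z_i) \geq 1 = 1 \wedge \exp(W)$ is immediate. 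On $\set{W \leq 0}$ the symmetric argument gives $D_i(z_i) \leq 1 = 1 \vee \exp(W)$ and $\log D_i(z_i) \geq W$. This establishes \eqref{eq:best_resp_lem_1}.

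For the integrability statements, the key ingredient is that $\exp(\lambda_{-i} R_{-i}) = \prod_{j \in I \setminus \set{i}} (\ud \rprob_j / \ud \prob_i)^{\lambda_j}$ and $\exp(R_{-i}) = \prod_{j \in I \setminus \set{i}} (\ud \rprob_j / \ud \prob_i)^{\lambda_j / \lambda_{-i}}$ both lie in $\Lb^1(\prob_i)$. The former follows from the weighted AM-GM inequality $\prod_{j \neq i} (\ud \rprob_j / \ud \prob_i)^{\lambda_j / \lambda_{-i}} \leq \sum_{j \neq i} (\lambda_j / \lambda_{-i}) \ud \rprob_j / \ud \prob_i$, which has $\prob_i$-expectation equal to $1$, combined with $a^{\lambda_{-i}} \leq 1 + a$ for $a \geq 0$; the latter follows directly from the same AM-GM bound.

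Now use \eqref{eq:best_resp_lem_1} together with the elementary bounds $(1 \vee x)^\alpha \leq 1 + x^\alpha$ and $(1 \wedge x)^{-\alpha} \leq 1 + x^{-\alpha}$ for $x > 0$ and $\alpha \in [0,1]$. Applied to $x = \exp(W)$, these give
\[
D_i(z_i)^{\lambda_{-i}} \leq 1 + \exp(\lambda_{-i} W), \qquad D_i(z_i)^{-\lambda_i} \leq 1 + \exp(-\lambda_i W).
\]
Multiplying by $\exp(\lambda_{-i} R_{-i})$ and recalling $W = z_i - R_{-i}$ yields
\[
D_i(z_i)^{\lambda_{-i}} \exp(\lambda_{-i} R_{-i}) \leq \exp(\lambda_{-i} R_{-i}) + \exp(\lambda_{-i} z_i),
\]
and
\[
D_i(z_i)^{-\lambda_i} \exp(\lambda_{-i} R_{-i}) \leq \exp(\lambda_{-i} R_{-i}) + \exp(-\lambda_i z_i) \exp(R_{-i}),
\]
where the second uses $-\lambda_i W + \lambda_{-i} R_{-i} = -\lambda_i z_i + R_{-i}$. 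Both right-hand sides are in $\Lb^1(\prob_i)$ by the preceding paragraph, concluding the proof. No serious obstacle is anticipated; the only mildly subtle point is choosing the right combination of the elementary subadditivity inequalities so that the residual exponential factor cancels the $\prob_i$-density structure of $R_{-i}$.
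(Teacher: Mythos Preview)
Your proof is correct and follows essentially the same approach as the paper: the inequality \eqref{eq:best_resp_lem_1} is obtained by the same monotonicity argument (your partition on the sign of $W = z_i - R_{-i}$ is equivalent to the paper's partition on $\{D_i(z_i) \lessgtr 1\}$), and the integrability claims are deduced from the same dominating ingredients $\exp(R_{-i}) \in \Lb^1(\prob_i)$ and $\exp(\lambda_{-i} R_{-i}) \in \Lb^1(\prob_i)$. The only cosmetic difference is that for $D_i(z_i)^{-\lambda_i}\exp(\lambda_{-i} R_{-i})$ the paper first shows $D_i(z_i)^{-1} \in \Lb^1(\prob_i)$ and then applies H\"older's inequality, whereas you bound $D_i(z_i)^{-\lambda_i}$ directly via $(1 \wedge x)^{-\alpha} \leq 1 + x^{-\alpha}$; both routes are equally short and elementary.
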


\begin{proof}
Fix $z_i \in \Real$. By  \eqref{eq:D_implicit_response}, on the event $\set{D_i (z_i) < 1}$ it holds that $\log D_i (z_i) \geq z_i - R_{-i}$, while on the event $\set{D_i (z_i) \geq 1}$ it holds that $\log D_i (z_i) \leq z_i - R_{-i}$. These observations verify the inequalities \eqref{eq:best_resp_lem_1}. Since $D_i(z_i)^{-1} \leq 1 \vee \exp(- z_i + R_{-i})$ and $\exp(R_{-i}) \in \Lb^1(\prob_i)$, $D_i(z_i)^{-1}  \in \Lb^1(\prob_i)$ follows; then, H\"older's inequality implies that $D_i (z_i)^{-\lambda_i} \exp(\lambda_{-i} R_{-i}) \in \Lb^1(\prob_i)$. Furthermore, $D_i (z_i)^{\lambda_{-i}} \exp(\lambda_{-i} R_{-i}) \leq \exp(\lambda_{-i} z_i) \vee \exp(\lambda_{-i} R_{-i})$, which implies that $D_i (z_i)^{\lambda_{-i}} \exp(\lambda_{-i} R_{-i}) \in \Lb^1(\prob_i)$ because $\exp(R_{-i}) \in \Lb^1(\prob_i)$.
\end{proof}

In view of Lemma \ref{lem:best_resp_lem_1}, for each $z_i \in \Real$ one may define $\qprob_i (z_i) \in \PP$ via
\[
\log \pare{ \ud \qprob_i(z_i) / \ud \prob_i} \sim - \lambda_i \log D_i(z_i) + \lambda_{-i} R_{-i} \sim D_i(z_i) + R_{-i};
\]
in other words, $\log \pare{ \ud \qprob_i(z_i) / \ud \prob_i} \sim - \lambda_i \log \pare{1 + C_i (z_i) / \delta_{-i} }  + \sum_{j \in I \setminus \set{i}} \lambda_j \log \pare{ \ud \rprob_j / \ud \prob_i}$. Furthermore, for $z_i \in \Real$, Lemma \ref{lem:best_resp_lem_1} and in particular the fact that $D_i (z_i)^{\lambda_{-i}} \exp(\lambda_{-i} R_{-i})\in\Lb^1(\prob_i)$ implies that $D_i(z_i) \pare{\ud \qprob_i(z_i) / \ud \prob_i} \in \Lb^1(\prob_i)$, which in turn implies that $D_i(z_i) \in \Lb^1(\qprob_i(z_i))$.
As mentioned in the discussion following the statement of Proposition \ref{prop: best_response_first_ord}, in order to establish Theorem \ref{thm:best_response}, we need to show that there exists a \emph{unique} $\zeb_i \in
\Real$ with the property that $\expec_{\qprob_i (\zeb_i)} \bra{ D_i (\zeb_i)  }
= 1$. Define the function $f_i : \Real \mapsto (0, \infty]$ via
$f_i (z_i) = \expec_{\qprob_i (z_i)} \bra{ D_i (z_i) }$, for $z_i
\in \Real$. Since $D_i(z_i) \in \Lb^1(\qprob_i (z_i))$ for all $z_i \in \Real$, it follows that $f_i(z_i) < \infty$ for all $z_i \in \Real$. It is straightforward to check
$f_i$ is continuous, in view of the dominated convergence theorem and Lemma \ref{lem:best_resp_lem_1}. 

Let $\oprob_i$ be the probability measure in $\PP$ such that $\log \pare{\ud \oprob_i / \ud \prob_i} \sim R_{-i}$. Then, thanks to the equivalence relation $\log \pare{ \ud \qprob_i(z_i) / \ud \prob_i} \sim D_i(z_i) + R_{-i}$, it holds that
\begin{equation}\label{eq:f}
f_i (z_i) = \frac{\expec_{\oprob_i} \bra{\exp(D_i(z_i)) D_i(z_i)}}{\expec_{\oprob_i} \bra{\exp(D_i(z_i))}},
\end{equation}
for all $z_i \in \Real$. In fact,  since the covariance of $\exp(D_i(z_i))$ and $D_i(z_i)$ is non-negative under any probability, we have that $f_i(z_i)\geq \expec_{\oprob_i} \bra{ D_i (z_i)}$, for all $z_i \in \Real$. Using the monotone convergence theorem and the relationship \eqref{eq:f}, $\lim_{z_i \to \infty} f_i(z_i) = \infty$ follows from $\lim_{z_i \to \infty} D_i(z_i) = \infty$. Furthermore, the monotone convergence theorem and the fact that $\lim_{z_i \to - \infty} D_i (z_i) = 0$ imply the limiting relationships $\lim_{z_i \to - \infty} \expec_{\oprob_i} \bra{\exp(D_i(z_i))} = 1$ and $\lim_{z_i \to - \infty} \expec_{\oprob_i} \bra{\exp(D_i(z_i)) D_i(z_i)} = 0$, from which we obtain $\lim_{z_i \to - \infty} f_i (z_i) = 0$. It follows that there exists \emph{at least} one $\zeb_i \in \Real$ such that $f_i (\zeb_i) = 1$. We also claim that $f_i$ is strictly increasing, which implies that $\zeb_i$ is indeed unique. In preparation, note that differentiating \eqref{eq:D_implicit_response} with respect to $z_i$ and rearranging gives $D_i' (z_i) = q_i (D_i (z_i))$, where $(0, \infty) \ni y \mapsto q_i(y) \dfn  \lambda_i y / \pare{\lambda_i + y}$. In particular, since $q_i$ is an increasing function, the covariance between $D_i'(z_i)$ and $D_i(z_i)$ will be non-negative for all $z_i \in \Real$ under any probability. Straightforward computations using the definition of $\qprob_i (z_i)$ for $z_i \in \Real$ give $f_i' (z_i) = \expec_{\qprob_i (z_i)} \bra{D_i'(z_i) + D_i(z_i) D_i'(z_i)} -  \expec_{\qprob_i (z_i)} \bra{D_i(z_i)} \expec_{\qprob_i (z_i)} \bra{D_i'(z_i)} = \expec_{\qprob_i (z_i)} \bra{D_i'(z_i) } + \cov_{\qprob_i (z_i)} \pare{ D_i(z_i), D_i'(z_i) }$. Since $D_i'(z_i) > 0$ and $\cov_{\qprob_i (z_i)} \pare{ D_i(z_i), D_i'(z_i) } \geq 0$ for all $z_i \in \Real$, Theorem \ref{thm:best_response} has been proved.

\subsection{Proof of Theorem \ref{thm: nash}}\label{subsec: proof_of_nash_thm}

Suppose that $\pare{\qg, (\cg_i)_{\iii}}$ is a Nash equilibrium and let $(\rg_i)_{\iii} \in \PP^I$ be the associated revealed subjective beliefs. We shall first prove the relationship \eqref{eq:equil_Q}. In view of
Proposition \ref{prop: best_response_first_ord}, and since $\cg_i / \delta_i \sim \log \pare{\ud \rg_i / \ud \qg}$ and $\sum_{j \in I \setminus \set{i}} \lambda_j \log \pare{\ud \rg_j / \ud \prob_i} = \log \pare{\ud \qg / \ud \prob_i} - \lambda_i \log \pare{\ud \rg_i / \ud \prob_i}$, \eqref{eq:best_resp_foc} gives
\[
- \lambda_{-i} \log \pare{1 + \frac{\cb_i}{\delta_{-i}} } \sim \frac{\cg_i}{\delta_i} + \sum_{j \in I \setminus \set{i}} \lambda_j \log \pare{\frac{\ud \rg_j}{\ud \prob_i}} \sim  \lambda_{-i} \log \pare{\frac{\ud \rg_i}{\ud \prob_i}},
\]
i.e., $\log \pare{ \ud \rg_i / \ud \prob_i} \sim - \log \pare{1 +  \cg_i / \delta_{-i} }$, for all $\iii$, which is \eqref{eq:reported_nash_agent}. Since $\log \pare{ \ud \prob_i / \ud \qp} \sim \cp_i / \delta_i$ holds for all $\iii$ (see \eqref{eq: opt_contr}), it follows that $\lambda_i \log \pare{ \ud \rg_i / \ud \qp} \sim - \lambda_i \log \pare{1 +  \cb_i / \delta_{-i} } + \cp_i / \delta$ for all $\iii$. In turn, and since $\sum_{\jii} \cp_j = 0$, the latter gives  $\log \pare{ \ud \qg / \ud \qp} \sim  \sum_{\jii} \lambda_j \log \pare{ \ud \rg_j / \ud \qp}$, which, in view of \eqref{eq:reported_nash_agent}, is exactly \eqref{eq:equil_Q}.

To prove \eqref{eq:equil_C}, we add $\lambda_i  \log \pare{1 +  \cg_i / \delta_{-i} } \sim - \lambda_i \log \pare{ \ud \rg_i / \ud \prob_i}$ to \eqref{eq:best_resp_foc}, and obtain
\[
\frac{\cg_i}{\delta_i} + \log \pare{1 + \frac{\cg_i}{\delta_{-i}} } \sim - \sum_{j \in I} \lambda_j \log \pare{\frac{\ud \rg_j}{\ud \prob_i}} \sim \log \pare{\frac{\ud \prob_i}{\ud \qg}} \sim
\frac{\cp_i}{\delta_i} - \log \pare{\frac{\ud \qg}{\ud \qp}}.
\]
The latter equivalence, combined with \eqref{eq:equil_Q}, gives \eqref{eq:equil_C} for appropriate $\zg \equiv \pare{\zg_i}_{\iii} \in \Real^I$. The market clearing conditions $\sum_{\iii} \cg_i = 0 = \sum_{\iii} \cp_i$ show that $\sum_{\iii} \zg_i = 0$, i.e., $\zg \in \DI$. Finally, the fact that $\expecqg \bra{\cg_i} = 0$ holds for all $\iii$ results directly from $\pare{\cg_i}_{\iii} \in \C_{\qg}$.

For the proof of the reverse implication, assume that conditions (N1), (N2) and (N3) hold for $\pare{\qg, (\cg_i)_{\iii}}$, and fix $\iii$. Define the associate revealed subjective beliefs $(\rg_i)_{\iii} \in \PP^I$ via $\log( \ud \rg_i / \ud \qg) \sim \cg_i / \delta_i$. Since $\cp_i / \delta_i \sim \log( \ud \prob_i / \ud \qp)$, a combination of \eqref{eq:equil_C} and \eqref{eq:equil_Q}  gives $\log \pare{1 +  \cg_i / \delta_{-i} } \sim - \log \pare{ \ud \rg_i / \ud \prob_i}$. Using again  \eqref{eq:equil_C} and \eqref{eq:equil_Q},
\begin{align*}
\frac{\cg_i}{\delta_i} + \lambda_{-i} \log \pare{1 +  \frac{\cg_i}{\delta_{-i}}  } &\sim \frac{\cp_i}{\delta_i} - \log \pare{ \frac{\ud \qg}{\ud \qp}} - \lambda_i \log \pare{1 +  \frac{\cg_i}{\delta_{-i}}  } \\
&\sim - \log \pare{ \frac{\ud \qg}{\ud \prob_i}} + \lambda_i \log \pare{ \frac{\ud \rg_i}{\ud \prob_i}} \sim - \sum_{\jii \setminus \set{i}} \lambda_j \log \pare{ \frac{\ud \rg_j}{\ud \prob_i}}.
\end{align*}
It follows that the sufficient conditions for optimality of Proposition \ref{prop: best_response_first_ord} are satisfied for each $\iii$. In order to show that $\pare{\qg, (\cg_i)_{\iii}}$ is a Nash equilibrium, it is left to verify that $(\cg_i)_{\iii}\in\C_{\qg}$. Indeed, summing \eqref{eq:equil_C} with respect to $i$ implies that $\sum_{\iii}\cg_i=0$, since $\zg$ is assumed to belong in $\Delta^I$. This fact, together with the requirement $\cg_i>-\delta_{-i}$, implies the uniform boundedness of $\cg_i$; in particular, $\exp(\cg_i/\delta_i)\in\mathbb{L}^1(\qg)$ for all $\iii$. Taking also (3) into account, we conclude that $(\cg_i)_{\iii}\in\C_{\qg}$, which completes the proof.

\subsection{Proof of Proposition \ref{prop: C_system_sol}} \label{subsec: proof_C_system_sol}

Fix $z \in \DI$. Suppose for the moment that a solution to
\eqref{eq: C_system} indeed exists, and set
\begin{equation} \label{eq:L_z}
L(z) \dfn \sum_{\iii} \lambda_i \log \pare{1 + \frac{C_i(z)}{\delta_{-i}} }.
\end{equation}
Then, with $\eta_i : (0, \infty) \mapsto \Real$ defined
via $\eta_i (x) = \delta_{-i} (x - 1) +  \delta_i \log x$, for all
$x \in (0, \infty)$, \eqref{eq: C_system} implies that $\eta_i
\pare{1 + C_i(z)/ \delta_{-i}} = z_i + \cp_i +
\delta_i L(z)$ holds for all $\iii$. With $\theta_i : \Real
\mapsto (0, \infty)$ denoting the inverse of $\eta_i$ for all
$\iii$, it follows that
\begin{equation} \label{eq:C_from_L}
C_i(z) = \delta_{-i} \bra{\theta_i \pare{z_i + \cp_i +  \delta_i
L(z)} - 1}, \quad \forall \iii.
\end{equation}
Plugging back into the definition of $L(z)$ in \eqref{eq:L_z}, we obtain that
\begin{equation} \label{eq:L_z_equation}
L(z) = \sum_{\iii} \lambda_i \log \theta_i \pare{z_i + \cp_i +
\delta_i L(z)}
\end{equation}
should be satisfied.

We now proceed backwards, by showing that \eqref{eq:L_z_equation} has a unique solution. With $z \in \DI$ fixed, define $w : \Omega \times \Real \mapsto \Real$ via $w(y) = y -  \sum_{\iii} \lambda_i \log \theta_i \pare{z_i + \cp_i +
\delta_i y}$ for $y \in \Real$, where the dependence of $w$ in $\omega \in \Omega$ is suppressed. The derivative of $w$ with respect to the spatial coordinate is
\[
w'(y) = 1 - \sum_{\iii} \frac{\lambda_i}{1 + (\delta_{-i} / \delta_i) \theta_i \pare{z_i + \cp_i +  \delta_i y}}=
 \sum_{\iii} \frac{\lambda_{-i}\theta_i \pare{z_i + \cp_i +  \delta_i y}}{1 + (\delta_{-i} / \delta_i) \theta_i \pare{z_i + \cp_i +  \delta_i y}} > 0,
\]
for all $y \in \Real$. Since $\theta_i (y)$ behaves sub-linearly as $y \to \infty$, $\lim_{y \uparrow \infty} w(y) =
\infty$ follows in a straightforward way. Furthermore, from the definition of $\theta_i$ we obtain that $x < \delta_i \log \theta_i(x)$ holds for all $x \in (- \infty, 0)$ and $\iii$. This implies that on the event $\set{y < - \vee_{\jii} \pare{(z_j + \cp_j) /
\delta_j}}$, one has $w (y) < y - \sum_{\iii} (1/\delta)\pare{ z_i + \cp_i
+  \delta_i y} = 0$, which shows both that the equation $w (L(z))
= 0$ has a unique solution, and that
\begin{equation} \label{eq:L_z_ineq}
- L(z) \leq \bigvee_{\jii} \frac{\cp_j + z_j }{\delta_j} \leq \bigvee_{\jii} \frac{z_j }{\delta_j} + \bigvee_{\jii} \frac{\cp_j}{\delta_j},
\end{equation}
Given the existence of a unique $L (z)$ solving \eqref{eq:L_z_equation}, $C_i (z)$ is specified for all $\iii$ via \eqref{eq:C_from_L}.

Since $\exp( \cp_i / \delta_i) \in \Lb^1(\qp)$ holds for all $\iii$, it is straightforward that $\exp(\bigvee_{\iii} \cp_i / \delta_i) \in \Lb^1(\qp)$. Therefore, \eqref{eq:L_z_ineq} and the equality $\prod_{\jii}  \pare{1 +  C_j(z) / \delta_{-j} }^{- \lambda_j} = \exp \pare{- L(z)}$ imply the validity of \eqref{eq:Qg_z_integr}, which concludes the proof.

\subsection{Proof of Theorem \ref{thm:nash_exist_uniq}} \label{subsec:proof_of_nash_exist_uniq}

We first establish the general existence result, and then tackle uniqueness in the two-agent case.

\subsubsection{Proof of existence of Nash equilibrium}

We use notation from Proposition \ref{prop: C_system_sol} and the discussion following it. For each $z \in \DI$ and $\iii$, define $u_i(z) \dfn \U_i (C_i(z))$. Furthermore, for each $z \in \DI$ define $u(z) \dfn \sumi u_(z)$, as well as
\[
\DI \ni z \mapsto \phi_i (z) = u_i(z) - \up_i + \lambda_i \pare{\up - u(z)}, \quad \iii, \quad z \in \DI.
\]
Note that $\sumi \phi_i (z) = 0$ for all $z \in \DI$, so that $\phi \equiv (\phi_i)_{\iii}$ is $\DI$-valued. The obvious continuity of $\DI \ni z \mapsto L(z)$ from \eqref{eq:L_z_equation} and the domination relation given by \eqref{eq:L_z_ineq} allow the application of the dominated convergence theorem to establish that $\phi : \DI \mapsto \DI$ is a continuous function.

\begin{lem} \label{lem:fixed_point}
$\zg \in \DI$ corresponds to a Nash equilibrium if and only if it is a fixed point of $\phi$.
\end{lem}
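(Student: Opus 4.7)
The plan is to derive a closed-form expression for $\phi(z)-z$ valid for every $z\in\DI$, and then read off the equivalence with the Nash condition $\expecqz\bra{C_i(z)}=0$ isolated by Proposition \ref{prop:nash_root}. The computation parallels the one leading to \eqref{eq:util_loss_nash_indiv} and \eqref{eq:util_loss_nash}, but carried out for an arbitrary $z\in\DI$ rather than only at Nash equilibrium (so condition (N3) cannot be used in the middle of the calculation).

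First, fix $z\in\DI$. Since $\cp_i/\delta_i\sim\log(\ud\prob_i/\ud\qp)$, \eqref{eq:relative_util} gives $u_i(z)-\up_i=-\delta_i\log\expecqp\bra{\exp\pare{-(C_i(z)-\cp_i)/\delta_i}}$. Using \eqref{eq: C_system} to substitute for $C_i(z)-\cp_i$ and then \eqref{eq: Q_z} together with the integrability statement \eqref{eq:Qg_z_integr} of Proposition \ref{prop: C_system_sol} to switch from $\qp$ to $\qz$, one obtains
\[
u_i(z)-\up_i \,=\, z_i \,-\, \delta_i\log\expecqz\!\bra{1+\tfrac{C_i(z)}{\delta_{-i}}} \,-\, \delta_i\log A(z), \quad \iii,
\]
where $A(z)\dfn\expecqp\bra{\prod_{\jii}(1+C_j(z)/\delta_{-j})^{-\lambda_j}}\in(0,\infty)$ is the normalising constant from \eqref{eq: Q_z}. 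Summing over $\iii$ and using $\sum_i z_i=0$ and $\sum_i\lambda_i=1$ yields $u(z)-\up=-\sum_{\jii}\delta_j\log\expecqz\bra{1+C_j(z)/\delta_{-j}}-\delta\log A(z)$. Substituting both expressions into the definition of $\phi_i$ causes $A(z)$ to cancel, leaving the clean identity
\[
\phi_i(z)-z_i \,=\, -\,\delta_i\log\expecqz\!\bra{1+\tfrac{C_i(z)}{\delta_{-i}}} \,+\, \lambda_i\sum_{\jii}\delta_j\log\expecqz\!\bra{1+\tfrac{C_j(z)}{\delta_{-j}}}, \quad \iii.
\]

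From this identity, $\phi(z)=z$ is equivalent to the quantity $\log\expecqz\bra{1+C_i(z)/\delta_{-i}}$ being independent of $\iii$; equivalently, $\expecqz\bra{1+C_i(z)/\delta_{-i}}$ equals a common constant $c\in(0,\infty)$. Using $\sum_{\iii}C_i(z)=0$ and taking $\qz$-expectation gives $0=(c-1)\sum_{\iii}\delta_{-i}=(c-1)n\delta$, forcing $c=1$ and hence $\expecqz\bra{C_i(z)}=0$ for every $\iii$. The converse implication is immediate from the displayed identity. By Proposition \ref{prop:nash_root}, the condition $\expecqz\bra{C_i(z)}=0$ for all $\iii$ is in turn equivalent to $\pare{\qprob(z),(C_i(z))_{\iii}}$ being a Nash equilibrium, completing the proof. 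The main work lies entirely in the book-keeping of step one; once the formula for $\phi_i(z)-z_i$ is in hand, the market-clearing argument that pins down $c=1$ is routine.
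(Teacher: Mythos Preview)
Your proof is correct and follows essentially the same route as the paper: both derive the identity
\[
\phi_i(z)-z_i = -\,\delta_i\log\expecqz\!\bra{1+\tfrac{C_i(z)}{\delta_{-i}}} + \delta_i\sum_{\jii}\lambda_j\log\expecqz\!\bra{1+\tfrac{C_j(z)}{\delta_{-j}}}
\]
(noting $\lambda_i\delta_j=\delta_i\lambda_j$, this is the same as your expression) by computing $u_i(z)-\up_i$ via \eqref{eq:relative_util} and \eqref{eq: C_system}, switching from $\qp$ to $\qz$, and observing that the normalising constant cancels; both then conclude via the market-clearing argument that forces the common value $c=1$.
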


\begin{proof}
In view of the discussion in \S \ref{subsubsec: z in compact set}, if $\zg \in \DI$ corresponds to a Nash equilibrium, then $\zg$ is a fixed point of $\phi$. Conversely, we shall show that any fixed point of $\phi$ corresponds to a Nash equilibrium point. With $L(z)$ as in \eqref{eq:L_z}, and recalling \eqref{eq: C_system}, start by observing that
\[
C_i(z) - z_i - \up_i = \delta_i \log \pare{\frac{\ud \prob_i}{\ud \qp}} + \delta_i L(z) - \delta_i \log \pare{1 + \frac{C_i(z)}{\delta_{-i}}}, \quad \iii, \quad z \in \DI.
\]
From the last equality, it follows that
\[
u_i(z) - z_i - \up_i = - \delta_i \log \expecqp \bra{\exp (- L(z))\pare{1 + \frac{C_i(z)}{\delta_{-i}}}}, \quad \iii, \quad z \in \DI.
\]
Adding up the previous equality over all agents, we obtain
\[
u(z) - \up = - \delta \sum_{\jii} \lambda_j \log \expecqp \bra{\exp (- L(z))\pare{1 + \frac{C_j(z)}{\delta_{-j}}}}, \quad z \in \DI.
\]
Since $\log \pare{\ud \qprob(z) / \ud \qp} = - L(z) + \lambda(z)$ for appropriate $\lambda(z) \in \Real$ and $\phi_i(z) - z_i = u_i(z) - z_i - \up_i - \lambda_i (u(z) - \up)$ holds for all $\iii$ and $z \in \DI$, it follows that
\[
\phi_i(z) - z_i = - \delta_i \log \expec_{\qprob(z)} \bra{1 + \frac{C_i(z)}{\delta_{-i}}} + \delta_i \sum_{\jii} \lambda_j \log \expec_{\qprob(z)} \bra{1 + \frac{C_j(z)}{\delta_{-j}}}, \quad \iii, \quad z \in \DI,
\]
where note that the quantity $\lambda(z)$ cancels out in the above equation. Now, suppose that $\zg \in \DI$ is a fixed point of $\phi$. From the last equality, it follows that the quantities $\expec_{\qprob(\zg)} \bra{1 + C_i(\zg) / \delta_{-i}}$ have the same value, which we call $x(\zg)$, for all $\iii$. In other words, $\expec_{\qprob(\zg)} \bra{C_i(\zg)} = \delta_{-i} \pare{x(\zg) - 1}$ holds for all $\iii$. Since $\sumi C_i(\zg) = 0$, we obtain that $x(\zg) = 1$, which implies that $\expec_{\qprob(\zg)} \bra{C_i(\zg)} = 0$ holds for all $\iii$, in turn implying that $\zg$ corresponds to a Nash equilibrium.
\end{proof}

In view of Lemma \ref{lem:fixed_point}, existence of a Nash equilibrium will follow if we can show that $\phi$ has at least one fixed point. For any $z \in \DI$ and $\iii$, the strong bound $C_i(z) > - \delta_{-i}$ implies $u_i(z) \geq - \delta_{-i}$. Furthermore, $u(z) \leq \up$ holds for all $z \in \DI$, from aggregate optimality of Arrow-Debreu equilibrium. Therefore, it follows that
\[
\phi_i(z) = u_i(z) - \up_i + \lambda_i \pare{\up - u(z)} \geq - \delta_{-i} - \up_i, \quad \iii, \quad z \in \DI.
\]
Define the set $K \dfn \set{z \in \DI \such z_i \geq - \delta_{-i} - \up_i, \ \forall \iii}$, and note that $K$ is a compact and convex subset of $\DI$. Since $\phi$ is continuous and maps $K$ to $K$, Brower's fixed point theorem implies that $\phi$ has at least one fixed point on $K$, which establishes the claim. (In fact, according to the discussion in \S \ref{subsubsec: z in compact set},  any fixed point has to live in the smaller set $\set{z \in \DI \such z_i \geq - \up_i, \ \forall \iii}$.)

\subsubsection{Proof of uniqueness in the two-agent case}

Note that $(z_0, z_1) \in \DI$ if and only if $z_0 = - z_1$. In
the course of the proof, we identify $\Real$ and $\DI$ via $\Real
\ni z \leftrightarrow (z, - z) \in \DI$, i.e., considering only
the ``zero'' coordinate. Correspondingly, for $z \in \Real$ we
write $C_i(z)$ instead of $C_i \pare{(z, -z)}$ for $i \in
\set{0,1}$; similarly, for $z \in \Real$ we write $L(z)$ 
instead of $L(z, -z)$ of \eqref{eq:L_z}. 

In view of Proposition \ref{prop:nash_root}, as well the equality $C_1(z) = - C_0(z)$ for all $z \in \Real$, we need to prove the existence of a
unique $\zg \in \Real$ such that $\expec_{\qprob (\zg)}
\bra{C_0(\zg)} = 0$ holds; since existence was already established, we shall only focus on uniqueness here. Define the continuous function $\Real \ni z
\mapsto f_0 (z) = \expecqz \bra{C_0(z)}$; then, it suffices to show that $f_0$ is \emph{strictly} increasing.

Recall that $C_1 (z) = - C_0(z)$ holds for all $z \in \Real$,
and rewrite \eqref{eq: C_system} as
\begin{equation} \label{eq: C0(z)}
C_0(z) + \frac{\delta_0 \delta_1}{\delta} \log \pare{ \frac{1 +
C_0(z) / \delta_1}{1 - C_0(z) / \delta_0}} = z + \cp_0, \quad
\forall z \in \Real.
\end{equation}
Differentiating with respect to $z \in
\Real$, after some algebra one obtains that
\[
C_0' (z) = \frac{(\delta_1 + C_0(z)) (\delta_0 - C_0(z))
}{\delta_0 \delta_1 + (\delta_1 + C_0(z)) (\delta_0 - C_0(z))},
\quad \forall z \in \Real.
\]
Since $-\delta_1 <C_0(z)<\delta_0$, $C_0' (z)$ is clearly a $(0, \infty)$-valued random variable for
all $z \in \Real$. Furthermore, recalling that $L(z) = \lambda_0
\log (1 + C_0 (z) / \delta_1) + \lambda_1 \log (1 - C_0 (z) /
\delta_0)$ holds for $z \in \Real$, after differentiation and
algebra we obtain
\[
L'(z) = \frac{(\delta_0 - \delta_1) - C_0(z)
}{(\delta_1 + C_0(z)) (\delta_0 - C_0(z))}  C_0'(z) =
\frac{(\delta_0 - \delta_1) - C_0(z)}{\delta_0
\delta_1 + (\delta_1 + C_0(z)) (\delta_0 - C_0(z))}, \quad \forall
z \in \Real.
\]
In other words, with $q(x) = \pare{x + \delta_1 - \delta_0} / \pare{\delta_0 \delta_1 + (\delta_1 + x) (\delta_0 - x)}$ for $x \in (-\delta_1, \delta_0)$, $L'(z) = - q(C_0(z))$ holds for all $z \in \Real$. Since $q'(x) = \pare{2 \delta_0 \delta_1 + (x + \delta_1 -
\delta_0)^2} / \pare{\delta_0 \delta_1 + (\delta_1 + x) (\delta_0 -
x)}^2 > 0$ holds for all $x \in  (-\delta_1, \delta_0)$, the covariance between $C_0(z)$ and $- L'(z)$ is non-negative under any probability, for all $z \in \Real$.
Continuing, if we take into account that $\log \pare{ \ud\qprob(z) / \ud \prob_i} \sim \log \pare{ \ud\qprob(z) / \ud \qp} + \log \pare{ \ud\qp / \ud \prob_i} \sim - L
(z) - \cp_i / \delta_i$ for all $\iii$, it is straightforward to compute that $f_0' (z) = \expec_{\qprob(z)} \bra{C_0' (z)} - \cov_{\qprob(z)} \pare{C_0 (z), L'(z) }$ holds for all $z \in \Real$. Since $C_0' (z)$ is a $(0, \infty)$-valued random variable and $\cov_{\qprob(z)}
\pare{C_0 (z), L'(z) } \leq 0$ for all $z \in \Real$, the
claim is proved.

\subsection{Proof of Proposition  \ref{prop: limit_AD}}\label{subsec: proof_of_prop_limit_AD}

For the remainder of Appendix \ref{sec: appe}, define $\delta^m \dfn \delta^m_0 +
\delta_1$, $\lambda^m_0 \dfn \delta^m_0 / \delta^m$ and
$\lambda_1^m \dfn \delta_1 / \delta^m = 1 - \lambda_0^m$ for all $\mina$. In view of Theorem \ref{thm: AD} and the fact that $\plimm \pare{\ud \qpk / \ud \prob_0} = 1$, we need to focus on the limit of the sequence $\pare{\ent(\qpk \such \prob_1)}_{\mina}$. For each $\mina$, similarly to the equation \eqref{eq:qpm0}, it holds that
\[
\frac{\ud \qpk}{\ud \prob_1} = \expec_{\prob_1} \bra{\pare{\frac{\ud \prob_0}{\ud \prob_1}}^{ \lambda_0^m}}^{-1} \pare{\frac{\ud \prob_0}{\ud \prob_1}}^{\lambda_0^m}.
\]
Therefore, with $Z \dfn \ud \prob_0 / \ud \prob_1$ and $\phi(x) = x \log x$, it holds that
\[
\ent(\qpk \such \prob_1) = \frac{ \expec_{\prob_1} \bra{\phi(Z^{ \lambda_0^m})} - \phi \pare{\expec_{\prob_1} \bra{ Z^{ \lambda_0^m}}}  }{\expec_{\prob_1} \bra{Z^{ \lambda_0^m}}}
\]
Under Assumption \ref{ass:asymptotic}, the fact that $\limm \lambda_0^m = 1$ and the dominated convergence theorem give that $\limm \ent(\qpk \such \prob_1) = \expec_{\prob_1} \bra{\phi(Z) } = \ent(\prob_0 \such \prob_1)$. Since $\cpm_0 = - \cpm_1 = \delta_1 \log(\ud \qpm / \ud \prob_1) - \delta_1 \ent(\qpm \such \prob_1)$, the limiting relationship $\cpi_0 = \delta_1 \log(\ud \prob_0 / \ud \prob_1) - \delta_1 \ent(\prob_0 \such \prob_1)$ readily follows.

Continuing, for $\mina$, note that $\upk_0 = \U_0 (\cpm_0) \leq \expec_{\prob_0} \bra{\cpm_0}$. Note that there exists $c > 0$ such that $\sup_{\mina} \abs{\cpm_0} \leq c \pare{1 + \abs{\log(\ud \prob_0 / \ud \prob_1)}}$. Under Assumption \ref{ass:asymptotic}, $\log(\ud \prob_0 / \ud \prob_1) \in \Lb^1 (\prob_0)$, which means that the dominated convergence theorem can be applied and gives $\limsup_{m \to \infty} \upk_0 \leq \expec_{\prob_0} \bra{\cpi_0} = 0$. Since $0 \leq \upk_0$ holds for all $\mina$, $\lim_{m \to \infty} \upk_0 = 0$ follows.

Moving on to agent 1, the fact that $\upk_1 = \delta_1 \ent(\qpk \such \prob_1)$ holds for all $\mina$ and the previous discussion gives $\lim_{m \to \infty} \upk_1 = \delta_1 \ent(\prob_0 \such \prob_1)$. The proof is complete.

\subsection{Proof of Lemma \ref{lem: zgi}} \label{subsec: proof_of_lem:zgi}

Since the function $(-1, \infty) \ni x \mapsto x + \log \pare{1 + x}$ is strictly increasing and continuous and maps $(-1, \infty)$ to $(- \infty, \infty)$, it follows that $\plim_{z \to - \infty}  C^{\infty}_0(z) = - \delta_1$ and $\plim_{z \to \infty}  C^{\infty}_0(z) = \infty$. Let $D^{\infty}_0(z) \dfn 1 + C^{\infty}_0(z) / \delta_1$, for all $z \in \Real$. On $\set{C^{\infty}_0(z)
> 0}$ it holds that $1/D^{\infty}_0(z) \leq 1$. On $\set{C^{\infty}_0 (z) \leq 0}$, it holds that $\delta_1 \log D^{\infty}_0(z) \geq C^{\infty}_0(z) + \delta_1 \log D^{\infty}_0(z) = z + \cpi_0$, which implies that $1/D^{\infty}_0(z) \leq \exp \pare{ - (z + \cpi_0) / \delta_1} = \exp \pare{ \ent (\prob_0 \such \prob_1) - z / \delta_1} \pare{\ud \prob_1 / \ud \prob_0}$. Therefore, since  $1/D^{\infty}_0(z) \leq 1 \vee \exp \pare{ \ent (\prob_0 \such \prob_1) - z / \delta_1} \pare{\ud \prob_1 / \ud \prob_0}$ holds everywhere, Assumption \ref{ass:asymptotic} implies that the function $\Real \ni z \mapsto \expec_{\prob_0} \bra{ 1 / D^{\infty}_0(z) } = \expec_{\prob_0} \bra{ \pare{1 + C^{\infty}_0(z) / \delta_1}^{-1} }$ is $(0, \infty)$-valued, continuous, strictly decreasing and, in view of the limiting behaviour of $\Real \ni z \mapsto C^{\infty}_0(z)$ and the monotone convergence theorem, maps $\Real$ to $(0, \infty)$. Therefore, the result follows.

\subsection{Proof of Theorem \ref{thm:limiting_nash_one}}\label{subsec: proof_of_limiting_nash_one}

In order to ease the reading throughout the proof,  for all $\mina$ define the $(0, 1 / \lambda^m_1)$-valued random
variable $\dgm_0 \dfn 1 + \cgm_0 / \delta_1$ and the $(0, \infty)$-valued random variable $\dbm \dfn 1 + \cbm / \delta_1$. We use the obvious notation $\qgm \in \PP$ for $\mina$. As in \eqref{eq:best_resp_qprob_dens}, let $\qbm \in \PP$ be defined via
\[
\log \pare{\frac{\ud \qbm}{\ud \prob_0}} \sim - \lambda^m_0 \log \pare{1 + \frac{\cbm}{\delta_1} }  + \lambda^m_1 \log \pare{\frac{\ud \prob_1}{\ud \prob_0}},
\]
and recall that $\expec_{\qbm} \bra{\cbm} = 0$ for all $\mina$, as follows from Proposition \ref{prop: best_response_first_ord}.

For each $\mina$, define $w^m : (0, \infty) \mapsto \Real$ and $\phi^m : (0, 1 / \lambda^m_1) \mapsto
\Real$ via
\begin{align}
 w^m (x) &= \delta_1 \pare{x - 1} + \lambda_0^m \delta_1 \log(x), \quad x \in (0, \infty), \label{eq: wm} \\
 \phi^m (x) &= \delta_1 \pare{ x - 1 } + \lambda_0^m \delta_1 \log \pare{\frac{\lambda_0^m x}{1 -
\lambda_1^m x}}, \quad x \in (0, 1 / \lambda^m_1), \label{eq: phim}
\end{align}
and note by the equivalence $\cpm_0\sim \delta^m_0\log \pare{\ud \prob_0/\ud \qpm}\sim \delta^m_0\lambda^m_1\log \pare{\ud \prob_0/\ud \prob_1}$, and equations \eqref{eq:best_resp_foc} and \eqref{eq: C_two_agent}, that
\begin{equation} \label{eq: approx_equations}
w^m (\dbm) = \zbm + \cpm_0, \quad \phi^m (\dgm_0) = \zgm_0 + \cpm_0, \quad \forall \mina,
\end{equation}
for an appropriate sequence $\pare{\zbm}_{\mina}$ from Theorem \ref{thm:best_response} and the sequence $\pare{\zgm_0}_{\mina}$  from Theorem  \ref{thm: nash}. The next two results show in particular that $(\zgm_0)_{\mina}$ and  $(\zbm)_{\mina}$ are bounded.

\begin{lem} \label{lem: zg_bdd}
The sequence $(\zgm_0)_{\mina}$ is bounded (in $\Real$), and there
exists $a \in \Real$ such that
\begin{equation} \label{eq:D_domination}
\log \pare{1 / \dgm_0} \leq a + \log \pare{\ud \prob_1 / \ud \prob_0}, \quad \text{holds on } \set{\dgm_0 \leq 1}, \quad \forall \mina.
\end{equation}
\end{lem}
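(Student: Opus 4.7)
The first assertion, boundedness of $(\zgm_0)_{\mina}$, is the easy part: the a priori estimate $\zg_i \geq - \up_i$ derived in \S \ref{subsubsec: z in compact set}, combined with the market-clearing identity $\zgm_1 = - \zgm_0$, yields $- \upm_0 \leq \zgm_0 \leq \upm_1$ for every $\mina$, and Proposition \ref{prop: limit_AD} shows that both bounding sequences are convergent (hence bounded).

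For the domination inequality, the plan is to work with the implicit equation $\phi^m(\dgm_0) = \zgm_0 + \cpm_0$ recorded in \eqref{eq: approx_equations}. On the event $\set{\dgm_0 \leq 1}$ one has $\dgm_0 - 1 \leq 0$ and $1 - \lambda_1^m \dgm_0 \geq \lambda_0^m$, and these two facts, inserted into the definition \eqref{eq: phim} of $\phi^m$, produce
\[
\zgm_0 + \cpm_0 \ = \ \phi^m (\dgm_0) \ \leq \ \lambda_0^m \delta_1 \log \dgm_0, \quad \text{on } \set{\dgm_0 \leq 1}.
\]
The next ingredient is the decomposition $\cpm_0 = \lambda_0^m \delta_1 \log (\ud \prob_0 / \ud \prob_1) + c_m$, where $c_m \in \Real$ is the constant that enforces $\expec_{\qpm}\bra{\cpm_0} = 0$; this is immediate from \eqref{eq: qp} and \eqref{eq: opt_contr}. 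Substituting and rearranging,
\[
\log(1 / \dgm_0) \ \leq \ \log (\ud \prob_1 / \ud \prob_0) \ - \ \frac{\zgm_0 + c_m}{\lambda_0^m \delta_1}, \quad \text{on } \set{\dgm_0 \leq 1},
\]
so the conclusion will follow at once upon setting $a \dfn \sup_{\mina} \pare{ - (\zgm_0 + c_m) / (\lambda_0^m \delta_1) }$, \emph{provided} this supremum is finite.

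The remaining and main obstacle is therefore to verify that the constant sequence $(c_m)_{\mina}$ is bounded; combined with the boundedness of $(\zgm_0)_{\mina}$ established in the first part and with $\lambda_0^m \to 1$, this supplies the desired $a$. I would deduce boundedness of $(c_m)$ directly from Proposition \ref{prop: limit_AD}: since $\plimm \cpm_0 = \cpi_0 = \delta_1 \log (\ud \prob_0 / \ud \prob_1) - \delta_1 \ent(\prob_0 \such \prob_1)$ and $\plimm \lambda_0^m \delta_1 \log (\ud \prob_0 / \ud \prob_1) = \delta_1 \log (\ud \prob_0 / \ud \prob_1)$, the difference $(c_m)_{\mina}$ converges in $\Lb^0$; being a sequence of constants, such convergence is in fact convergence in $\Real$, with limit $- \delta_1 \ent(\prob_0 \such \prob_1)$, giving the required uniform bound.
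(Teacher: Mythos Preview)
Your proof is correct. The domination argument in the second part is essentially the paper's: both use that $\phi^m(x) \leq \lambda_0^m \delta_1 \log x$ on $(0,1]$, rewrite $\cpm_0$ as $\lambda_0^m \delta_1 \log(\ud\prob_0/\ud\prob_1)$ plus a constant, and verify that this constant stays bounded. Your device of reading off convergence of $(c_m)$ from the $\Lb^0$-limit of $\cpm_0 - \lambda_0^m \delta_1 \log(\ud\prob_0/\ud\prob_1)$ is a clean shortcut for what the paper does via the explicit identity \eqref{eq:cgm0_estimate}.

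The boundedness argument, however, is genuinely different. The paper does \emph{not} invoke the a priori estimate $\zg_i \geq -\up_i$ from \S\ref{subsubsec: z in compact set}; instead it proves directly that $(\dgm_0)_{\mina}$ and $(1/\dgm_0)_{\mina}$ are $\Lb^0$-bounded, via covariance inequalities of the form $\cov\big(\exp(\dgm_0)/(1-\lambda_1^m\dgm_0),\,\dgm_0\big)\geq 0$ under $\prob_1$ and an analogous one under $\prob_0$, which yield $\expec_{\prob_1}[\dgm_0]\leq 1$ and $\expec_{\prob_0}[1/\dgm_0]\leq 1$. Boundedness of $(\zgm_0)$ is then read off from $\phi^m(\dgm_0)=\zgm_0+\cpm_0$. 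Your route is shorter and more conceptual for the purposes of this lemma. What the paper's longer argument buys is the explicit representation $\log(\ud\qgm/\ud\prob_0)\sim -\delta_1\dgm_0/\delta_0^m - \log\dgm_0$ and the resulting identity $\expec_{\prob_0}\big[(1/\dgm_0)\exp(-\delta_1\dgm_0/\delta_0^m)\big]\big/\expec_{\prob_0}\big[\exp(-\delta_1\dgm_0/\delta_0^m)\big]=1$, which is invoked verbatim in the proof of Lemma~\ref{lem: correct_conv}; if you adopt your shortcut here, you will have to establish that identity separately when it is needed.
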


\begin{proof}
Note that $\expec_{\qgm} \bra{\cgm_0} = 0$ is equivalent to
$\expec_{\qgm} \big[ \dgm_0 \big] = 1$, for all $\mina$.

Applying \eqref{eq:equil_C} for $i=1$, and using \eqref{eq:equil_Q} and the fact that $\cpm_1/ \delta_1 \sim \log \pare{\ud \prob_1 / \ud \qpm}$, it follows that
\[
\frac{\cgm_1}{\delta_1} + \log \pare{1 + \frac{\cgm_1}{\delta^m_0}} \sim \frac{\cpm_1}{\delta_1} + \log \pare{\frac{\ud \qpm}{\ud \qgm}} \sim \log \pare{\frac{\ud \prob_1}{\ud \qgm}}.
\]
Coupling the last equivalence with $\cgm_1 / \delta_1 = - \cgm_0 / \delta_1 = 1 - \dgm_0$ and after some extra algebra, we obtain the further equivalence $\log \pare{ \ud \qgm / \ud \prob_1}  \sim \dgm_0 - \log \big(1 - \lambda^m_1 \dgm_0 \big)$. Therefore,
\[
1 = \expec_{\qgm} \big[ \dgm_0 \big] = \frac{\expec_{\prob_1} \bra{ \exp(\dgm_0) \dgm_0 / \pare{1 - \lambda^m_1 \dgm_0}} }{\expec_{\prob_1} \bra{ \exp(\dgm_0) / \pare{1 - \lambda^m_1 \dgm_0}  }} \geq \expec_{\prob_1} \big[ \dgm_0 \big], \quad \forall \mina,
\]
where the last inequality follows from $\cov_{\prob_1} \big( \exp(\dgm_0)/ \big( 1 - \lambda^m_1 \dgm_0
\big) , \dgm_0 \big) \geq 0$, holding for all $\mina$. Since
that $\expec_{\prob_1} \big[ \dgm_0 \big] \leq 1$ for all
$\mina$, $( \dgm_0)_{\mina }$ is $\lz$-bounded.

Similarly, apply \eqref{eq:equil_C} (for $i= 0$), we obtain
\[
\frac{\cgm_0}{\delta^m_0} + \log \pare{1 + \frac{\cgm_0}{\delta_1}} \sim \frac{\cpm_0}{\delta^m_0} + \log \pare{\frac{\ud \qpm}{\ud \qgm}} \sim \log \pare{\frac{\ud \prob_0}{\ud \qgm}},
\]
or, equivalently, $\log \pare{ \ud \qgm / \ud \prob_0} \sim  - \delta_1 \dgm_0 / \delta^m_0 - \log \dgm_0$. It follows that
\[
1 = \frac{1}{\expec_{\qgm} \bra{\dgm_0}} = \frac{ \expec_{\prob_0} \bra{ (1 / \dgm_0) \exp(- \delta_1 \dgm_0 / \delta^m_0  ) } }{ \expec_{\prob_0} \bra{ \exp(- \delta_1 \dgm_0 / \delta^m_0  ) } } \geq \expec_{\prob_0} \bra{1 / \dgm_0}, \quad \forall \mina,
\]
where the last inequality follows since $\cov_{\prob_0} \big( \exp(- \delta_1 \dgm_0 / \delta^m_0  ) , 1/ \dgm_0 \big) \geq 0$ holds for all $\mina$. The fact
that $\expec_{\prob_0} \big[ 1 / \dgm_0 \big] \leq 1$ holds for
all $\mina$ implies that $(1 / \dgm_0)_{\mina }$ is
$\lz$-bounded.

We then obtain that $\pare{\log \dgm_0}_{\mina}$ is bounded in $\Lb^0$, from which it follows that the family $\set{\phi^m (\dgm_0) \such \mina }$ is
bounded in $\Lb^0$. As $\pare{\cpm_0}_{\mina}$ converges in
$\lz$, \eqref{eq: approx_equations} implies that the family
$\set{\zgm_0 \such \mina}$ is bounded (in $\Real$).

Continuing, since $x \in (0,1)$ implies $\phi^m \pare{x} \leq \lambda_0^m \delta_1 \log(x)$, on the event $\set{\dgm_0 \leq 1}$ it follows that $\log \pare{\dgm_0} \geq (1 / \lambda_0^m \delta_1) \phi^m \pare{\dgm_0} = (1 / \lambda_0^m \delta_1) \pare{\zgm + \cpm_0}$ holds. A combination of the equality $\cpm_0 = \delta_0^m \log \pare{\ud \prob_0 / \ud \qpm} + \upm_0$ and \eqref{eq:qpm0} gives
\begin{equation} \label{eq:cgm0_estimate}
\frac{\cpm_0}{\lambda_0^m \delta_1} = - \log \pare{\frac{\ud \prob_1}{\ud \prob_0}} + \frac{1}{\lambda^m_1} \log \expec_{\prob_0} \bra{\pare{\frac{\ud \prob_1}{\ud \prob_0}}^{ \lambda_1^m}}  + \frac{\upm_0}{\lambda_0^m \delta_1}, \quad \forall \mina.
\end{equation}
The second and third term of the right-hand-side of the above equation converge (to $- \ent(\prob_0 \such \prob_1)$ and zero, respectively) and the sequence $\pare{ \zgm /  \lambda_0^m}_{\mina}$ is bounded in $\Real$; therefore, the existence of $a \in \Real$ such that \eqref{eq:D_domination} holds readily follows.
\end{proof}

\begin{lem} \label{lem: zb_bdd}
The sequence $\pare{\zbm}_{\mina}$ is bounded in
$\mathbb{R}$, and there exists $c\in\Real$ such that
\begin{align}
\log(1 / \dbm) &\leq c + \log \pare{ \ud \prob_1 / \ud \prob_0} \quad \text{holds on } \{\dbm\leq 1\}, \quad \forall \mina. \label{eq:logD} \\
\log \dbm  &\leq c - \log \pare{ \ud \prob_1 / \ud \prob_0} \quad \text{holds on } \{\dbm> 1\}, \quad \forall \mina. \label{eq:logD_bis}
\end{align}
\end{lem}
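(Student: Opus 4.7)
The plan is to follow the structure of the proof of Lemma~\ref{lem: zg_bdd}, with $\dbm \dfn 1 + \cbm/\delta_1$ playing the role of $\dgm_0$ and $w^m$ replacing $\phi^m$. The observation that will make everything fall into place is that the valuation probability $\qbm$ admits a very clean form in terms of $\dbm$ alone.

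\textbf{Step 1 (reducing to workable form).} I would first rewrite $\qbm$ purely in terms of $\dbm$. The best-response first-order condition \eqref{eq:best_resp_foc} applied to agent $0$ with $\rprob_1 = \prob_1$ reads $\cbm/\delta^m_0 + \lambda^m_1 \log \dbm \sim -\lambda^m_1 \log(\ud\prob_1/\ud\prob_0)$. Using $\cbm = \delta_1(\dbm - 1)$ and combining with the definition $\log(\ud\qbm/\ud\prob_0) \sim -\lambda^m_0 \log \dbm + \lambda^m_1 \log(\ud\prob_1/\ud\prob_0)$ to change reference measure, the identity simplifies to $\log(\ud\qbm/\ud\prob_1) \sim \dbm$, i.e.\ $\ud\qbm/\ud\prob_1 \propto \exp(\dbm)$; equivalently, $\ud\qbm/\ud\prob_0 \propto (1/\dbm)\exp(-\delta_1 \dbm/\delta^m_0)$.

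\textbf{Step 2 ($\lz$-boundedness of $\dbm$ and $1/\dbm$).} Since $\expecqb[\cbm] = 0$ is equivalent to $\expec_{\qbm}[\dbm] = 1$, the first expression in Step~1 gives $1 = \expec_{\prob_1}[\exp(\dbm)\dbm]/\expec_{\prob_1}[\exp(\dbm)] \geq \expec_{\prob_1}[\dbm]$, the inequality coming from $\cov_{\prob_1}(\exp(\dbm), \dbm) \geq 0$. The second expression and the analogous covariance inequality between $\exp(-\delta_1 \dbm/\delta^m_0)$ and $1/\dbm$ (both decreasing in $\dbm$) yield $\expec_{\prob_0}[1/\dbm] \leq 1$. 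Hence $(\dbm)_{\mina}$ and $(1/\dbm)_{\mina}$ are both $\lz$-bounded, so $(\log \dbm)_{\mina}$ is $\lz$-bounded. Since $w^m(\dbm) = \delta_1(\dbm - 1) + \lambda^m_0 \delta_1 \log \dbm$ is then $\lz$-bounded, and $\pare{\cpm_0}_{\mina}$ converges in $\lz$ by Proposition~\ref{prop: limit_AD}, the identity $w^m(\dbm) = \zbm + \cpm_0$ forces the deterministic sequence $\pare{\zbm}_{\mina}$ to be bounded in $\Real$.

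\textbf{Step 3 (pointwise inequalities \eqref{eq:logD} and \eqref{eq:logD_bis}).} The two summands in $w^m(\dbm)$ have definite signs depending on whether $\dbm \leq 1$ or $\dbm > 1$. On $\{\dbm \leq 1\}$, where $\delta_1(\dbm - 1) \leq 0$, the identity $w^m(\dbm) = \zbm + \cpm_0$ gives $\log \dbm \geq (\zbm + \cpm_0)/(\lambda^m_0 \delta_1)$; on $\{\dbm > 1\}$, where $\delta_1(\dbm - 1) > 0$, the same identity gives $\log \dbm \leq (\zbm + \cpm_0)/(\lambda^m_0 \delta_1)$. In both cases, substituting the decomposition \eqref{eq:cgm0_estimate} for $\cpm_0/(\lambda^m_0 \delta_1)$---whose deterministic summands converge thanks to Assumption~\ref{ass:asymptotic}, $\limm \lambda^m_1 = 0$, and $\limm \upm_0 = 0$ from Proposition~\ref{prop: limit_AD}---and using the boundedness of $\pare{\zbm}_{\mina}$ from Step~2, yields a single constant $c \in \Real$ satisfying \eqref{eq:logD} and \eqref{eq:logD_bis}.

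The only delicate point is Step~1: one must carefully combine two distinct equivalence-mod-constant relations and keep track of the additive constants in ``$\sim$''. Once the representation $\ud \qbm/\ud \prob_1 \propto \exp(\dbm)$ is in hand, Steps~2 and~3 proceed in close parallel to Lemma~\ref{lem: zg_bdd}, with the simpler structure of $w^m$ (versus $\phi^m$) making the estimates if anything more transparent.
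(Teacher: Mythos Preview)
Your proof is correct. The main difference from the paper lies in how the lower bound on $\pare{\zbm}_{\mina}$ is obtained. The paper only derives $\expec_{\prob_1}\bra{\dbm} \leq 1$ from the representation $\log(\ud\qbm/\ud\prob_1) \sim \dbm$, which yields $\lz$-boundedness of $\pare{\dbm}_{\mina}$ and hence only an \emph{upper} bound on $\pare{\zbm}_{\mina}$; it then establishes the lower bound by contradiction, assuming (along a subsequence) $\zbm \to -\infty$, deducing $\dbm \to 0$, and using a domination estimate on $\{\dbm>1\}$ together with dominated convergence to contradict $\expec_{\qbm}\bra{\dbm}=1$. You instead also derive the second representation $\log(\ud\qbm/\ud\prob_0) \sim -\log\dbm - (\delta_1/\delta^m_0)\dbm$ and apply the comonotone covariance inequality to obtain $\expec_{\prob_0}\bra{1/\dbm} \leq 1$ directly, giving two-sided $\lz$-boundedness of $\pare{\log\dbm}_{\mina}$ in one stroke. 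This mirrors exactly the method the paper uses for Lemma~\ref{lem: zg_bdd} in the Nash case, so your argument is more uniformly parallel to that lemma and avoids the extra contradiction step. Step~3 coincides with the paper's derivation of \eqref{eq:logD} and \eqref{eq:logD_bis}.
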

\begin{proof}
Recall that $\expec_{\qbm} \big[ \dbm \big] = 1$, for
all $\mina$. In view of \eqref{eq:best_resp_foc}, we obtain
\[
\log \dbm \sim - \log \pare{\frac{\ud \prob_1}{\ud \prob_0}} - \frac{\dbm}{\lambda^m_0}, \quad \forall \mina.
\]
Further, due to \eqref{eq:best_resp_qprob_dens}, it holds that $\log \pare{ \ud \qbm / \ud \prob_0} \sim - \lambda^m_0 \log \dbm   + \lambda^m_1 \log \pare{ \ud \prob_1 / \ud \prob_0}$. The last two equivalences give $\log \pare{\ud \qbm / \ud
\prob_1} \sim \dbm$. Therefore, it holds that $1 = \expec_{\qbm} \big[ \dbm \big] = \expec_{\prob_1}
\bra{ \exp\pare{\dbm} \dbm }  / \expec_{\prob_1} \bra{
\exp\pare{\dbm}} \geq \expec_{\prob_1} \big[ \dbm \big]$ for all $\mina$, where the last inequality follows from the fact that
$\cov_{\prob_1} \big( \exp(\dbm), \dbm \big) \geq 0$ holds for
all $\mina$. It follows that $\expec_{\prob_1} \big[ \dbm \big]
\leq 1$ for all $\mina$, which implies that $( \dbm)_{\mina }$ is
$\lz$-bounded. Hence, the family $\set{w^m (\dbm) \such \mina }$
is also bounded from above in $\Lb^0$. Since $\pare{\cpm_0}$ converges in $\lz$, \eqref{eq: approx_equations} implies that $\pare{\zbm}_{\mina}$ is
bounded from above (in $\Real$).

By way of contradiction, suppose that $\pare{\zbm}_{\mina}$ is not bounded
from below. Passing to a subsequence if necessary, we may
assume that $(\zbm)_{\mina}$
is a sequence of negative numbers with $\limm \zbm = -\infty$. Hence, again by \eqref{eq: approx_equations} we
get that $\limm \dbm =0$. Since $\zbm \leq 0$ for all $\mina$, and $x + \log x \leq 1 / \lambda_0^m + (1 / \lambda_0^m \delta_1) w^m(x)$ holds for all $x \geq 1$, it follows that $\dbm + \log \dbm \leq  1 / \lambda_0^m +  \cpm_0 / \lambda_0^m \delta_1$ holds on $\set{\dbm > 1}$, for all $\mina$. Given \eqref{eq:cgm0_estimate}, we conclude the existence of $\kappa \in \Real$ such that $\dbm + \log \dbm \leq \kappa + \log \pare{\ud \prob_0 / \ud \prob_1}$ holds on $\set{\dbm > 1}$, for all $\mina$.
It follows that one can use the dominated convergence theorem on the right-hand-side of the equality $\expec_{\qbm} \big[ \dbm \big] = \expec_{\prob_1}
\bra{ \exp\pare{\dbm} \dbm }  / \expec_{\prob_1} \bra{
\exp\pare{\dbm}}$, valid for all $\mina$, and obtain $\limm \expec_{\qbm} \big[ \dbm \big] = 0$, which contradicts the
fact that $\expec_{\qbm} \big[ \dbm \big]=1$ holds for all $\mina$. We
conclude that $\pare{\zbm}_{\mina}$ is bounded
from below too.

To show \eqref{eq:logD}, note that
$w^m(x) \leq \lambda_0^m \delta_1 \log(x)$ holds when $0 < x \leq 1$; hence, $\log(\dbm) \geq
 (\zbm + \cpm_0)/ (\lambda_0^m  \delta_1)$ holds on $\{\dbm\leq 1\}$ for all $\mina$. From \eqref{eq:cgm0_estimate}, we obtain that
\[
\log(1 / \dbm) - \log \pare{ \ud \prob_1 / \ud \prob_0} \leq - \frac{1}{\lambda^m_1} \log \expec_{\prob_0} \bra{\pare{\frac{\ud \prob_1}{\ud \prob_0}}^{ \lambda_1^m}} - \frac{\zbm  + \upm_0}{\lambda_0^m   \delta_1}, \quad \forall \mina.
\]
The right-hand-side of the above inequality is bounded in $\Real$; therefore, \eqref{eq:logD} follows. Similarly, since $\lambda_0^m \delta_1 \log(x) \leq w^m(x)$ holds when $x > 1$, $\log(\dbm) \leq
 (\zbm + \cpm_0)/ (\lambda_0^m  \delta_1)$ holds on $\{\dbm > 1\}$ for all $\mina$.  Using the same estimates from \eqref{eq:cgm0_estimate} that were used to establish \eqref{eq:logD}, \eqref{eq:logD_bis} follows.
\end{proof}

We now show that we have the expected limiting behaviour through subsequences.

\begin{lem} \label{lem: correct_conv}
Let $\zgi_0$ be given as in Lemma \ref{lem: zgi}. If $(\zgmk_0)_{\kin}$ is any convergent subsequence of $(\zgm_0)_{\mina}$, we have $\limk \zgmk_0 = \zgi_0$ and $\plimk \cgmk_0 = C^{\infty}_0 (\zgi_0)$. Similarly, if $\pare{\zbmk}_{\kin}$ is any convergent
subsequence of $\pare{\zbm}_{\mina}$, then it holds that $\limk
\zbmk = \zgi_0$ and $\plimk \cbmk =
C^{\infty}_0(\zgi_0)$.
\end{lem}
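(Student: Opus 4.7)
The plan is to treat the Nash case in detail and argue that the best-response case is analogous. Fix a subsequence $(\zgmk_0)_\kin$ with $\zgmk_0 \to z^\ast \in \Real$. I will (i) show that $\dgmk_0 \to D^\infty \dfn 1 + C^{\infty}_0(z^\ast)/\delta_1$ almost surely along a further subsequence, (ii) pass the market-clearing identity $\expec_{\qgmk}\bra{\dgmk_0}=1$ to the limit using the uniform envelope from Lemma \ref{lem: zg_bdd}, and (iii) invoke Lemma \ref{lem: zgi} to conclude that $z^\ast = \zgi_0$. Since the original convergent subsequence was arbitrary and every a.s.-convergent sub-subsequence of $(\dgmk_0)_\kin$ must yield the same limit $1 + C^{\infty}_0(\zgi_0)/\delta_1$, the convergence $\cgmk_0 \to C^{\infty}_0(\zgi_0)$ in $\Lb^0$ follows by the standard subsequence characterization of convergence in probability.

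For step (i), I first pass to a further subsequence (still indexed by $k$) along which $\cpmk_0 \to \cpi_0$ almost surely, which is possible by Proposition \ref{prop: limit_AD}. From the relation $\phi^{m_k}(\dgmk_0) = \zgmk_0 + \cpmk_0$ in \eqref{eq: approx_equations} and the fact that $\phi^m$ from \eqref{eq: phim} is a strictly increasing continuous bijection of $(0, 1/\lambda_1^m)$ onto $\Real$, one may write $\dgmk_0 = (\phi^{m_k})^{-1}(\zgmk_0 + \cpmk_0)$. Since $\lambda_0^m \to 1$ and $\lambda_1^m \to 0$, the functions $\phi^m$ converge pointwise to the strictly increasing continuous bijection $\phi^\infty : (0,\infty)\to\Real$ given by $\phi^\infty(x) \dfn \delta_1(x-1) + \delta_1 \log x$; a standard monotonicity argument then yields uniform convergence of the inverses on every compact subset of $\Real$. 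Consequently, $\zgmk_0 + \cpmk_0 \to z^\ast + \cpi_0$ a.s.\ implies $\dgmk_0 \to (\phi^\infty)^{-1}(z^\ast + \cpi_0) = 1 + C^{\infty}_0(z^\ast)/\delta_1$ almost surely.

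For step (ii), I recall from the proof of Lemma \ref{lem: zg_bdd} the equivalence $\log(\ud \qgm/\ud \prob_0) \sim -\delta_1 \dgm_0/\delta_0^m - \log \dgm_0$, which allows the market-clearing identity to be rewritten as
\[
\expec_{\prob_0}\bra{\frac{1}{\dgmk_0}\exp\pare{-\frac{\delta_1 \dgmk_0}{\delta_0^{m_k}}}} = \expec_{\prob_0}\bra{\exp\pare{-\frac{\delta_1 \dgmk_0}{\delta_0^{m_k}}}}.
\]
The right-hand integrand is bounded by $1$ and converges a.s.\ to $1$, so the right-hand side tends to $1$ by dominated convergence. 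On the left, the bound \eqref{eq:D_domination} combined with the trivial inequality $1/\dgmk_0 \leq 1$ on $\{\dgmk_0 > 1\}$ produces a $\prob_0$-integrable envelope for $1/\dgmk_0$ (uniform in $k$), and another application of dominated convergence yields $\expec_{\prob_0}\bigl[(1 + C^{\infty}_0(z^\ast)/\delta_1)^{-1}\bigr] = 1$. By Lemma \ref{lem: zgi}, this forces $z^\ast = \zgi_0$, completing the Nash half of the argument.

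The best-response case follows the same scheme, with $\phi^m$ replaced by $w^m$ from \eqref{eq: wm} --- whose inverses also converge uniformly on compacta to $(\phi^\infty)^{-1}$ --- and the market-clearing condition $\expec_{\qbmk}\bra{\dbmk} = 1$ rewritten using the representation $\log(\ud\qbm/\ud\prob_0) \sim -\lambda_0^m \log\dbm + \lambda_1^m \log(\ud\prob_1/\ud\prob_0)$ established in the proof of Lemma \ref{lem: zb_bdd}. The main technical hurdle is supplying integrable envelopes for the dominated convergence step: this is where both estimates in Lemma \ref{lem: zb_bdd} are used, with \eqref{eq:logD} dominating $(\dbm)^{-\lambda_0^m}(\ud\prob_1/\ud\prob_0)^{\lambda_1^m}$ on $\{\dbm \leq 1\}$ by a constant multiple of $\ud\prob_1/\ud\prob_0$, and \eqref{eq:logD_bis} dominating $(\dbm \cdot \ud\prob_1/\ud\prob_0)^{\lambda_1^m}$ on $\{\dbm > 1\}$ by a constant. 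Passage to the limit then produces the equation $\expec_{\prob_0}\bigl[(1 + C^{\infty}_0(z^{\ast\ast})/\delta_1)^{-1}\bigr] = 1$ for the limit $z^{\ast\ast}$ of any convergent subsequence of $(\zbm)_\mina$, so $z^{\ast\ast} = \zgi_0$ by Lemma \ref{lem: zgi} and $\cbmk \to C^{\infty}_0(\zgi_0)$ in $\Lb^0$ as before.
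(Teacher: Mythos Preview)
Your proposal is correct and follows essentially the same approach as the paper: identify the limit of $\dgmk_0$ (resp.\ $\dbmk$) via the limiting function $\phi^\infty(x)=\delta_1(x-1)+\delta_1\log x$, then pass the zero-price condition to the limit under $\prob_0$ using the envelopes from Lemma~\ref{lem: zg_bdd} (resp.\ Lemma~\ref{lem: zb_bdd}), and finally invoke the uniqueness in Lemma~\ref{lem: zgi}. The only cosmetic difference is that you phrase step~(i) through uniform convergence of the inverses $(\phi^{m_k})^{-1}$ and $(w^{m_k})^{-1}$ and an a.s.-convergent sub-subsequence, whereas the paper argues directly with uniform convergence of $\phi^m$ and $w^m$ on compacta of $(0,\infty)$ together with the $\Lb^0$-boundedness of $(\dgmk_0)$ and $(1/\dgmk_0)$ established in those lemmas; both routes lead to the same conclusion.
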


\begin{proof}
Set $\hzi \dfn \limk \zgmk_0$ and $\tze \dfn \limk \zbmk$. Define the function $(0, \infty) \ni x \mapsto \phi (x) = \delta_1 \pare{x - 1} + \delta_1
\log \pare{x}$, and note that both $\pare{\phi^m}_{\mina}$ of \eqref{eq: phim} and $\pare{w^m}_{\mina}$ of \eqref{eq: wm} converge uniformly to $\phi$ on compact subsets of $(0, \infty)$. This fact, combined with \eqref{eq: approx_equations}, Lemma \ref{lem: zg_bdd} and Lemma \ref{lem: zb_bdd} implies that $(\dgmk_0)_{\kin}$  has a
$(0, \infty)$-valued $\lz$-limit $\hDi = 1 + \hCi / \delta_1$ and $(D_0^{m_k,\mathsf{r}})_{\kin}$ has a $(0, \infty)$-valued $\lz$-limit $\tD^\infty_0 = 1 + \tC^\infty_0 / \delta_1$, satisfying $\phi (\hDi) = \hzi +\cpi_0$ and $\phi (\tD^\infty_0) = \tze + \cpi_0$.

We first tackle the Nash equilibrium case. In the proof of Lemma \ref{lem: zg_bdd}, the equality
\[
\frac{ \expec_{\prob_0} \bra{ (1 / \dgmk_0) \exp(- \delta_1 \dgmk_0 / \delta^{m_k}_0  ) } }{ \expec_{\prob_0} \bra{ \exp(- \delta_1 \dgmk_0 / \delta^{m_k}_0  ) } } = 1, \quad \forall \kin,
\]
was established. Since $\limk \delta^{m_k}_0 = \infty$ and $\exp(- \delta_1 \dgmk_0 / \delta^{m_k}_0  ) \leq 1$ holds for all $\kin$, \eqref{eq:D_domination} allows use of the dominated convergence theorem to obtain $\expec_{\prob_0} \big[ (1 + \hCi/ \delta_1)^{-1} \big] = \expec_{\prob_0} \big[ 1 / \hDi \big]
= 1$. Due to Lemma \ref{lem: zgi}, it then follows that
$\hzi = \zgi_0$, which also implies that $\hCi =
\cgi_0 (\zgi_0)$.

We continue to deal with the best response case. Since \eqref{eq:best_resp_qprob_dens} implies that $\log \pare{ \ud \qbmk / \ud \prob_0} \sim - \lambda^{m_k}_0 \log \dbmk   + \lambda^{m_k}_1 \log \pare{ \ud \prob_1 / \ud \prob_0}$, we obtain
\[
1 = \frac{1}{\expec_{\qbmk} \bra{\dbmk}} = \frac{\expec_{\prob_0}
\bra{ \pare{\dbmk}^{ - \lambda^{m_k}_0} \pare{ \ud \prob_1 / \ud \prob_0}^{\lambda_1^{m_k}}} }{ \expec_{\prob_0}
\bra{ \pare{\dbmk}^{\lambda^{m_k}_1} \pare{ \ud \prob_1 / \ud \prob_0}^{\lambda_1^{m_k}}} }
,\quad \forall \kin.
\]
By the domination relationship in \eqref{eq:logD}, one may apply the dominated
convergence theorem in the numerator above to obtain
\[
\limk \expec_{\prob_0}
\bra{ \pare{\dbmk}^{ - \lambda^{m_k}_0} \pare{ \ud \prob_1 / \ud \prob_0}^{\lambda_1^{m_k}}} = \expec_{\prob_0} \bra{ 1 / \tD^\infty_0 }.
\]
Similarly, the domination relationship in \eqref{eq:logD} allows one to apply the dominated convergence theorem in the denominator above to obtain
\[
\limk \expec_{\prob_0}
\bra{ \pare{\dbmk}^{\lambda^{m_k}_1} \pare{ \ud \prob_1 / \ud \prob_0}^{\lambda_1^{m_k}}} = 1.
\]
Combining everything, we obtain $\expec_{\prob_0} \big[ 1/ \tD^\infty_0 \big] = 1$,
and by Lemma \ref{lem: zgi},  it follows that $\tze = \zgi_0$
and $\tD^\infty_0 = 1 + C_0^{\infty}(\zgi_0)/\delta_0$, which in turn implies that $\tC_0^{\infty}=C_0^{\infty}(\zgi_0)$.
\end{proof}

We are now in position to finish the proof of Theorem \ref{thm:limiting_nash_one}. If $\plimm \cgm_0 = C_0^{\infty}(\zgi_0)$ would fail, there would exist $\epsilon \in (0, 1)$ and a subsequence $(\cgmk_0)_{\kin}$ of $(\cgm)_{\mina}$ such that $\expec \bra{1 \wedge |\cbmk - C_0^{\infty}(\zgi_0)|} > \epsilon$ holds for all $\kin$. Since the sequence
$\pare{\zgmk_0}_{\kin}$ is bounded by Lemma
\ref{lem: zg_bdd}, there exists a further subsequence of
$(\zgmk_0)_{\kin}$ that is convergent. Then, Lemma \ref{lem:
correct_conv} implies that there exists a further subsequence of
$(\cgmk_0)_{\kin}$ that $\lz$-converges to $C_0^{\infty}(\zgi_0)$,
contradicting that $\expec \bra{1 \wedge |\cgmk_0 -
C_0^{\infty}(\zgi_0)|} > \epsilon$ holds for all $\kin$. The proof that
$\plimm \cbm = C_0^{\infty}(\zgi_0)$ is carried out in the exact same way, using Lemma \ref{lem: zb_bdd} in place of Lemma \ref{lem: zg_bdd}.

\subsection{Proof of Proposition  \ref{prop: limiting_gain_game}}

\label{subsec: proof_of_prop_limiting_gain_game}

Recall that $\ugm_0 = - \delta^m_0 \log \expec_{\prob_0} \bra{\exp \pare{- \cgm_0 / \delta_0^m}}$, for all $\mina$. On one hand, $\ugm_0 \leq \expec_{\prob_0} \bra{\cgm_0}$ holds for all $\mina$. From \eqref{eq: C_two_agent} and \eqref{eq: approx_equations}, we get that on the event $\set{\cgm_0 > 0}$, it holds that $\cgm_0 \leq \zgm_0 + \cpm_0$ for all $\mina$. Therefore, by \eqref{eq:cgm0_estimate}, Lemma \ref{lem: zg_bdd} and Proposition \ref{prop: limit_AD}, there exists a constant $k > 0$ such that $\cgm_0 \leq k + \log_+ (\ud \prob_0 / \ud \prob_1)$ holds for all $\mina$. By Assumption \ref{ass:asymptotic} and (the reverse version of) Fatou's lemma, it follows that $\limsup_{m \to \infty} \ugm_0 \leq \expec_{\prob_0} \bra{\cgi_0}$. On the other hand, since $\limm \delta_0^m = \infty$, for all $\kin$ it follows that $\liminf_{m \to \infty} \ugm_0 \geq \liminf_{m \to \infty} \pare{- k \log \expec_{\prob_0}\bra{\exp(- \cgm_0  / k)} } = - k \log \expec_{\prob_0}\bra{\exp( - \cgi_0  / k)}$, where the last equality follows from the dominated convergence
theorem, where the fact that $- \cgm_0 \leq \delta_1$ holds for all $\mina$ is also used. Sending $k \to \infty$, $\liminf_{m \to \infty} \ugm_0 \geq \lim_{k \to \infty} \pare{- k \log \expec_{\prob_0}\bra{\exp(- \cgi_0  / k)}} =  \expec_{\prob_0} \bra{\cgi_0}$ follows. Combining with the reverse inequality, we obtain that
\[
\lim_{m \to \infty} \ugm_0 = \expec_{\prob_0} \bra{\cgi_0} = \expec_{\qgi} \bra{\pare{1 + \frac{\cgi_0}{\delta_1}} \cgi_0} = (1 / \delta_1) \var_{\qgi} \pare{\cgi_0}.
\]
Since $\lim_{m \to \infty} \upm_0 = 0$, $\lim_{m \to \infty} \pare{\ugm_0 - \upm_0} = (1 / \delta_1) \var_{\qgi} \pare{\cgi_0}$ follows.
In order to obtain the limiting loss for agent 1, note first that taking expectation with respect to $\prob_0$ on both sides of $\cgi_0 + \delta_1 \log \pare{1 +  \cgi / \delta_1} = \zgi_0 + \cpi_0$, we obtain $\expec_{\prob_0} \bra{\cgi_0} + \delta_1 \expec_{\prob_0} \bra{\log \pare{\ud \prob_0 / \ud \qgi}} = \zgi_0$, where the fact that $1 + \cgi_0 / \delta_1 = \ud \prob_0 / \ud \qgi$ is used. Recalling $\expec_{\prob_0} \bra{\cgi_0} = (1 / \delta_1) \var_{\qgi} (\cgi_0)$, we obtain the equality $\zgi_0 =  (1 / \delta_1) \var_{\qgi} (\cgi_0) + \delta_1 \ent (\prob_0 \such \qgi)$. In particular, since $\zgi_0 \in \Real_+$, it follows that $\var_{\qgi} (\cgi_0) < \infty$ and $\ent (\prob_0 \such \qgi) < \infty$. Recall that $\limm \zgm_0 = \zgi_0$ was obtained in the proof of Theorem \ref{thm:limiting_nash_one}, and that, from \eqref{eq:loss_decomp},
\[
\zgm_0 = \lambda_0^m \pare{ \upm - \ugm} - \pare{\upm_0 - \ugm_0}  = \lambda_0^m \pare{ \upm_1 - \ugm_1} - \lambda_1^m \pare{\upm_0 - \ugm_0}.
\]
Since $\limm \lambda_0^m = 1$, $\limm \lambda_1^m = 0$ and $\limm \pare{\upm_0 - \ugm_0} = (1 / \delta_1)\var_{\qgi} (\cgi_0) < \infty$,
\[
\limm \pare{ \upm_1 - \ugm_1} = \limm \zgm_0 = \zgi_0 = (1 / \delta_1) \var_{\qgi} (\cgi_0) + \delta_1 \ent (\prob_0 \such \qgi)
\]
follows, which concludes the proof.

\subsection{Proof of Theorem \ref{thm:limiting_contract_both}} \label{subsec: proof of limiting_contract_both}

Under Assumption \ref{ass:asymptotic_both}, $\delta_0^m \log \pare{\ud \prob_0^m / \ud \qpm} \sim \lambda_1 \xi_0 - \lambda_0 \xi_1$ holds for all $\mina$; therefore, $\cpm_0 = \lambda_1 \xi_0 - \lambda_0 \xi_1 - \expec_{\qpm} \bra{\lambda_1 \xi_0 - \lambda_0 \xi_1}$ holds for all $\mina$. Since $\pare{\lambda_1 \xi_0 - \lambda_0 \xi_1} \in \li$ and $\pare{\qpm}_{\mina}$ converges to $\prob$ in total variation norm, one readily obtains $\limm \expec_{\qpm} \bra{\lambda_1 \xi_0 - \lambda_0 \xi_1} = \expec_{\prob} \bra{\lambda_1 \xi_0 - \lambda_0 \xi_1} = 0$; therefore, $\limm \cpm_0 = \lambda_1 \xi_0 - \lambda_0 \xi_1$ follows.

We proceed with the limiting behaviour of the sequence
$\pare{\cgm}_{\mina}$. For each $\mina$, define the function
$(-\delta_1^m, \delta_0^m) \ni y \mapsto \psi^m(y) \dfn y+\lambda_0\delta^m_1 \log
\pare{\pare{1+ y / \delta_1^m} / \pare{1- y / \delta_0^m}}$; it then follows by \eqref{eq: C0(z)} that $\psi^m (\cgm_0) = \zgm_0 + \cpm_0$, for all $\mina$. Note that $\psi^m$ is strictly increasing with $\psi^m(0) = 0$ for
all $\mina$; furthermore, $\pare{\psi^m}_{\mina}$ converges
uniformly in compact subsets of $\Real$ to $\psi^\infty: \Real
\mapsto \Real$ defined via $\psi^\infty(y) = 2 y$ for $y \in
\Real$. 

\begin{lem} \label{lem: z_bdd_both}
The sequence $(\zgm_0)_{\mina}$ is bounded in $\Real$.
\end{lem}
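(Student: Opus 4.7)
The plan is to invoke the a priori location of $\zg$ established in \S\ref{subsubsec: z in compact set} and combine it with the simple observation that, under Assumption \ref{ass:asymptotic_both}, the Arrow-Debreu utilities $(\upm_i)_{\mina}$ remain uniformly bounded. Unlike Lemmas \ref{lem: zg_bdd}--\ref{lem: zb_bdd}, which also had to produce domination bounds needed for the later convergence arguments, here plain boundedness of $\zgm_0$ is enough, so a much shorter argument will suffice.

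I first recall from \S\ref{subsubsec: z in compact set} that any $\zgm$ corresponding to a Nash equilibrium satisfies $\zgm_i \geq -\upm_i$ for $i \in \set{0,1}$; together with $\zgm_0 + \zgm_1 = 0$, this gives the two-sided estimate $-\upm_0 \leq \zgm_0 \leq \upm_1$. Since $\upm_i \geq 0$ by \eqref{eq:opt_util_price_agent_AD} and non-negativity of relative entropy, it suffices to show that $(\upm_0)_{\mina}$ and $(\upm_1)_{\mina}$ are bounded from above.

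For this, I would use that $\cpm_i$ is uniformly bounded in $\li$: from the opening of the present proof, $\cpm_0 = -\cpm_1 = (\lambda_1 \xi_0 - \lambda_0 \xi_1) - \expec_{\qpm}\bra{\lambda_1 \xi_0 - \lambda_0 \xi_1}$, and because $\xi_0, \xi_1 \in \li$, one has $\sup_{\mina} \norm{\cpm_i}_{\li} \leq 2 \norm{\lambda_1 \xi_0 - \lambda_0 \xi_1}_{\li} < \infty$. Applying Jensen's inequality to the concave function $\log$ then yields
\[
\upm_i = - \delta_i^m \log \expec_{\prob_i^m} \bra{\exp(-\cpm_i / \delta_i^m)} \leq \expec_{\prob_i^m}\bra{\cpm_i} \leq \norm{\cpm_i}_{\li},
\]
which is uniformly bounded in $\mina$, as required. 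I do not foresee a substantive obstacle: the Pareto-optimality-based a priori bound from \S\ref{subsubsec: z in compact set} does essentially all the work, and the uniform bound on $\upm_i$ reduces to a one-line Jensen estimate once the uniform $\li$-bound on $\cpm_i$ is at hand.
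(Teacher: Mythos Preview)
Your argument is correct and considerably shorter than the paper's. You exploit the general a~priori estimate \eqref{eq:z_bounds} from \S\ref{subsubsec: z in compact set}, which for two agents reads $-\upm_0 \le \zgm_0 \le \upm_1$, and then bound $\upm_i$ uniformly using that, under Assumption~\ref{ass:asymptotic_both}, $\cpm_0 = (\lambda_1\xi_0-\lambda_0\xi_1) - \expec_{\qpm}\bra{\lambda_1\xi_0-\lambda_0\xi_1}$ lies in $\li$ with norm independent of $m$; the Jensen step $\upm_i \le \expec_{\prob_i^m}\bra{\cpm_i} \le \norm{\cpm_i}_{\li}$ is clean.

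The paper, by contrast, does not invoke \S\ref{subsubsec: z in compact set}. It works directly with the Nash conditions: from \eqref{eq:equil_C}--\eqref{eq:equil_Q} one obtains $\log(\ud\qgm/\ud\prob_1^m) \sim \cgm_0/\delta_1^m - \log(1-\cgm_0/\delta_0^m)$, and a covariance inequality then gives $\expec_{\prob_1^m}\bra{\cgm_0} \le 0$, i.e.\ $\expec_{\prob}\bra{\exp(\xi_1/\delta_1^m)\,\cgm_0} \le 0$. A contradiction argument (if $\zgm_0 \to \infty$ then $\cgm_0 \to \infty$ in probability while being uniformly bounded below, forcing the last expectation to $+\infty$) yields the upper bound, and symmetry the lower one. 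This route is longer, but it has a by-product: the inequality $\expec_{\prob}\bra{\exp(\xi_1/\delta_1^m)\,\cgm_0} \le 0$ is explicitly reused in the proof of the next lemma (Lemma~\ref{lem: correct_conv_both}) to identify the limit. If you adopt your shortcut here, you would need to re-derive that covariance inequality separately when you get there; this is easy, but worth flagging.
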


\begin{proof}
We shall show that $(\zgm_0)_{\mina}$ is bounded above. A
symmetric argument applied to agent 1 shows that $(\zgm_1)_{\mina}$ is bounded above and since $\zgm_0 = - \zgm_1$
holds for all $\mina$, it will follow that $(\zgm_0)_{\mina}$ is
also bounded below.

Recall that $\expec_{\qgm} \bra{\cgm_0} = 0$ holds for all $\mina$. As in the beginning of the proof of Lemma \ref{lem: zg_bdd}, applying \eqref{eq:equil_C} for $i=1$, and using \eqref{eq:equil_Q}, it follows that
$\log \pare{ \ud \qgm / \ud \prob^m_1}  \sim \cgm_0 / \delta_1^m - \log \big(\lambda_0^m - \cgm_0 / \delta^m \big) \sim  \cgm_0 / \delta_1^m  - \log \pare{1- \cgm_0 / \delta_0^m}$.
Therefore,
\[
 0 = \expec_{\qgm} \big[ \cgm_0 \big] = \frac{\expec_{\prob_1^m}
\bra{ \cgm_0 \exp \pare{ \cgm_0/\delta_1^m } /
\pare{1-\cgm_0/\delta_1^m}} }{\expec_{\prob_1^m} \bra{ \exp
\pare{ \cgm_0/\delta_1^m } / \pare{1-\cgm_0/\delta_1^m}}} \geq
\expec_{\prob_1^m} \bra{ \cgm_0 }, \quad \forall \mina,
\]
where the last inequality follows from $\cov_{\prob^m_1} \big( \exp \pare{ \cgm_0/\delta_1^m } /
\pare{1-\cgm_0/\delta_1^m}, \cgm_0 \big) \geq 0$, holding for all
$\mina$. We obtain then that $\expec_{\prob} \bra{\exp \pare{ \xi_1 /
\delta^m_1} \cgm_0} \leq 0$ holds for all $\mina$.

Suppose that $(\zgm_0)_{\mina}$ fails to be bounded above. By passing to a subsequence if necessary, we may assume without loss of generality that $\limm
\zgm_0 = \infty$ and $(\zgm_0)_{\mina}$ is nonnegative. Note
then by \eqref{eq: C0(z)} it follows that $\limm \prob \bra{\cgm_0 > K} = 1$ holds
for all $K \in \Real_+$. Furthermore, $\cgm_0 \geq - (\cpm_0)_-$ holds
for all $\mina$, which, together with the uniform boundedness of $\pare{\cpm_0}_{\mina}$, implies the existence of $c_0 \in (0, \infty)$ such that $\cgm_0 \geq - c_0$ holds for all $\mina$. Since $\xi_1 \in \li$, $\exp \pare{ \xi_1 / \delta^m_1} \cgm_0 \geq - c$ holds for all
$\mina$, for some appropriate $c \in (0, \infty)$. The last domination from below,
combined with $\plimm \exp \pare{ \xi_1 /
\delta^m_1} = 1$ and $\limm \prob \bra{\cgm_0 > K} = 1$ holding for all $K \in \Real_+$
would imply $\limm \expec_{\prob} \bra{\exp \pare{ \xi_1 /
\delta^m_1} \cgm_0} = \infty$, which contradicts the fact that $\expec_{\prob}
\bra{\exp \pare{ \xi_1 /
\delta^m_1} \cgm_0} \leq 0$ holds for all
$\mina$. It follows that $(\zgm_0)_{\mina}$ is bounded above,
which completes the argument.
\end{proof}

\begin{lem} \label{lem: correct_conv_both}
Let $(\zgmk)_{\kin}$ be any convergent subsequence of
$(\zgm)_{\mina}$. Then, it holds that $\limk \zgmk =0$  and $\plimk \cgmk_0 = \cpi_0 /2$.\end{lem}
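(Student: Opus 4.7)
The strategy rests on combining three ingredients: uniform boundedness of the sequence $\pare{\cgmk_0}_{\kin}$, local uniform convergence $\psi^m \to \psi^\infty$, and total-variation convergence of the pricing measures $\qgmk$ to $\prob$. Set $\zgi_0 \dfn \limk \zgmk_0 \in \Real$; the goal is to identify $\plimk \cgmk_0$ and then force $\zgi_0 = 0$ using market-clearing.

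First, I would establish the uniform $\li$-bound on $\pare{\cgmk_0}_{\kin}$. The algebraic observation is that $\psi^m(0) = 0$, and since the ratio $(1+y/\delta_1^m)/(1-y/\delta_0^m)$ exceeds $1$ for $y>0$ and is below $1$ for $y<0$, we have $\psi^m(y) > y > 0$ on $(0,\delta_0^m)$ and $\psi^m(y) < y < 0$ on $(-\delta_1^m, 0)$. Hence $|\psi^m(y)| \geq |y|$ throughout $(-\delta_1^m, \delta_0^m)$. From $\psi^{m_k}(\cgmk_0) = \zgmk_0 + \cpmk_0$, the fact that $\pare{\zgmk_0}_{\kin}$ converges in $\Real$, and the uniform $\li$-bound on $\pare{\cpmk_0}_{\kin}$ already derived at the start of the proof of Theorem \ref{thm:limiting_contract_both} (where $\cpmk_0 = \lambda_1\xi_0 - \lambda_0\xi_1 - \expec_{\qpmk}\bra{\lambda_1\xi_0 - \lambda_0\xi_1}$ with $\xi_i \in \li$), we conclude that $\cgmk_0$ is bounded uniformly in both $k$ and $\omega$.

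Second, I would exploit that $\psi^m \to \psi^\infty$, with $\psi^\infty(y) = 2y$, uniformly on every compact subset of $\Real$ (Taylor-expanding the two logarithms, each error is $O(1/\delta^m)$). Combined with Step 1 and the $\lz$-convergence of the right-hand side to $\zgi_0 + \cpi_0$, this yields
\[
\plimk \cgmk_0 = \frac{\zgi_0 + \cpi_0}{2} \dfn \cgi_0.
\]
Third, I would pass to the limit in the market-clearing identity $\expec_{\qgmk}\bra{\cgmk_0} = 0$. By \eqref{eq:equil_Q}, $\log(\ud\qgmk/\ud\qpmk) \sim -\lambda_0\log(1+\cgmk_0/\delta_1^{m_k}) - \lambda_1\log(1-\cgmk_0/\delta_0^{m_k})$; since $\cgmk_0$ is uniformly bounded and $\delta_i^{m_k}\to\infty$, the exponent tends to $0$ in $\li$, so $\ud\qgmk/\ud\qpmk \to 1$ in $\li$ and therefore $\normtv{\qgmk - \qpmk} \to 0$. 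Together with $\normtv{\qpmk - \prob} \to 0$ (already proved in the opening paragraph of the proof of Theorem \ref{thm:limiting_contract_both}), we obtain $\normtv{\qgmk - \prob} \to 0$. The uniform bound on $\cgmk_0$ plus its $\lz$-convergence then allow us to compute
\[
0 \;=\; \expec_{\qgmk}\bra{\cgmk_0} \;\xrightarrow[k\to\infty]{}\; \expec_{\prob}\bra{\cgi_0} \;=\; \frac{\zgi_0 + \expec_{\prob}\bra{\lambda_1\xi_0 - \lambda_0\xi_1}}{2} \;=\; \frac{\zgi_0}{2},
\]
where the final equality uses the normalisation $\expec_{\prob}\bra{\xi_i} = 0$ from Assumption \ref{ass:asymptotic_both}. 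Thus $\zgi_0 = 0$ and $\cgi_0 = \cpi_0/2$.

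The main obstacle is the first step: without a uniform $\li$-bound on $\cgmk_0$, the implicit equation $\psi^{m_k}(\cgmk_0) = \zgmk_0 + \cpmk_0$ cannot be safely inverted in the limit $\delta_i^{m_k}\to\infty$, because the domain of $\psi^m$ itself expands to $\Real$ and the local uniform convergence of $\psi^m$ is not enough for unbounded arguments. The monotonicity estimate $|\psi^m(y)| \geq |y|$ bypasses this issue at once; the remaining two steps are then standard applications of dominated convergence under the total-variation control.
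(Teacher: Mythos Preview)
Your proof is correct, and in fact cleaner than the paper's own argument. The two routes diverge at the final identification step. You first isolate the two-sided $\li$-bound $|\cgmk_0| \leq |\zgmk_0 + \cpmk_0|$ via the elementary observation $|\psi^m(y)| \geq |y|$; armed with this, you show $\normtv{\qgmk - \prob} \to 0$ and pass directly to the limit in the \emph{equality} $\expec_{\qgmk}\bra{\cgmk_0} = 0$, obtaining $\expec_{\prob}\bra{\cgi_0} = 0$ in one stroke. The paper instead relies on the covariance inequality established in the proof of Lemma~\ref{lem: z_bdd_both} (namely $\expec_{\prob}\bra{\exp(\xi_1/\delta_1^{m_k})\cgmk_0} \leq 0$) together with only a one-sided (lower) domination, then applies Fatou's lemma to get $\expec_{\prob}\big[\tC^\infty_0\big] \leq 0$; the reverse inequality is then obtained by invoking the symmetric argument from the vantage point of agent~$1$. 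Your approach buys self-containment and avoids the detour through Fatou plus symmetry; the paper's approach reuses more of the machinery already set up in Lemma~\ref{lem: z_bdd_both}. Both are valid, but your uniform-bound observation $|\psi^m(y)| \geq |y|$ also makes transparent the step---somewhat tersely justified in the paper---of inverting $\psi^{m_k}$ in the limit to identify $\cgi_0$.
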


\begin{proof}
Let $\tz^\infty \dfn \limk \zgmk$. Due to the fact that
$\pare{\psi^m}_{\mina}$ converges uniformly to $\psi$ on compact
subsets of $\Real$ and Lemma \ref{lem: z_bdd_both}, it follows
that $(\cgmk_0)_{\kin}$ will have a  $\lz$-limit $\tC^\infty_0$,
and this limit will satisfy $2 \tC^\infty_0 = \tz^\infty +
\cpi_0$. Recall the inequality $\expec_{\prob} \bra{\exp \pare{ \xi_1 /
\delta^{m_k}_1} \cgmk_0} \leq 0$ holds for all $\kin$ that was
established in the proof of Lemma \ref{lem: z_bdd_both}.
Furthermore, since $(\zgmk)_{\kin}$ and $\pare{C^{m_k, \ast}_0}_{\kin}$ are convergent, and in particular uniformly bounded from below (the latter sequence of random variables in view of the fact that $\xi_i \in \li$ for $i \in \set{0,1}$), and $\xi_1 \in \li$, we infer the existence of $c \in (0, \infty)$ such that the uniform lower domination $\exp \pare{ \xi_1 /
\delta^{m_k}_1} \cgmk_0 \geq - c$ is valid for all $\kin$. An application of Fatou's lemma gives $\expec_{\prob} \big[ \tC^\infty_0 \big] \leq 0$. The symmetric argument
from the side of agent 1 will give $\expec_{\prob} \big[ \tC^\infty_0
\big] \geq 0$, which implies that $\expec_{\prob} \big[ \tC^\infty_0 \big]
= 0$. Since $2 \tC^\infty_0 = \tz^\infty + \cpi$ and $\expec_{\prob} \bra{\cpi} = 0$, it follows
that $\tz^\infty = 0$. We conclude that $\tC^\infty_0 = \cpi_0/2$.
\end{proof}

The proof of Theorem \ref{thm:limiting_contract_both} can now be
completed exactly as in the case of Theorem \ref{thm:limiting_nash_one}. If $\plimm \cgm_0 = \cpi_0 / 2$ fails, there
exists $\epsilon \in (0, 1)$ and a subsequence $(\cgmk_0)_{\kin}$
of $(\cgm_0)_{\mina}$ such that $\expec \bra{1 \wedge |\cgmk_0 -
\cpi_0 / 2|}  > \epsilon$ holds for all $\kin$. Since the sequence
$(\zgmk)_{\kin}$ is bounded due to Lemma \ref{lem: z_bdd_both},
there exists a further subsequence of $(\zgmk)_{\kin}$ that is
convergent. Then, Lemma \ref{lem: correct_conv_both} implies that
there exists a further subsequence of $(\cgmk_0)_{\kin}$ that
$\lz$-converges to $\cpi_0 / 2$, contradicting the fact that
$\expec \bra{1 \wedge |\cgmk_0 - \cpi_0 / 2|}  > \epsilon$ holds
for all $\kin$.

\bibliographystyle{amsalpha}
\bibliography{references}
\end{document}